\documentclass[12pt,letterpaper,onecolumn,draftcls]{IEEEtran}

\usepackage{cite,graphicx,epsfig,subfigure,amsmath,amssymb,mathrsfs,epsf}

\newtheorem{theorem}{Theorem}
\newtheorem{lemma}{Lemma}
\newtheorem{remark}{Remark}
\newtheorem{definition}{Definition}
\newtheorem{example}{Example}

\def \tr{\operatorname{tr}}

\hyphenation{op-tical net-works semi-conduc-tor}

%%%%%%%%%%%%%%%%%%%%%%%%%%%%%%%%%%%%%%%%%%%%%%%%%%%%%%%%%%%%%%%%%%%%%%%%%%%%%%%%%%%%%%%%%%%%%%%%%%%%%%%%%%
%%%%%%%%%%%%%%%%%%%%%%%%%%%%%%%%%%%%%%%%%%%%%%%%%%%%%%%%%%%%%%%%%%%%%%%%%%%%%%%%%%%%%%%%%%%%%%%%%%%%%%%%%%
\begin{document}

\title{Cooperative Transmission for a Vector Gaussian Parallel Relay Network}
\author{Muryong~Kim,~\IEEEmembership{Member,~IEEE},
        and~Sae-Young~Chung,~\IEEEmembership{Senior Member,~IEEE}\\% <-this % stops a space
%\normalsize{E-mail: muryong@kaist.ac.kr, sychung@ee.kaist.ac.kr}
\thanks{This work was supported in part by the Center for Broadband OFDM Mobile Access (BrOMA) through the ITRC Program of the Ministry of Information and Communication (MIC), Korea, supervised by the Institute of Information Technology Assessment (IITA). The material in this paper was presented in part at the 45th Annual Allerton Conference on Communication, Control, and Computing, Monticello, IL, September 2007.}% <-this % stops a space
\thanks{The authors are with the School of Electrical Engineering and Computer Science, KAIST, Daejeon, Korea (e-mail: muryong@kaist.ac.kr, sychung@ee.kaist.ac.kr).}}

%\markboth{submitted to IEEE TRANSACTIONS ON INFORMATION
%THEORY}{Shell \MakeLowercase{\textit{et al.}}: Cooperative Relaying
%over a Vector Gaussian Network}

\maketitle

\begin{abstract}
In this paper, we consider a parallel relay network where two relays
cooperatively help a source transmit to a destination. We assume
the source and the destination
nodes are equipped with multiple antennas. Three basic schemes
and their achievable rates are studied: Decode-and-Forward (DF),
Amplify-and-Forward (AF), and Compress-and-Forward (CF). For the DF
scheme, the source transmits two private signals, one for each
relay, where dirty paper coding (DPC) is used between the two private
streams, and a common signal for both relays. The relays make
efficient use of the common information to introduce a proper amount
of correlation in the transmission to the destination. We show that
the DF scheme achieves the capacity under certain conditions. We
also show that the CF scheme is asymptotically optimal in the high
relay power limit, regardless of channel ranks. It turns out that
the AF scheme also achieves the asymptotic optimality but only when
the relays-to-destination channel is full rank. The relative
advantages of the three schemes are discussed with numerical
results.
\end{abstract}

\begin{keywords}
Gaussian parallel relay network, diamond channel, cooperative
relaying, common information.
\end{keywords}

\IEEEpeerreviewmaketitle

%\newpage
\section{Introduction}
\PARstart{O}{v}er the recent years, relaying has been considered as
a promising technique that can increase throughput and reliability
and enhance the coverage of wireless networks. There have been a
number of research results showing different aspects of relay
channels. Cover and El Gamal \cite{CoverElGamal:79} derived the
capacity of a class of relay channels with a single relay that helps
transmission from a source to a destination. Kramer et al.
\cite{KramerGastparGupta:2005} generalized the results in various
ways. Laneman et al. \cite{LanemanTseWornell:2004} considered
cooperative diversity aspects of relay channels. In
\cite{FanWangPoorThompson}, the authors consider the multiplexing
aspects of cooperative communications in multi-antenna relay
networks. In
\cite{WangZhangHostMadsen:2005,BolcskeiNabarOymanPaulraj:2006},
multi-antenna relay channels are considered. The use of relays in
broadcast scenarios is considered in
\cite{KramerGastparGupta:2005,LiangVeeravalli}.

In this paper, we study the capacity of a vector Gaussian parallel
relay network where two parallel relays help transmission from a
source node to a destination node. The network model is first
studied by Schein et al. \cite{ScheinThesis}, and a set of capacity
theorems are derived for the discrete memoryless channel and the
scalar Gaussian channel. The authors of \cite{XueSandhu} considered
a similar model but with half-duplex constraint, i.e., relays do not
transmit and receive at the same time. Recently, the authors of
\cite{AvestimehrDiggaviTse} showed their new achievable rate for
general scalar Gaussian relay networks is within a constant number
of bits from the cut-set upper bound on the capacity. A new
achievable rate for the original Schein's network is derived using a
Combined Amplify-and-Decode Forward (CADF) scheme in
\cite{RezaeiGharanKhandani}. Our network model is different from the
earlier ones in that both the broadcast channel (BC) part and the
multiple access channel (MAC) part are vector Gaussian channels as
the source and the destination nodes are equipped with multiple
antennas. As the vector BC is not degraded in general and a simple
superposition coding will not suffice. In the vector MAC,
correlation between relay signals are not always beneficial, rather,
the right amount of correlation may result in a better performance
as will be seen in a later section. Throughout the paper, upper
bounds and achievable rates by different cooperative transmission
strategies: DF, AF and CF are derived.

For the DF scheme, the vector Gaussian parallel relay network can be
seen as a cascade of multiple-input single-output (MISO) BC and
single-input multiple-output (SIMO) MAC channels. We first extend
some earlier results for the discrete memoryless and scalar MACs with
common information to the vector Gaussian case, and investigate the
characteristics of the three-dimensional achievable rate region. We
use a known transmission scheme of \cite{JindalGoldsmith:ISIT04} for
the BC part. Using the BC-MAC schemes, we show that DF achieves the
capacity of the vector Gaussian parallel relay network under certain
conditions. In addition, we address the importance of common
information signaling and correlation control.

We also extend some earlier results for AF and CF to the
vector Gaussian case. We show when DF is strictly
suboptimal and when AF and CF can outperform DF by comparing the
achievable rates and the upper bound. We show that AF is
asymptotically optimal in the high relay power limit if the channel
rank of the MAC part is full. In addition, we also show that the
CF scheme achieves the asymptotic capacity, regardless of the
channel ranks.

The rest of the paper is organized as follows. In Section II, we
introduce the system model. We derive a capacity upper bound for the
vector Gaussian parallel relay network in Section III. Next, we
derive achievable rates by the DF, AF and CF schemes in Sections IV,
V and VI, respectively. Numerical results and comparison of
different schemes are given in Section VII. Conclusions and final
remarks are given in Section VIII.

\section{System Model}
The vector Gaussian parallel relay network consists of four nodes: a
source, a destination, and two relays. We assume no direct link from
the source to the destination. The relays are assumed to be full
duplex, i.e., they transmit and receive at the same time. The
received signals at the relays and at the destination are given by
\begin{eqnarray}
y_{r1} &=& {\bf g}_1{\bf x}_s + n_{r1},\nonumber\\
y_{r2} &=& {\bf g}_2{\bf x}_s + n_{r2},\nonumber\\
{\bf y}_d &=& {\bf h}_1 x_{r1}  + {\bf h}_2 x_{r2}+ {\bf
n}_d,\nonumber
\end{eqnarray}
where
\begin{itemize}
  \item ${\bf x}_s \in { \mathbb{C} }^{M \times {\rm 1}}, x_{r1},
x_{r2}\in \mathbb{C}$ are the transmitted signals from the source
and from relays 1 and 2, respectively. Input covariance matrix and
power constraint at the source node are given by
 ${\bf Q}_s = {\mathbb E}[ {{\bf
x}_s{\bf x}_s^\dagger  }]$ and ${\rm tr}\left( {{\bf Q}_s } \right)
\leq P_s$, respectively. Power constraints at relays are given by
${\mathbb E}[|x_{r1}|^2] \leq P_{r1}$ and ${\mathbb E}[|x_{r2}|^2]
\leq P_{r2}$;
  \item $y_{r1}, y_{r2}  \in \mathbb{C}, {\bf y}_d \in {
\mathbb{C} }^{N \times 1}$ are the received signals at relays 1 and
2 and at the destination, respectively;
  \item ${\bf g}_1, {\bf g}_2 \in { \mathbb{C} }^{{\rm 1} \times
M}$ are the channel gains from the source to relays 1 and 2,
respectively, and ${\bf h}_1, {\bf h}_2 \in { \mathbb{C} }^{N \times
1}$ are the channel gains from relays 1 and 2 to the destination,
respectively;
  \item $n_{r1}, n_{r2} \in \mathbb{C}, {\bf n}_d \in { \mathbb{C}
}^{N \times 1}$ are additive white Gaussian noise (AWGN) at relays 1
and 2 and at the destination, respectively. Noise at the relays and
each antenna of the destination node is circularly symmetric complex
Gaussian, i.e., $n_{r1} ,n_{r2} \sim \mathcal{CN}\left( {0,1}
\right)$ and ${\bf n}_d \sim \mathcal{CN}\left( 0,
\mathbf{I}\right)$ and they are all independent of each other and
from the signals.
\end{itemize}
Throughout the paper, the following notation will be used,
\begin{itemize}
  \item The vector and matrix notations:
\[
\mathbf{y}_r  = \left[ {\begin{array}{*{20}c}
   {y_{r1} }  \\
   {y_{r2} }  \\
 \end{array} } \right]
, \mathbf{x}_r  = \left[ {\begin{array}{*{20}c}
   {x_{r1} }  \\
   {x_{r2} }  \\
 \end{array} } \right]
, \mathbf{n}_r  = \left[ {\begin{array}{*{20}c}
   {n_{r1} }  \\
   {n_{r2} }  \\
 \end{array} } \right]
, \mathbf{G} = \left[ {\begin{array}{*{20}c}
   {\mathbf{g}_1 }  \\
   {\mathbf{g}_2 }  \\
 \end{array} } \right]
, \mathbf{H} = \left[ {\begin{array}{*{20}c}
   {\mathbf{h}_1 } & {\mathbf{h}_2 }  \\
 \end{array} } \right].
\]
  \item ${\bf x}_s^n$, $x_{r1}^n$, $x_{r2}^n$,
$y_{r1}^n$, $y_{r2}^n$, ${\bf y}_d^n$ denote length-$n$ sequences of
${\bf x}_s$, $x_{r1}$, $x_{r2}$, $y_{r1}$, $y_{r2}$, ${\bf y}_d$,
respectively.
  \item $\mathbb{E}_{\bf x}[\cdot]$ denotes expectation with
respect to the distribution of $\bf x$, $\mathbb{E}_x[\cdot|y]$ does
expectation with respect to the distribution of $x$ conditioned on
$y$, and the simpler notation $\mathbb{E}[\cdot]$ without subscript
will be used as long as it is apparent.
  \item $(\cdot)^{opt}$ means the optimal value of a variable.
\end{itemize}

An example for $M=N=2$ case where $\mathbf H$ and $\mathbf G$ are
two-input two-output channels is shown in Fig. \ref{f:systemmodel}.

\begin{definition} A $(2^{nR},n)$ code for vector Gaussian
parallel relay network consists of a message set $W =
\{1,2,...,2^{nR}\}$, an encoding function at the source $f_s: w \in
\{1,2,...,2^{nR}\}\rightarrow \mathbb{C}^{M\times n}$, relaying functions at
two relays, $f_{r1,i}: \mathbb{C}^{i-1} \rightarrow \mathbb{C}$ and $f_{r2,i}:
\mathbb{C}^{i-1} \rightarrow \mathbb{C}$, respectively,
where $1\leq i \leq n$ is the time index\footnote{This means that
the output of a relay at time $i$ depends on all past received symbols.} and a decoding
function at the destination $g_d: \mathbb{C}^{N\times n} \rightarrow \hat w \in
\{1,2,...,2^{nR}\}$.
% why do the following?
%For any $p(w)$, $w \in W$, $f_s$, $f_{r1}$, $f_{r2}$,
%the joint probability density function is given by
%\[p(w,{\bf x}_s^n,x_{r1}^n,x_{r2}^n,y_{r1}^n,y_{r2}^n,{\bf
%y}_d^n)=p(w)\prod\limits_{i=1}^{n}p({\bf
%x}_{si}|w)p(x_{r1i}|y_{r1}^n)p(x_{r2i}|y_{r2}^n)p(y_{r1i},y_{r2i}|{\bf
%x}_{si})p({\bf y}_{di}|x_{r1i},x_{r2i})
%\] where ${\bf x}_{si}$,
%$x_{r1i}$, $x_{r2i}$, $y_{r1i}$, $y_{r2i}$, ${\bf y}_{di}$ are
%$i$-th elements of ${\bf x}_s^n$, $x_{r1}^n$, $x_{r2}^n$,
%$y_{r1}^n$, $y_{r2}^n$, ${\bf y}_d^n$, respectively.
If the message
$w \in W$ is sent, the conditional probability of error is defined as $\lambda
(w) = \Pr\{g_d({\bf y}_d^n)\neq w|w \textrm{ sent}\}$. The average
probability of error is defined as
$P_e^{(n)}=\frac{1}{2^{nR}}\sum\limits_w \lambda (w)$.
\end{definition}

\begin{definition} If there exists a sequence of $(2^{nR},n)$
codes with $P_e^{(n)}\rightarrow 0$, the rate $R$ is said to be
achievable.
\end{definition}

\begin{definition} The capacity $C$ of a vector Gaussian parallel
relay network is the supremum of the set of achievable rates.
\end{definition}

\section{Capacity Upper Bound}
In this section, we derive the
cut-set bound~\cite{CoverThomas} applied
to the vector Gaussian parallel relay network. From the four cuts shown in Fig.
\ref{f:cut-set}, we get the following cut-set bound:
\[
\max\limits_{p(\mathbf{x}_s,x_{r1},x_{r2})}\min \left[
I(\mathbf{x}_s ;y_{r1} ,y_{r2},\mathbf{y}_d|x_{r1},x_{r2} ), I(x_{r1} ,x_{r2} ;\mathbf{y}_d ),
I(\mathbf{x}_s,x_{r1} ;\mathbf{y}_d,y_{r2}|x_{r2} ),
I(\mathbf{x}_s,x_{r2} ;\mathbf{y}_d,y_{r1}|x_{r1} )\right].
\]
Using the Markovity of the channel, i.e.,
$\mathbf{x}_s\leftrightarrow (x_{r1},x_{r2})\leftrightarrow \mathbf{y}_d$ and $(x_{r1},x_{r2},\mathbf{y}_d) \leftrightarrow \mathbf{x}_s \leftrightarrow (y_{r1},y_{r2})$,
it is easy to get the following loosened cut-set bound:
\[
\max\limits_{p(\mathbf{x}_s,x_{r1},x_{r2})}\min \left[
I(\mathbf{x}_s ;y_{r1} ,y_{r2} ), I(x_{r1} ,x_{r2} ;\mathbf{y}_d ),
I(\mathbf{x}_s ;y_{r2} ) + I(x_{r1} ;\mathbf{y}_d |x_{r2} ),
I(\mathbf{x}_s ;y_{r1} ) + I(x_{r2} ;\mathbf{y}_d |x_{r1} )\right].
\]
For example, for the first term we get
\begin{align*}
I(\mathbf{x}_s ;y_{r1} ,y_{r2},\mathbf{y}_d|x_{r1},x_{r2} ) &=
I(\mathbf{x}_s ;y_{r1} ,y_{r2}|x_{r1},x_{r2} )+I(\mathbf{x}_s ;\mathbf{y}_d|x_{r1},x_{r2},y_{r1} ,y_{r2})\\
&\leq I(\mathbf{x}_s ;y_{r1} ,y_{r2})+I(\mathbf{x}_s ;\mathbf{y}_d|x_{r1},x_{r2},y_{r1} ,y_{r2})\\
&=I(\mathbf{x}_s ;y_{r1} ,y_{r2})
\end{align*}
where the inequality and the last equality follows from the Markovity $(x_{r1},x_{r2},\mathbf{y}_d)\leftrightarrow \mathbf{x}_s \leftrightarrow (y_{r1},y_{r2})$.
For the third term we get
\begin{align*}
I(\mathbf{x}_s,x_{r1} ;\mathbf{y}_d,y_{r2}|x_{r2} ) &=
I(\mathbf{x}_s,x_{r1};\mathbf{y}_d|x_{r2})+I(\mathbf{x}_s,x_{r1} ;y_{r2}|x_{r2}, \mathbf{y}_d) \\
&=I(x_{r1};\mathbf{y}_d|x_{r2})+I(\mathbf{x}_s;\mathbf{y}_d|x_{r1},x_{r2})+I(\mathbf{x}_s,x_{r1} ;y_{r2}|x_{r2},\mathbf{y}_d ) \\
&=I(x_{r1};\mathbf{y}_d|x_{r2})+I(\mathbf{x}_s,x_{r1} ;y_{r2}|x_{r2},\mathbf{y}_d ) \\
&=I(x_{r1};\mathbf{y}_d|x_{r2})+I(\mathbf{x}_s ;y_{r2}|x_{r2},\mathbf{y}_d )+I(x_{r1} ;y_{r2}|\mathbf{x}_s,x_{r2},\mathbf{y}_d ) \\
&=I(x_{r1};\mathbf{y}_d|x_{r2})+I(\mathbf{x}_s ;y_{r2}|x_{r2},\mathbf{y}_d )\\
&\leq I(x_{r1};\mathbf{y}_d|x_{r2})+I(\mathbf{x}_s ;y_{r2})
\end{align*}
where the third equality follows from the Markovity $\mathbf{x}_s\leftrightarrow (x_{r1},x_{r2})\leftrightarrow \mathbf{y}_d$ and the fifth equality and the inequality follow from the Markovity $(x_{r1},x_{r2},\mathbf{y}_d) \leftrightarrow \mathbf{x}_s \leftrightarrow (y_{r1},y_{r2})$.

Note that this cut-set bound is optimized over the joint input
distribution $p(\mathbf{x}_s,x_{r1},x_{r2})$.
Using this, we get the following capacity upper bound for the vector Gaussian parallel relay network.

\begin{theorem}
The capacity of the vector Gaussian parallel relay network
is upper bounded by the minimum of the three expressions given by
\begin{eqnarray}
C \leq R_{sum,BC}^{upper} = \mathop {\max }\limits_{\mathbf{Q}_s,
\tr (\mathbf{Q}_s) \leq P_s} \log \det \left( {\mathbf{I}  +
\mathbf{GQ}_s \mathbf{G}^\dag } \right),
\label{eqn:broadcast_cut-set}
\end{eqnarray}
\begin{eqnarray}
C \leq \mathop {\max }\limits_{|\rho| \in [0,1]} \min \left[ {\log
\det \left( {\mathbf{I}  + \mathbf{HQ}_r \mathbf{H}^\dag  } \right),
\log [( {1 + P_s \left\| {\mathbf{g}_2 } \right\|^2 })( {1 +
(1-|\rho|^2) P_{r1} \left\| {\mathbf{h}_1 } \right\|^2  } )]}
\right], \label{eqn:correlation_bound1}
\end{eqnarray}
\begin{eqnarray}
C \leq \mathop {\max }\limits_{|\rho| \in [0,1]} \min \left[ {\log
\det \left( {\mathbf{I}  + \mathbf{HQ}_r \mathbf{H}^\dag  } \right),
\log[( {1 + P_s \left\| {\mathbf{g}_1 } \right\|^2 })( 1 +
(1-|\rho|^2) P_{r2} \left\| {\mathbf{h}_2 } \right\|^2  )]} \right].
\label{eqn:correlation_bound2}
\end{eqnarray}
where $\mathbf{Q}_r = \mathbb{E}[\mathbf{x}_r \mathbf{x}_r^\dag] =
\left[ {\begin{array}{*{20}c}
   {P_{r1} } & { |\rho| e^{j\angle \mathbf{h}_1 ^\dag
\mathbf{h}_2} {\sqrt {P_{r1} P_{r2} } } }  \\
   {|\rho| e^{-j\angle \mathbf{h}_1 ^\dag
\mathbf{h}_2} {\sqrt {P_{r1} P_{r2} } }}& {P_{r2} }  \\
\end{array} } \right]$.
\end{theorem}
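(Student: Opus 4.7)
The plan is to evaluate each side of the loosened cut-set bound already derived in the excerpt under Gaussian inputs. Write $A_1 = I(\mathbf{x}_s; y_{r1}, y_{r2})$, $A_2 = I(x_{r1}, x_{r2}; \mathbf{y}_d)$, $A_3 = I(\mathbf{x}_s; y_{r2}) + I(x_{r1}; \mathbf{y}_d | x_{r2})$, and $A_4 = I(\mathbf{x}_s; y_{r1}) + I(x_{r2}; \mathbf{y}_d | x_{r1})$, so that $C \leq \max_p \min[A_1, A_2, A_3, A_4]$. Since $\min[A_1, A_2, A_3, A_4]$ is no larger than $A_1$, than $\min[A_2, A_3]$, or than $\min[A_2, A_4]$, the three claims decouple and I would prove each by a separate Gaussian calculation.

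For (\ref{eqn:broadcast_cut-set}) I would view $(y_{r1}, y_{r2})^T = \mathbf{G}\mathbf{x}_s + \mathbf{n}_r$ as a vector Gaussian point-to-point link with noise covariance $\mathbf{I}$; the standard maximum-entropy argument yields $A_1 \leq \log\det(\mathbf{I} + \mathbf{G}\mathbf{Q}_s\mathbf{G}^\dag)$ for any source covariance $\mathbf{Q}_s = \mathbb{E}[\mathbf{x}_s\mathbf{x}_s^\dag]$, which is then maximized subject to $\tr(\mathbf{Q}_s) \leq P_s$.

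For (\ref{eqn:correlation_bound1}), let $\tilde{\mathbf{Q}}_r = \mathbb{E}[\mathbf{x}_r \mathbf{x}_r^\dag]$ be the relay covariance of the underlying joint distribution, parametrized by $\tilde P_{ri} \leq P_{ri}$ and off-diagonal $\tilde c$ with magnitude $|\tilde\rho| = |\tilde c|/\sqrt{\tilde P_{r1}\tilde P_{r2}} \in [0,1]$. Gaussian maximum entropy gives $A_2 \leq \log\det(\mathbf{I} + \mathbf{H}\tilde{\mathbf{Q}}_r\mathbf{H}^\dag)$, and I would expand this $2 \times 2$ determinant via $\det(\mathbf{I} + \mathbf{H}\tilde{\mathbf{Q}}_r\mathbf{H}^\dag) = \det(\mathbf{I} + \tilde{\mathbf{Q}}_r\mathbf{H}^\dag\mathbf{H})$ into
\begin{equation*}
1 + \tilde P_{r1}\|\mathbf{h}_1\|^2 + \tilde P_{r2}\|\mathbf{h}_2\|^2 + 2\operatorname{Re}(\tilde c\,\mathbf{h}_2^\dag\mathbf{h}_1) + (\tilde P_{r1}\tilde P_{r2} - |\tilde c|^2)(\|\mathbf{h}_1\|^2\|\mathbf{h}_2\|^2 - |\mathbf{h}_1^\dag\mathbf{h}_2|^2).
\end{equation*}
By Cauchy--Schwarz the last bracketed factor is nonnegative, so at fixed $|\tilde\rho|$ the expression is monotone nondecreasing in $\tilde P_{ri}$; and $2\operatorname{Re}(\tilde c\,\mathbf{h}_2^\dag\mathbf{h}_1) \leq 2|\tilde c||\mathbf{h}_1^\dag\mathbf{h}_2|$, with equality iff $\arg \tilde c = \angle \mathbf{h}_1^\dag\mathbf{h}_2$. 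Inflating $\tilde P_{ri}$ to $P_{ri}$ and phase-aligning the cross term then yields $A_2 \leq \log\det(\mathbf{I} + \mathbf{H}\mathbf{Q}_r\mathbf{H}^\dag)$ for the $\mathbf{Q}_r$ of the statement with $|\rho| = |\tilde\rho|$. For $A_3$, the MISO source-to-relay-2 link gives $I(\mathbf{x}_s; y_{r2}) \leq \log(1 + \mathbf{g}_2\mathbf{Q}_s\mathbf{g}_2^\dag) \leq \log(1 + P_s\|\mathbf{g}_2\|^2)$, while for $I(x_{r1}; \mathbf{y}_d | x_{r2})$ I would combine the Gaussian maximum-entropy bound on $h(\mathbf{y}_d|x_{r2})$, Jensen's inequality applied to the concave $\log(1 + \|\mathbf{h}_1\|^2 t)$ in $t$, and the linear-MMSE bound $\mathbb{E}[\operatorname{Var}(x_{r1}|x_{r2})] \leq \tilde P_{r1}(1 - |\tilde\rho|^2) \leq P_{r1}(1 - |\tilde\rho|^2)$ to obtain $I(x_{r1}; \mathbf{y}_d | x_{r2}) \leq \log(1 + (1-|\tilde\rho|^2)P_{r1}\|\mathbf{h}_1\|^2)$. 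Taking the $\min$ of these two bounds and the supremum over $|\tilde\rho| \in [0,1]$ yields (\ref{eqn:correlation_bound1}); inequality (\ref{eqn:correlation_bound2}) is the symmetric argument with the roles of relays 1 and 2 interchanged, using $A_4$ in place of $A_3$.

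\textbf{Main obstacle.} The delicate step I expect is the joint treatment of the cross-correlation phase, its magnitude, and the two individual powers inside the MAC determinant: one must check that the three replacements \emph{phase-align} $\arg \tilde c \mapsto \angle \mathbf{h}_1^\dag\mathbf{h}_2$, \emph{retain} $|\tilde\rho|$, and \emph{inflate} $\tilde P_{ri} \uparrow P_{ri}$ all act simultaneously as upper-bounding operations, so that a single $\max_{|\rho| \in [0,1]} \min[\cdot, \cdot]$ dominates the supremum over arbitrary joint input distributions. Once this is in place the remaining steps are standard Gaussian maximum-entropy and MMSE computations.
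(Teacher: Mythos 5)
Your proposal is correct and follows essentially the same route as the paper: each cut of the loosened cut-set bound is evaluated via the Gaussian maximum-entropy argument (plus the conditional-variance bound $\mathbb{E}[\operatorname{Var}(x_{r1}|x_{r2})]\leq(1-|\rho|^2)P_{r1}$ for the mixed cuts), and the three expressions are obtained by taking minima over the relevant cuts. The only difference is that you explicitly verify that phase-aligning the cross-correlation to $\angle\mathbf{h}_1^\dag\mathbf{h}_2$ and inflating the relay powers to $P_{r1},P_{r2}$ at fixed $|\rho|$ are both upper-bounding operations on the MAC determinant, a step the paper performs implicitly without justification.
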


$\\$
\begin{proof}
From the first cut, the following upper bound is derived:
\begin{eqnarray}
  I(\mathbf{x}_s ;y_{r1} ,y_{r2} ) &=& h(y_{r1} ,y_{r2} ) - h(y_{r1} ,y_{r2} |\mathbf{x}_s ) \nonumber\\
   &=& h(\mathbf{y}_r ) - h(\mathbf{n}_r ) \nonumber\\
   &\leq& \log (2\pi e)^2 \det \mathbb{E}[\mathbf{y}_r \mathbf{y}_r^{\dag } ] - \log (2\pi e)^2 \det \mathbb{E}[\mathbf{n}_r \mathbf{n}_r^{\dag } ] \nonumber\\
   &=& \log \det \mathbb{E}[\mathbf{n}_r \mathbf{n}_r^{\dag }  + \mathbf{Gx}_s \mathbf{x}_s^{\dag } \mathbf{G}^{\dag } ] \nonumber\\
   &=& \log \det (\mathbf{I + GQ}_s \mathbf{G}^{\dag } ) \nonumber
\end{eqnarray}
where the inequality follows from the fact that the circularly
symmetric complex Gaussian maximizes the
entropy~\cite{Telatar:1999}. From the second cut, the following
upper bound is derived:
\begin{eqnarray}
  I(x_{r1} ,x_{r2} ;\mathbf{y}_d ) &=& h(\mathbf{y}_d ) - h(\mathbf{y}_d |x_{r1} ,x_{r2} ) \nonumber \\
   &=& h(\mathbf{y}_d ) - h(\mathbf{n}_d ) \nonumber\\
   &\leq& \log (2\pi e)^N \det \mathbb{E}[\mathbf{y}_d \mathbf{y}_d^\mathbf{\dag } ] - \log (2\pi e)^N \det \mathbb{E}[\mathbf{n}_d \mathbf{n}_d^\mathbf{\dag } ] \nonumber\\
   &=& \log \det \mathbb{E}[\mathbf{n}_d \mathbf{n}_d^\mathbf{\dag }  + \mathbf{h}_1 \mathbf{h}_1^\mathbf{\dag } |x_{r1} |^2  + \mathbf{h}_2 \mathbf{h}_2^\mathbf{\dag } |x_{r2} |^2  + \mathbf{h}_1 \mathbf{h}_2^\mathbf{\dag } x_{r1} x_{r2}^*  + \mathbf{h}_2 \mathbf{h}_1^\mathbf{\dag } x_{r1}^* x_{r2} ] \nonumber\\
   &=& \log \det (\mathbf{I} + \mathbf{h}_\mathbf{1} \mathbf{h}_\mathbf{1}^\mathbf{\dag } P_{r1}  + \mathbf{h}_2 \mathbf{h}_2^\mathbf{\dag } P_{r2}  + \mathbf{h}_1 \mathbf{h}_2^\mathbf{\dag } \rho \sqrt{P_{r1} P_{r2}} + \mathbf{h}_2 \mathbf{h}_1^\mathbf{\dag } \rho ^* \sqrt{P_{r1} P_{r2}}) \nonumber\\
   &=& \log \det (\mathbf{I} + \mathbf{HQ}_r \mathbf{H}^\mathbf{\dag } ). \nonumber
\end{eqnarray}
where $\rho  = \frac{{\mathbb{E}[x_{r1} x_{r2}^* ]}} {{\sqrt {P_{r1}
P_{r2} } }} $ and note that we set $\angle \rho = \angle
\mathbf{h}_1 ^\dag \mathbf{h}_2$ in (\ref{eqn:correlation_bound1})
and (\ref{eqn:correlation_bound2}). From the third cut, the
following upper bounds are derived:
\begin{eqnarray}
I(\mathbf{x}_s ;y_{r2} ) + I(x_{r1} ;\mathbf{y}_d |x_{r2} ) &\leq&
\log (1 + \mathbf{g}_2 \mathbf{Q}_s \mathbf{g}_2^\dag ) + \log (1 +
(1-|\rho|^2) P_{r1} \left\| {\mathbf{h}_1 } \right\|^2 ) \nonumber \\
&\leq& \log (1 + P_s \left\|{\mathbf {g}_2}\right\|^2 ) + \log (1
+(1-|\rho|^2) P_{r1} \left\| {\mathbf{h}_1 } \right\|^2 ) \nonumber
\end{eqnarray}
where the first inequality follows from the fact that the circularly
symmetric complex Gaussian distribution maximized differential
entropy and from the following property:
\[
\mathbb{E}_{x_{r2}}\left[ \mathbb{E}_{x_{r1}} \{\left| {x_{r1} }
\right|^2 |x_{r2}\}-\left|\mathbb{E}_{x_{r1}} \{x_{r1}
|x_{r2}\}\right|^2 \right] \leq (1-|\rho|^2)  P_{r1},
\]
and for the second inequality $\mathbf {Q}_s = \frac{{P_s}}{{
\left\|\mathbf{g}_2\right\|^2}} {\mathbf {g}_2^\dag \mathbf{g}_2}$
is used. Similarly, from the fourth cut:
\begin{eqnarray}
I(\mathbf{x}_s ;y_{r1} ) + I(x_{r2} ;\mathbf{y}_d |x_{r1} ) &\leq&
\log (1 + \mathbf{g}_1 \mathbf{Q}_s \mathbf{g}_1^\dag ) + \log (1 +
(1-|\rho|^2) P_{r2} \left\| {\mathbf{h}_2 } \right\|^2 ) \nonumber \\
&\leq& \log (1 + P_s \left\|\mathbf {g}_1\right\|^2 ) + \log (1 +
(1-|\rho|^2) P_{r2} \left\| {\mathbf{h}_2 } \right\|^2 ). \nonumber
\end{eqnarray}
By properly combining the above four bounds, we can see that the
minimum of the three expressions given in the theorem statement
results in a tighter upper bound.
\end{proof}

\section{Decode-and-Forward (DF)}
In this section, we describe an achievable rate of the vector
Gaussian parallel relay network with a DF strategy. With DF, the signal
is delivered from the source to the destination in a two hop
transmission. The source has three message sets: $m_1 \in
\{1,2,...,2^{nR_1}\}$ intended for relay 1, $m_2 \in
\{1,2,...,2^{nR_2}\}$ for relay 2, and $m_c \in
\{1,2,...,2^{nR_c}\}$ for both relays. Depending on the messages to
transmit, the source node makes the signal $\mathbf{x}_s $ as a
function of $m_1$, $m_2$ and $m_c$. Upon successful decoding of the
signals, relays know their private messages and the common message.
The relays re-encode the received information to make input signals
$x_{r1}$ as a function of $m_1$ and $m_c$, and $x_{r2}$ as a
function of $m_2$ and $m_c$ as a block Markov manner. As the end-to-end channel is a cascade
of MISO BC and SIMO MAC both with common information, we first
investigate optimal signaling in the second hop.
%[syc] what does the following mean?
%and when the first hop does not violate the optimality condition.

\subsection{The Second Hop: SIMO MAC with Common Information}
The capacity region of the discrete memoryless MAC with common
information is derived in \cite{SlepianWolf,Willems:82} and that of
scalar Gaussian MAC with common information in
\cite{PrelovvanderMeulen:ISIT91}. The characteristics of the scalar
Gaussian MAC were further investigated in \cite{LiuUlukus:JCOM06}.
The result can be extended to the Gaussian MAC with multiple
antennas at the destination.

\begin{definition} A $((2^{nR_1}\times 2^{nR_c},2^{nR_2}\times
2^{nR_c}),n)$ code for the SIMO MAC with common information consists
of three message sets $m_1 = \{1,2,...,2^{nR_1}\}$, $m_2 =
\{1,2,...,2^{nR_2}\}$, $m_c = \{1,2,...,2^{nR_c}\}$, encoding
functions at two relays, $f_{r1}: (m_1,m_c) \rightarrow
\mathbb{C}^{n}$ and $f_{r2}: (m_2,m_c) \rightarrow \mathbb{C}^{n}$,
respectively, and a decoding function at the destination $g_d:
\mathbb{C}^{M\times n} \rightarrow (\hat m_1,\hat m_2,\hat m_c )$.
If the messages $(m_1,m_2,m_c)$ are sent, the conditional
probability of error is $\lambda (m_1,m_2,m_c) = \Pr\{g_d({\bf
y}_d^n)\neq (m_1,m_2,m_c)|(m_1,m_2,m_c) \textrm{ sent}\}$. The
average probability of error is defined as
$P_e^{(n)}=\frac{1}{2^{n(R_1+R_2+R_c)}}\sum\limits_{(m_1,m_2,m_c)}
\lambda (m_1,m_2,m_c)$.
\end{definition}

\begin{definition} If there exists a sequence of $((2^{nR_1}\times
2^{nR_c},2^{nR_2}\times 2^{nR_c}),n)$ codes with
$P_e^{(n)}\rightarrow 0$, the rate triplet $(R_1,R_2,R_c)$ is said
to be achievable.
\end{definition}

\begin{definition} The capacity region of the SIMO MAC with common
information is the closure of the set of all achievable rate triplets.
\end{definition}

We will derive an achievable rate region assuming Gaussian input
distributions. Each relay's input signal is a superposition of
private and common signals:
\[
x_{r1}^n(m_1,m_c) = v^n(m_1) + \sqrt {\frac{{\alpha  P_{r1} }}{{P_c
}}}u_1^n(m_c), \ \ \ \ \ \ x_{r2}^n(m_2,m_c) = w^n(m_2) +  \sqrt
{\frac{{ \beta P_{r2} }}{{P_c }}} u_2^n(m_c)
\] where $v \sim
\mathcal{CN}\left( {0,\overline \alpha P_{r1}} \right)$, $w \sim
\mathcal{CN}\left( {0,\overline \beta P_{r2}} \right)$,
$\overline{\alpha}=1-\alpha$, $\overline{\beta}=1-\beta$, $0 \leq
\alpha \leq 1$, $0 \leq \beta \leq 1$, and $u_1$ and $u_2$ are
partially correlated random variables in the sense that
\[
\mathbf u= \left[ {\begin{array}{*{20}c}
   {u_1}  \\
   {u_2}  \\
 \end{array} } \right]\sim \mathcal{CN}\left( {0,P_c\left[ {\begin{array}{*{20}c}
   {1} & \gamma  \\
   \gamma^* & {1}  \\
 \end{array} } \right]} \right).
\]
The conditional probability density functions are given by
\[
p(x_{r1}|u_1) \sim \mathcal{CN}\left( {\sqrt {\frac{{\alpha  P_{r1}
}}{{P_c }}}u_1,\overline \alpha P_{r1}} \right), \ \ \ \ \ \
p(x_{r2}|u_2)\sim \mathcal{CN}\left( {\sqrt {\frac{{ \beta P_{r2}
}}{{P_c }}} u_2,\overline \beta P_{r2}} \right).
\]
Note that the correlation coefficient between the relay input
signals is given by $\rho  = \frac{{\mathbb{E}[x_{r1} x_{r2}^* ]}}
{{\sqrt {P_{r1} P_{r2} } }}=\gamma {\sqrt {\alpha \beta }}$ and
controllable by power allocation. The received signal at the
destination is given by \setlength{\arraycolsep}{.1em}
\begin{eqnarray}
\begin{array}{l}
{\bf y}_d = {\bf h}_1 x_{r1}  + {\bf h}_2 x_{r2}  + {\bf n}_d \\
  = {\bf h}_1 \left( v + \sqrt {\frac{{\alpha P_{r1}
}}{{P_c }}}u_1 \right) + {\bf h}_2 \left( w + \sqrt
{\frac{{ \beta P_{r2} }}{{P_c }}} u_2 \right) + {\bf n}_d \\
  = {\bf h}_1 v + {\bf h}_2 w + \left( {\bf h}_1 \sqrt {\frac{{\alpha P_{r1}
}}{{P_c }}}u_1 + {\bf h}_2 \sqrt
{\frac{{ \beta P_{r2} }}{{P_c }}} u_2 \right) + {\bf n}_d. \\
  = {\bf h}_1 v + {\bf h}_2 w + \left[ {\begin{array}{*{20}c}
   {\mathbf{h}_1 \sqrt {\frac{{\alpha
P_{r1} }}{{P_c }}}} & {\ \ \ \mathbf{h}_2 \sqrt {\frac{{\beta
P_{r2} }}{{P_c }}}}  \\
 \end{array} } \right] \mathbf u + {\bf n}_d. \\
\end{array}
\end{eqnarray}

The rate region of the SIMO Gaussian MAC in terms of mutual
information expression can be written as follows
\[
\mathcal{R}_{MAC}= \left\{ {\left( {R_1 ,R_2 ,R_c } \right)\left|
{\begin{array}{*{20}c}
   {R_1  \leq I(x_{r1}; \mathbf{y}_d | x_{r2}, u_1, u_2)}, \hfill  \\
   {R_2  \leq I(x_{r2}; \mathbf{y}_d | x_{r1}, u_1, u_2)}, \hfill  \\
   {R_1  + R_2  \leq I(x_{r1}, x_{r2}; \mathbf{y}_d | u_1, u_2)}, \hfill  \\
   {R_1  + R_2  + R_c  \leq I(x_{r1}, x_{r2}; \mathbf{y}_d )} \hfill  \\
 \end{array} } \right.} \right\}
\]
for some distribution $p(u_1,u_2)p(x_{r1}|u_1)p(x_{r2}|u_2)$ where
$u_1$ and $u_2$ are auxiliary random variables that represent common
information.

\begin{theorem}
The following rate region is achievable for the SIMO MAC with
common information:
\begin{eqnarray}
\mathcal{R}_{MAC} \left( {P_{r1} ,P_{r2} } \right) =
{\bigcup\limits_{\alpha ,\beta, \mathbb{E}[|x_i|^2]\leq P_{ri},
i=1,2 } {\mathcal{R}\left( \alpha ,\beta, \gamma \right)} }
\end{eqnarray}
where ${\mathcal{R}\left( \alpha ,\beta, \gamma \right)}$ is the
rate region for given $\alpha$, $\beta$ and $\gamma$, which can be
expressed as \setlength{\arraycolsep}{.1em}
\begin{eqnarray}
\mathcal{R}\left( {\alpha ,\beta, \gamma } \right) = \left\{ {\left(
{R_1 ,R_2 ,R_c } \right)\left| {\begin{array}{*{20}c}
   {R_1  \leq \log ( {1 + \overline \alpha P_{r1} \left\| {\mathbf{h}_1 } \right\|^2 } )}, \hfill \\
   {R_2  \leq \log ( {1 + \overline \beta P_{r2} \left\| {\mathbf{h}_2 } \right\|^2 } )}, \hfill \\
   {R_1  + R_2  \leq \log \det \left( {\mathbf{I} + \mathbf{HQ}_r^p \mathbf{H}^{\dag } } \right)}, \hfill \\
   {R_1  + R_2  + R_c  \leq \log \det \left( {\mathbf{I} + \mathbf{HQ}_r \mathbf{H}^{\dag } } \right)} \hfill \\
\end{array} } \right.} \right\}
\label{eqn:SIMO_MAC_region}
\end{eqnarray}
where
\begin{eqnarray}
&&\mathbf{Q}_r^p={\mathbb E}_u \left[{\mathbb E}_{{\bf x}_r}[ {\bf
x}_r {\bf x}_r^\dag  |u]\right]  = \left[ {\begin{array}{*{20}c}
   {\overline \alpha P_{r1} } & 0  \\
   0 & {\overline \beta P_{r2} }  \\
 \end{array} } \right]\nonumber\\
&&\mathbf{Q}_r={\mathbb E}_{{\bf x}_r}[ {\bf x}_r {\bf x}_r^\dag ]
  = \left[ {\begin{array}{*{20}c}
   {P_{r1} } & \gamma {\sqrt {\alpha \beta  P_{r1} P_{r2} } }  \\
   \gamma^* {\sqrt { \alpha \beta  P_{r1} P_{r2} }} & {P_{r2} }  \\
 \end{array} } \right].\nonumber
\end{eqnarray}
\end{theorem}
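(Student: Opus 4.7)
The plan is to invoke the general rate region $\mathcal{R}_{MAC}$ for the MAC with common information (the discrete memoryless result of Slepian--Wolf and Willems, extended to the scalar Gaussian case in the cited references), and simply evaluate it with the specific jointly Gaussian choice of $(u_1,u_2,x_{r1},x_{r2})$ described just before the theorem statement. Because the standard coding scheme (superposition of private and common codewords, with joint typicality decoding at the destination) carries over verbatim when the output is vector-valued, my main job is to verify that each of the four mutual-information terms in $\mathcal{R}_{MAC}$ collapses to the claimed $\log\det$ expression under the prescribed Gaussian input.

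I would proceed in four short computational steps. First, for the $R_1$ constraint, I would condition on $(x_{r2},u_1,u_2)$: subtracting the now-known contributions $\mathbf{h}_2 x_{r2}$, $\mathbf{h}_1\sqrt{\alpha P_{r1}/P_c}\,u_1$, and $\mathbf{h}_2\sqrt{\beta P_{r2}/P_c}\,u_2$ from $\mathbf{y}_d$ leaves $\mathbf{h}_1 v + \mathbf{n}_d$, where $v\sim\mathcal{CN}(0,\overline{\alpha}P_{r1})$ is independent of everything else. The corresponding mutual information is $\log\det(\mathbf{I}+\overline{\alpha}P_{r1}\mathbf{h}_1\mathbf{h}_1^\dag)=\log(1+\overline{\alpha}P_{r1}\|\mathbf{h}_1\|^2)$, and the $R_2$ bound follows by a symmetric argument.

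Second, for the sum rate $R_1+R_2$ I condition on $(u_1,u_2)$ alone; the residual is $\mathbf{h}_1 v + \mathbf{h}_2 w + \mathbf{n}_d$ with independent $v,w$, whose signal covariance is exactly $\mathbf{H}\mathbf{Q}_r^p\mathbf{H}^\dag$, yielding $\log\det(\mathbf{I}+\mathbf{H}\mathbf{Q}_r^p\mathbf{H}^\dag)$. Third, for $R_1+R_2+R_c$ I compute $I(x_{r1},x_{r2};\mathbf{y}_d)$ with no conditioning: the covariance of $\mathbf{x}_r$ is precisely $\mathbf{Q}_r$ with $\rho=\gamma\sqrt{\alpha\beta}$ as noted in the text, so the term evaluates to $\log\det(\mathbf{I}+\mathbf{H}\mathbf{Q}_r\mathbf{H}^\dag)$. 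Fourth, taking the union over all admissible $(\alpha,\beta,\gamma)\in[0,1]\times[0,1]\times\{|\gamma|\le 1\}$ subject to $\mathbb{E}[|x_{ri}|^2]\le P_{ri}$ gives the claimed region $\mathcal{R}_{MAC}(P_{r1},P_{r2})$.

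I expect no serious obstacle in the computations themselves, since they are all elementary linear-algebraic evaluations of Gaussian mutual information. The only point that deserves care is invoking the right achievability template: one must make sure that the scalar common-information MAC coding scheme from \cite{SlepianWolf,Willems:82,PrelovvanderMeulen:ISIT91} extends cleanly to a SIMO output, which amounts to checking that the joint typicality decoder at the $N$-antenna destination works for any output alphabet — a standard observation, since the auxiliary variables $u_1,u_2$ and inputs $x_{r1},x_{r2}$ remain scalar and only the output differential entropies change to $\log\det(\cdot)$ forms. Once that extension is acknowledged, the theorem reduces to the four one-line covariance computations above.
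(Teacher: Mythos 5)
Your proposal is correct and follows essentially the same route as the paper, which simply evaluates the four mutual-information constraints of the common-information MAC region under the prescribed circularly symmetric complex Gaussian inputs; your four conditioning computations are exactly the "straightforward evaluation" the paper invokes. The extra care you take in noting that the Slepian--Wolf/Willems achievability template extends to a vector output is a reasonable (and correct) elaboration of the same argument, not a different approach.
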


\begin{proof}
It is straightforward to show the theorem result by evaluating the
mutual information expressions assuming circularly symmetric complex
Gaussian input distributions.
\end{proof}

Next, we are interested in how to maximize the sum-rate and get the
following result for the optimal correlation that maximizes $\log
\det \left( {\mathbf{I} + {\bf HQ}_r {\bf H}^\dagger} \right)$.

\begin{lemma}[Optimal correlation]\label{lemma:optimal_correlation}
\label{th:corr} For any $N$, the sum rate of the SIMO MAC with
common information is maximized by $\angle\rho=\angle\gamma=\angle
\mathbf{h}_1^\dagger \mathbf{h}_2$ and $|\rho|
=|\gamma|\sqrt{\alpha\beta}= \min \left( {\frac{{|\mathbf
{h}_1^\dagger \mathbf {h}_2|}}{{\sqrt {P_{r1} P_{r2}} \det
\left(\mathbf {H^\dagger H}\right)}},1} \right)$, and the resulting
maximum sum-rate is given by
\[
\begin{gathered}
  R_{sum,MAC}^{max} = \mathop {\max }\limits_\rho  \log \det (\mathbf{I + HQ}_r \mathbf{H}^\mathbf{\dag } ) \hfill \\
= \left\{ {\begin{array}{*{20}c}
   {\log \left(1 + P_{r1} \left\|
{\mathbf{h}_1 } \right\|^2 + P_{r2} \left\| {\mathbf{h}_2 }
\right\|^2  + P_{r1} P_{r2} \det (\mathbf{H}^{\dag } \mathbf{H}) +
\frac{{| {\mathbf{h}_1^\mathbf{\dag } \mathbf{h}_2 } |^2 }}
{{\det (\mathbf{H}^{\dag } \mathbf{H})}} \right),} & {|\rho|^{opt}<1}  \\
   {\log \left(1 + P_{r1} \left\| {\mathbf{h}_1 } \right\|^2 + P_{r2}
\left\| {\mathbf{h}_2 } \right\|^2  +  2\sqrt {P_{r1} P_{r2} } |
{\mathbf{h}_1^{\dag } \mathbf{h}_2 } |\right),} & {|\rho|^{opt}=1}.  \\
 \end{array} } \right.
\end{gathered}
\]
\end{lemma}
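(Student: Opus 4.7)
The plan is to reduce the $N$-dimensional determinant to a $2\times 2$ one, after which the objective becomes an elementary function of the two degrees of freedom $\angle\rho$ and $|\rho|$ that can be optimized separately.

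First, I would apply Sylvester's determinant identity
\[
\det(\mathbf{I}_N + \mathbf{H}\mathbf{Q}_r\mathbf{H}^\dagger)=\det(\mathbf{I}_2 + \mathbf{Q}_r\mathbf{H}^\dagger\mathbf{H}),
\]
so the problem only ever involves $\mathbf{H}^\dagger\mathbf{H}=\left[\begin{smallmatrix}\|\mathbf{h}_1\|^2 & \mathbf{h}_1^\dagger\mathbf{h}_2\\ \mathbf{h}_2^\dagger\mathbf{h}_1 & \|\mathbf{h}_2\|^2\end{smallmatrix}\right]$ and the parametrized $\mathbf{Q}_r$. For any $2\times 2$ matrix $\mathbf{A}$, $\det(\mathbf{I}_2+\mathbf{A})=1+\tr(\mathbf{A})+\det(\mathbf{A})$, and using $\det(\mathbf{Q}_r\mathbf{H}^\dagger\mathbf{H})=\det(\mathbf{Q}_r)\det(\mathbf{H}^\dagger\mathbf{H})$, a direct computation gives
\[
g(\rho):=\det(\mathbf{I}_2+\mathbf{Q}_r\mathbf{H}^\dagger\mathbf{H})=1+P_{r1}\|\mathbf{h}_1\|^2+P_{r2}\|\mathbf{h}_2\|^2+2\sqrt{P_{r1}P_{r2}}\,\mathrm{Re}\!\left(\rho\,\mathbf{h}_2^\dagger\mathbf{h}_1\right)+P_{r1}P_{r2}(1-|\rho|^2)\det(\mathbf{H}^\dagger\mathbf{H}).
\]
The key structural observation is that $\angle\rho$ appears only in the trace term (through a cosine) while $|\rho|$ appears linearly in that same term and quadratically in the determinant term, so the two variables decouple.

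Next, I would optimize over the phase: the only $\angle\rho$-dependent term is $2\sqrt{P_{r1}P_{r2}}|\rho||\mathbf{h}_1^\dagger\mathbf{h}_2|\cos(\angle\rho-\angle\mathbf{h}_1^\dagger\mathbf{h}_2)$, which is maximized by $\angle\rho=\angle\mathbf{h}_1^\dagger\mathbf{h}_2$. Substituting this alignment, the problem reduces to maximizing over $r=|\rho|\in[0,1]$ the function
\[
\tilde g(r)=1+P_{r1}\|\mathbf{h}_1\|^2+P_{r2}\|\mathbf{h}_2\|^2+2\sqrt{P_{r1}P_{r2}}|\mathbf{h}_1^\dagger\mathbf{h}_2|\,r+P_{r1}P_{r2}\det(\mathbf{H}^\dagger\mathbf{H})(1-r^2).
\]
Since $\det(\mathbf{H}^\dagger\mathbf{H})\ge 0$ by Cauchy--Schwarz, $\tilde g$ is concave (strictly so in the full-rank case), and setting $\tilde g'(r)=0$ gives the unconstrained maximizer $r^\star=\frac{|\mathbf{h}_1^\dagger\mathbf{h}_2|}{\sqrt{P_{r1}P_{r2}}\,\det(\mathbf{H}^\dagger\mathbf{H})}$. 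The optimal $|\rho|$ is therefore $\min(r^\star,1)$, matching the lemma statement; the rank-deficient case $\det(\mathbf{H}^\dagger\mathbf{H})=0$ just collapses $\tilde g$ to a non-decreasing linear function of $r$, for which $r=1$ is optimal, consistent with $r^\star=\infty$.

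Finally, I would substitute the two cases back into $\log \tilde g$. For $r^\star<1$, after simplification the two $r$-dependent contributions combine into $\frac{|\mathbf{h}_1^\dagger\mathbf{h}_2|^2}{\det(\mathbf{H}^\dagger\mathbf{H})}+P_{r1}P_{r2}\det(\mathbf{H}^\dagger\mathbf{H})$, yielding the first branch; for $r^\star\ge 1$, evaluating at $r=1$ kills the $(1-r^2)$ term and produces the second branch. Relating back to the power allocation, the identity $\rho=\gamma\sqrt{\alpha\beta}$ means $|\rho|=1$ is only achievable when the full relay powers go into the common signals with $|\gamma|=1$, hence $\angle\gamma=\angle\rho=\angle\mathbf{h}_1^\dagger\mathbf{h}_2$ as stated. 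There is no real obstacle here beyond bookkeeping: the main point of the argument is the phase/magnitude decoupling enabled by Sylvester's identity together with $2\times 2$ concavity in $r$.
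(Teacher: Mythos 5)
Your proposal is correct and follows essentially the same route as the paper's Appendix proof: reduce to $\det(\mathbf{I}_2+\mathbf{Q}_r\mathbf{H}^\dagger\mathbf{H})$, align the phase of $\rho$ with $\angle\mathbf{h}_1^\dagger\mathbf{h}_2$, and then maximize a concave quadratic in $|\rho|$ over $[0,1]$. The only difference is cosmetic: the paper dresses the magnitude step in Lagrangian/KKT language, whereas you argue directly from concavity of the quadratic and clip the unconstrained maximizer to the feasible interval.
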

$\ $\\
\begin{proof} See Appendix.
\end{proof}

\begin{remark} By defining the angle between channel vectors $\varphi_h =
\arccos \frac{{| {\mathbf{h}_1^{\dag } \mathbf{h}_2 } |}} {{||
{\mathbf{h}_1 } || || {\mathbf{h}_2 } ||}}$, $\varphi_h \in
[0,\pi/2]$, and the geometric mean $\textsf{SNR}_{geo} = \sqrt
{P_{r1} P_{r2} } || {\mathbf{h}_1 } || || {\mathbf{h}_2 } ||$, the
optimal correlation can be also expressed as
\[
\left| \rho \right|^{opt}  = \min \left( {\frac{{\cos \varphi_h }}
{{\textsf{SNR}_{geo} \sin ^2 \varphi_h }},1} \right).
\]
It is a monotonically decreasing function of $\varphi_h$ and
inversely proportional to $\textsf{SNR}_{geo}$. In Fig.
\ref{fig:rho}, $|\rho|^{opt}$ is drawn for different values of
$\varphi_h$ and $\textsf{SNR}_{geo}$.

For a fixed $\textsf{SNR}_{geo}$ or for a fixed $\varphi_h$, there
exists a threshold above which $|\rho|^{opt}<1$ and below which
$|\rho|^{opt}=1$. The threshold is the solution of $\cos \varphi_h =
\textsf{SNR}_{geo} \sin ^2 \varphi_h$.
\end{remark}

\begin{remark} If $\mathbf{h}_1$ and $\mathbf{h}_2$ are
orthogonal, then $|\rho|^{opt}=0$. If channel vectors are
orthogonal, the differential entropy of the received signal vector
${\bf y}_d$ is maximized when $x_{r1}$ and $x_{r2}$ are
uncorrelated. The resulting sum-rate is given by
\[
 \mathop{\max }\limits_\rho
\log \det (\mathbf{I + HQ}_r \mathbf{H}^\mathbf{\dag } ) = \log (1 +
P_{r1} || {\mathbf{h}_1 } ||^2 )+ \log(1 + P_{r2} || {\mathbf{h}_2 }
||^2 ).
\]
If $\mathbf{h}_1$ and $\mathbf{h}_2$ are parallel, then
$|\rho|^{opt}=1$. If channel vectors are parallel, the differential
entropy of the received signal vector ${\bf y}_d$ is maximized when
$x_{r1}$ and $x_{r2}$ are fully correlated. The resulting sum-rate
is given by
\[
\mathop {\max }\limits_\rho  \log \det (\mathbf{I + HQ}_r
\mathbf{H}^\mathbf{\dag } ) = \log (1 + P_{r1} || {\mathbf{h}_1 }
||^2  + P_{r2} || {\mathbf{h}_2 } ||^2 + 2\sqrt {P_{r1} P_{r2} } |
{\mathbf{h}_1^\mathbf{\dag } \mathbf{h}_2 } |).
\]
\end{remark}

\begin{remark} For a fixed $\bf H$, if either $P_{r1}$ or
$P_{r2}$ are sufficiently small so that $|\rho|^{opt}=1$, the
signaling is optimal when the relay signals are perfectly
correlated. If $P_{r1}, P_{r2} > 0$ and either $P_{r1}$ or $P_{r2}$
are sufficiently large so that $|\rho|^{opt} = \frac{{|\mathbf
{h}_1^\dagger \mathbf {h}_2|}}{{\sqrt {P_{r1} P_{r2}} \det
\left(\mathbf {H^\dagger H}\right)}}<1$, then
$|\rho|^{opt}\rightarrow 0$ as either $P_{r1}$ or $P_{r2}
\rightarrow \infty$. If power is abundant, very small fraction of
relay power needs to be allocated to common signals to satisfy
optimality condition.
\end{remark}

\begin{remark} For $N=1$, the sum rate is maximized by
$|\rho|=1$, i.e., $\alpha=\beta=|\gamma|=1$ and
$\angle\rho=\angle(h_1^* h_2)$ where
$\mathbf{H} = [ {\begin{array}{*{20}c} {h_1 } & {h_2 }  \\
\end{array} } ]$.
\end{remark}

By combining the optimal correlation condition with the achievable
region expression in (\ref{eqn:SIMO_MAC_region}), we get the
following result.

\begin{theorem}[Maximum sum-rate subregion] In the three-dimensional achievable region of a SIMO MAC
given by (\ref{eqn:SIMO_MAC_region}),
the maximum sum-rate subregion is a surface whose boundary
is characterized by
\begin{eqnarray}
\mathcal{R}_{sub} \left( {\left| \rho  \right|^{opt} } \right) =
\bigcup\limits_{\left| \rho  \right|^{opt}  \leq \sqrt\alpha \leq 1}
{\mathcal{R}\left( \alpha  \right)}
\end{eqnarray}
where
\begin{eqnarray}
 \mathcal{R}\left( {\alpha} \right)
  = \left\{ {\left( {R_1 ,R_2 ,R_c } \right)\left| \begin{gathered}
  R_1  \leq \log \left(1 + \overline \alpha  P_{r1} \left\| {\mathbf{h}_1 } \right\|^2 \right) \hfill \\
  R_2  \leq \log \left( {1 + (1-(|\rho|^{opt}/\sqrt\alpha)^2) P_{r2}\left\| {\mathbf{h}_2 } \right\|^2 } \right) \mbox{ for } |\rho|^{opt}>0\hfill \\
  R_2  \leq \log \left( {1 + P_{r2}\left\| {\mathbf{h}_2 } \right\|^2 } \right) \mbox{ for } |\rho|^{opt}=0\hfill \\
  R_1  + R_2  \leq  \log \det \left( {\mathbf{I}  + \mathbf{HQ}_r^p \mathbf{H}^{\dag } } \right) \hfill \\
  R_1  + R_2  + R_c  = \log \det \left( {\mathbf{I}  + \mathbf{HQ}_r \mathbf{H}^{\dag } } \right) \hfill \\
\end{gathered}  \right.} \right\} \label{eqn:sub_region}
\end{eqnarray} where
\[
\mathbf{Q}_r^p  = \left[ {\begin{array}{*{20}c}
   {(1-\alpha) P_{r1} } & 0  \\
   0 & {(1-(|\rho|^{opt}/\sqrt\alpha)^2) P_{r2} }  \\
 \end{array} } \right] \mbox{ for } |\rho|^{opt}>0,
\]
\[
\mathbf{Q}_r^p  = \left[ {\begin{array}{*{20}c}
   {(1-\alpha) P_{r1} } & 0  \\
   0 & { P_{r2} }  \\
 \end{array} } \right] \mbox{ for } |\rho|^{opt}=0,
\]
\[
\mathbf{Q}_r = \left[ {\begin{array}{*{20}c}
   {P_{r1} } & {|\rho|^{opt}e^{j\angle {\mathbf{h}_1^\dagger \mathbf{h}_2}} \sqrt{P_{r1}P_{r2}} }  \\
   {|\rho|^{opt}e^{-j\angle {\mathbf{h}_1^\dagger \mathbf{h}_2}} \sqrt{P_{r1}P_{r2}} } & {P_{r2} }  \\
 \end{array} } \right].
\]
\end{theorem}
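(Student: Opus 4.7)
The plan is to show that every rate triple on the maximum sum-rate face of $\mathcal{R}_{MAC}$ in Theorem 2 can be realized with the specific parameter choice announced (namely $|\gamma|=1$, $\angle\gamma=\angle \mathbf{h}_1^\dagger\mathbf{h}_2$, and $\beta=(|\rho|^{opt})^2/\alpha$ for $\alpha\in[(|\rho|^{opt})^2,1]$), and that these are exactly the parameter choices that are not strictly dominated.

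First, I would invoke Lemma~\ref{lemma:optimal_correlation}: the sum-rate bound $R_1+R_2+R_c\le\log\det(\mathbf{I}+\mathbf{HQ}_r\mathbf{H}^\dag)$ is maximized iff $\angle\rho=\angle\mathbf{h}_1^\dag\mathbf{h}_2$ and $|\rho|=|\rho|^{opt}$. Since $\rho=\gamma\sqrt{\alpha\beta}$, any triple on the sum-rate face is achieved by some $(\alpha,\beta,\gamma)$ satisfying $\angle\gamma=\angle\mathbf{h}_1^\dag\mathbf{h}_2$ and $|\gamma|\sqrt{\alpha\beta}=|\rho|^{opt}$. This reduces the problem from three free parameters to a two-parameter family (for instance, the pair $(\alpha,|\gamma|)$, with $\beta=(|\rho|^{opt})^2/(\alpha|\gamma|^2)$ determined).

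Next, I would argue that $|\gamma|=1$ is an optimal choice, i.e., no point in the sum-rate face is lost by restricting to $|\gamma|=1$. The argument is a monotonicity one: the only constraints in (\ref{eqn:SIMO_MAC_region}) other than the sum-rate equality involve the private covariance $\mathbf{Q}_r^p=\mathrm{diag}(\overline{\alpha}P_{r1},\overline{\beta}P_{r2})$, and all three of the bounds on $R_1$, $R_2$, and $R_1+R_2$ are increasing in $\overline{\alpha}$ and $\overline{\beta}$. Under the constraint $|\gamma|\sqrt{\alpha\beta}=|\rho|^{opt}$, decreasing $|\gamma|$ forces $\alpha\beta$ to grow, which can only shrink $\overline{\alpha}\overline{\beta}$ (and each marginal $\overline{\alpha},\overline{\beta}$ as well, for the most favorable splits). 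Hence the full sum-rate face is already traced out at $|\gamma|=1$, and one can take $\alpha\beta=(|\rho|^{opt})^2$, i.e., $\beta=(|\rho|^{opt})^2/\alpha$.

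Finally, I would impose the physical constraints $\alpha,\beta\in[0,1]$: $\beta=(|\rho|^{opt})^2/\alpha\le 1$ yields $\alpha\ge(|\rho|^{opt})^2$, while $\alpha\le 1$ is automatic, giving the parameter range $|\rho|^{opt}\le\sqrt{\alpha}\le 1$. Substituting $\overline{\beta}=1-(|\rho|^{opt}/\sqrt{\alpha})^2$ and the corresponding $\mathbf{Q}_r^p,\mathbf{Q}_r$ into (\ref{eqn:SIMO_MAC_region}), the inequalities $R_1\le\log(1+\overline{\alpha}P_{r1}\|\mathbf{h}_1\|^2)$, $R_2\le\log(1+\overline{\beta}P_{r2}\|\mathbf{h}_2\|^2)$, $R_1+R_2\le\log\det(\mathbf{I}+\mathbf{HQ}_r^p\mathbf{H}^\dag)$, and the sum-rate equality reproduce (\ref{eqn:sub_region}). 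The degenerate case $|\rho|^{opt}=0$ (orthogonal channels) must be treated separately because then $\beta=(|\rho|^{opt})^2/\alpha$ is indeterminate; there, one directly sets $\alpha=\beta=0$ so that both relays devote all power to private signals, giving the stated bound $R_2\le\log(1+P_{r2}\|\mathbf{h}_2\|^2)$.

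The main obstacle is the monotonicity step: even though the $R_1$, $R_2$, and $R_1+R_2$ bounds are componentwise monotone in $\overline{\alpha},\overline{\beta}$, I need to verify carefully that $|\gamma|=1$ does not artificially exclude any $(R_1,R_2,R_c)$ triple, as opposed to merely enlarging the outer region. I expect this reduces to the elementary observation that for any $(\alpha',\beta',\gamma')$ with $|\gamma'|<1$ and $|\gamma'|\sqrt{\alpha'\beta'}=|\rho|^{opt}$, one can find $(\alpha,\beta,1)$ with $\alpha\le\alpha'$, $\beta\le\beta'$ (hence weakly looser private bounds) satisfying $\sqrt{\alpha\beta}=|\rho|^{opt}$, so every achievable sum-rate-face triple is recovered.
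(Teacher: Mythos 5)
Your proposal is correct and follows essentially the same route as the paper: the paper's proof likewise fixes $\angle\gamma=\angle\mathbf{h}_1^\dagger\mathbf{h}_2$, sets $|\gamma|=1$ and $\beta=(|\rho|^{opt}/\sqrt{\alpha})^2$ (or $\beta=0$ when $|\rho|^{opt}=0$), and takes the union over $|\rho|^{opt}\leq\sqrt{\alpha}\leq 1$. The only difference is that you explicitly justify, via the monotonicity of the $R_1$, $R_2$, and $R_1+R_2$ bounds in $\overline{\alpha},\overline{\beta}$, why restricting to $|\gamma|=1$ loses no point of the sum-rate face -- a step the paper asserts without argument.
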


$\ $\\
\begin{proof}
To satisfy the optimal correlation, $0 \leq |\rho|^{opt}=
|\gamma|\sqrt{\alpha\beta} \leq 1$, $|\gamma|$, $\sqrt{\alpha}$ and
$\sqrt{\beta}$ should be in the range $[|\rho|^{opt},1]$. In
(\ref{eqn:SIMO_MAC_region}), by setting $|\gamma|=1$, and
$\beta=(|\rho|^{opt}/\sqrt\alpha)^2$ for $|\rho|^{opt}>0$ or
$\beta=0$ for $|\rho|^{opt}=0$, and by taking union over
$|\rho|^{opt} \leq \sqrt\alpha \leq 1$, we characterize the boundary
of the maximum sum-rate surface.
\end{proof}

\begin{example} [Close-to-parallel channel vectors] If
$|\rho|^{opt}=1$, it is required for relays to set
$\alpha=\beta=|\gamma|=1$. Substituting this condition into
(\ref{eqn:SIMO_MAC_region}), we get the expression for the maximum
sum-rate subregion which is a single point given by
\[
\left\{ {\left( {R_1 ,R_2 ,R_c } \right)\left| \begin{gathered}
  R_1  = 0, R_2  = 0, \hfill \\
  R_c  = \log (1 + P_{r1} \left\| {\mathbf{h}_1 } \right\|^2  + P_{r2} \left\| {\mathbf{h}_2 } \right\|^2  + 2\sqrt {P_{r1} P_{r2} } | {\mathbf{h}_1^{\dag } \mathbf{h}_2 } |) \hfill \\
\end{gathered}  \right.} \right\}.
\]
The example achievable rate region is drawn in Fig.
\ref{fig:MAC_region} (a) and (b), and the maximum sum-rate point
$(0,0,R_{sum,MAC}^{max})$ is on the $R_c$ axis.
\end{example}

\begin{example} [Orthogonal channel vectors] If $\left|
\rho  \right|^{opt}  = 0$, at least one of $\alpha$, $\beta$ and
$|\gamma|$ must be zero. We get the expression for the maximum
sum-rate subregion given by
\[
 \left\{ {\left( {R_1 ,R_2 ,R_c } \right)\left| \begin{gathered}
  R_1  \leq \log (1 + P_{r1} \left\| {\mathbf{h}_1 } \right\|^2 ) \hfill \\
  R_2  \leq \log (1 + P_{r2} \left\| {\mathbf{h}_2 } \right\|^2 ) \hfill \\
  R_1  + R_2  + R_c  \leq \sum\limits_{i = 1,2} {\log (1 + P_{ri} \left\| {\mathbf{h}_i } \right\|^2 )}  \hfill \\
\end{gathered}  \right.} \right\}.
\]
As it is shown in Fig. \ref{fig:MAC_region} (d), the maximum
sum-rate surface is the square connecting four points:
$(R_1^{max},R_2^{max},0)$,
$(R_1^{max},0,R_{sum,MAC}^{max}-R_1^{max})$,
$(0,R_2^{max},R_{sum,MAC}^{max}-R_2^{max})$, and
$(0,0,R_{sum,MAC}^{max})$, where $R_i^{max}$ denotes the maximum rate
that can be achieved by each message set. In the subregion, for any
fixed $R_1$ and $R_2$, we can find $R_c$ such that
$R_{sum,MAC}^{max}=R_1+R_2+R_c$.
\end{example}

\subsection{Impact of Common Information Signaling}

Let us discuss how much benefit we can get by having the common
information for $N \geq 2$ case. If the source transmits only the
private signal to relays, i.e. $R_c=0$, the best strategy relays can
do is to have diagonal covariance matrix with individual peak power
$\mathbf{Q}_r^{diag} = diag[ P_{r1},P_{r2}]$. Let
$\mathbf{Q}_r^{opt}$ denote optimal covariance matrix with sum rate
maximizing magnitude and phase angle of $\rho$ derived above. Then,
we get the following result.

\begin{lemma}[Benefit of correlation]\label{th:ortho}
$\log \det \left( {{\bf I} + {\bf HQ}_r^{diag} {\bf H}^\dagger }
\right) \leq \log \det \left( {{\bf I} + {\bf HQ}_r^{opt} {\bf
H}^\dagger } \right)$ with equality if and only if $\mathbf
h_1^\dagger \mathbf h_2 = 0$. The increase in SNR by having
$\mathbf{Q}_r^{opt}$ is given by
\[
\Delta \textsf{SNR} = \left\{ {\begin{array}{*{20}c}
   {\frac{{| \mathbf{
h}_1^{\dag } \mathbf{h}_2 |^2 }} {{\det (\mathbf{H^{\dag }H})}},} & {|\rho|^{opt}<1}  \\
   {2\sqrt {P_{r1} P_{r2} } | {\mathbf{h}_1^\dag
\mathbf{h}_2 } |,} & {|\rho|^{opt}=1}  \\
 \end{array} } \right..
\]
\end{lemma}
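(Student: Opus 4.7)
The plan is to compute both sides of the inequality in closed form using Lemma~\ref{lemma:optimal_correlation} for the right-hand side and a direct determinant expansion for the left-hand side, and then simply subtract the log-arguments to read off $\Delta\textsf{SNR}$.

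First I would compute $\log\det(\mathbf{I}+\mathbf{H}\mathbf{Q}_r^{diag}\mathbf{H}^\dagger)$. Since $\mathbf{Q}_r^{diag}=\mathrm{diag}[P_{r1},P_{r2}]$ corresponds to the uncorrelated choice $\rho=0$ in the family of Lemma~\ref{lemma:optimal_correlation}, Sylvester's determinant identity gives
\begin{equation*}
\det\!\left(\mathbf{I}+\mathbf{H}\mathbf{Q}_r^{diag}\mathbf{H}^\dagger\right)=\det\!\left(\mathbf{I}+\mathbf{Q}_r^{diag}\mathbf{H}^\dagger\mathbf{H}\right)=1+P_{r1}\|\mathbf{h}_1\|^2+P_{r2}\|\mathbf{h}_2\|^2+P_{r1}P_{r2}\det(\mathbf{H}^\dagger\mathbf{H}),
\end{equation*}
after using $\det(\mathbf{H}^\dagger\mathbf{H})=\|\mathbf{h}_1\|^2\|\mathbf{h}_2\|^2-|\mathbf{h}_1^\dagger\mathbf{h}_2|^2$.

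Next I would quote the closed form of $R_{sum,MAC}^{max}=\log\det(\mathbf{I}+\mathbf{H}\mathbf{Q}_r^{opt}\mathbf{H}^\dagger)$ from Lemma~\ref{lemma:optimal_correlation} in each of the two regimes. Subtracting the two log-arguments yields, in the regime $|\rho|^{opt}<1$,
\begin{equation*}
\det(\mathbf{I}+\mathbf{H}\mathbf{Q}_r^{opt}\mathbf{H}^\dagger)-\det(\mathbf{I}+\mathbf{H}\mathbf{Q}_r^{diag}\mathbf{H}^\dagger)=\frac{|\mathbf{h}_1^\dagger\mathbf{h}_2|^2}{\det(\mathbf{H}^\dagger\mathbf{H})},
\end{equation*}
which is nonnegative and vanishes exactly when $\mathbf{h}_1^\dagger\mathbf{h}_2=0$; this gives both the monotonicity claim and the stated $\Delta\textsf{SNR}$ in the first case. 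The regime $|\rho|^{opt}=1$ is analogous: a direct subtraction gives $\Delta\textsf{SNR}=2\sqrt{P_{r1}P_{r2}}|\mathbf{h}_1^\dagger\mathbf{h}_2|$ (modulo the $P_{r1}P_{r2}\det(\mathbf{H}^\dagger\mathbf{H})$ term, which by the threshold condition for entering this regime is dominated by $\sqrt{P_{r1}P_{r2}}|\mathbf{h}_1^\dagger\mathbf{h}_2|$ and vanishes when $\mathbf{H}$ is rank-deficient, i.e.\ the parallel case that triggers $|\rho|^{opt}=1$).

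Finally I would wrap up the equality condition. Inequality $\geq$ is already implicit because $\mathbf{Q}_r^{diag}$ is a feasible choice in the maximization defining $\mathbf{Q}_r^{opt}$ via the optimization over $\rho$, so $\mathbf{Q}_r^{opt}$ can never be worse. Equality in the first case forces $\mathbf{h}_1^\dagger\mathbf{h}_2=0$ directly from the expression for $\Delta\textsf{SNR}$; in the second case, $|\rho|^{opt}=1$ combined with $\mathbf{h}_1^\dagger\mathbf{h}_2=0$ would force $\mathbf{h}_1=\mathbf{0}$ or $\mathbf{h}_2=\mathbf{0}$, which is a degenerate channel and recovers $\mathbf{Q}_r^{opt}=\mathbf{Q}_r^{diag}$ trivially. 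The only mildly delicate step is bookkeeping the two regimes of Lemma~\ref{lemma:optimal_correlation} consistently when stating the SNR gain; the algebra itself is routine once Sylvester's identity is invoked.
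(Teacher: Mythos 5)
Your proposal is correct, and its computational core --- expanding $\det(\mathbf{I}+\mathbf{H}\mathbf{Q}_r^{diag}\mathbf{H}^\dagger)$ via Sylvester's identity and subtracting it from the closed forms in Lemma~\ref{lemma:optimal_correlation} --- is exactly what the paper's one-line proof gestures at when it says the SNR increase ``can be directly calculated'' from that lemma. The one genuine difference is how the inequality itself is obtained: the paper argues that the sum-rate is a concave quadratic in $|\rho|$ that is monotonically increasing on $[0,|\rho|^{opt}]$ (facts established in the Appendix proof of Lemma~\ref{lemma:optimal_correlation}) and that $\mathbf{Q}_r^{diag}$ sits at the endpoint $|\rho|=0$, whereas you simply observe that $\rho=0$ is feasible in the maximization defining $\mathbf{Q}_r^{opt}$. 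Your route is shorter and needs nothing beyond the definition of a maximum; the paper's route additionally yields strict monotonicity of the gain in $|\rho|$ on the whole interval, which it reuses qualitatively in the surrounding discussion. Your observation about the $|\rho|^{opt}=1$ regime is also well taken and more careful than the paper: the literal difference of the two determinants there is $2\sqrt{P_{r1}P_{r2}}\,|\mathbf{h}_1^\dagger\mathbf{h}_2| - P_{r1}P_{r2}\det(\mathbf{H}^\dagger\mathbf{H})$, which equals the stated $\Delta\textsf{SNR}$ only when $\mathbf{H}$ is rank-deficient; in general it is merely bounded below by $\sqrt{P_{r1}P_{r2}}\,|\mathbf{h}_1^\dagger\mathbf{h}_2|$ via the threshold condition, which still suffices for the inequality and the equality characterization. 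The paper's proof silently skips this residual term, so your bookkeeping closes a small gap rather than opening one.
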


\begin{proof}: It is sufficient to show that the sum-rate is a quadratic and concave function of
$|\rho|$, and is monotonically increasing for $0 \leq |\rho| \leq
|\rho|^{opt}$. The function has its minimum at $|\rho|=0$ since
$|\rho|$ is non-negative. When the channel vectors are orthogonal,
the suboptimality vanishes since $|\rho|^{opt}=0$. The SNR increase
can be directly calculated using the result in Lemma
\ref{lemma:optimal_correlation}.
\end{proof}

When the channel column vectors $\mathbf h_1$ and $\mathbf h_2$ are
close to orthogonal, $\mathbf{Q}_r^{diag}$ is almost as good as
$\mathbf{Q}_r^{opt}$. However, when $\mathbf h_1$ and $\mathbf h_2$
are close to parallel, the sum rate by having $\mathbf{Q}_r^{opt}$
at relays shows considerable increase from that by having
$\mathbf{Q}_r^{diag}$. The gain coming from optimal correlation
becomes very large at low SNR. Fig. \ref{fig:Rsum} shows the
examples.

With common information coming from the source, we can introduce
correlation between relays, and they act as if they are in
cooperation. The resulting SIMO MAC behaves like a point-to-point
MIMO channel with per-antenna power constraint.

Here, we can see that there is a minimum required $R_c$ that needs to
be transmitted from the source to relays for achieving maximum
sum-rate in the second hop.

\begin{theorem}[Threshold of $R_c$]
In the SIMO MAC with common information, the threshold of $R_c$
above which a maximum sum-rate point can exist is characterized by
\[
R_c^{th} = \log \frac{{\det \left( {\mathbf{I} + \mathbf{HQ}_r^{opt}
\mathbf{H}^{\dag } } \right)}}
{{\det \left( {\mathbf{I}  + \mathbf{HQ}_r^{k} \mathbf{H}^{\dag } } \right)}} \hfill \\
\] where $\mathbf{Q}_r^{k}  =
diag[(1-k|\rho|^{opt})P_{r1},(1-k^{-1}|\rho|^{opt})P_{r2}]$ and $k =
\sqrt {\frac{{P_{r2} \left\| {\mathbf{h}_2 } \right\|^2  + P_{r1}
P_{r2} \det (\mathbf{H}^{\dag } \mathbf{H})}} {{P_{r1} \left\|
{\mathbf{h}_1 } \right\|^2  + P_{r1} P_{r2} \det (\mathbf{H}^{\dag }
\mathbf{H})}}} $.
\end{theorem}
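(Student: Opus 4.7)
The strategy is to translate the $R_c$-threshold question into a maximization of $R_1 + R_2$ along the sum-rate-optimal surface characterized in the previous theorem, exploit the convenient reparametrization $k$ of the power split, and then read off $R_c^{th}$ as the difference of the two log-determinants.

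First I would observe that any point on the boundary of $\mathcal{R}_{sub}(|\rho|^{opt})$ satisfies $R_1 + R_2 + R_c = \log\det(\mathbf{I} + \mathbf{HQ}_r^{opt}\mathbf{H}^\dag)$, so the minimum $R_c$ for which a maximum-sum-rate point still exists equals the sum-rate surplus $\log\det(\mathbf{I}+\mathbf{HQ}_r^{opt}\mathbf{H}^\dag) - \max(R_1+R_2)$. From (\ref{eqn:sub_region}) and the form of $\mathbf{Q}_r^p$, maximizing $R_1 + R_2$ reduces to maximizing $\log\det(\mathbf{I}+\mathbf{HQ}_r^p\mathbf{H}^\dag)$ over $\alpha \in [(|\rho|^{opt})^2,1]$, equivalently over diagonal $\mathbf{Q}_r^p=\mathrm{diag}[aP_{r1},bP_{r2}]$ with the coupling $(1-a)(1-b) = (|\rho|^{opt})^2$ (from eliminating $\alpha$ in the definitions of $a=1-\alpha$ and $b = 1 - (|\rho|^{opt})^2/\alpha$).

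Next I would reparametrize the constraint by setting $1-a = k|\rho|^{opt}$ and $1-b = k^{-1}|\rho|^{opt}$, which automatically enforces $(1-a)(1-b)=(|\rho|^{opt})^2$ and lets $k>0$ be the single free variable. Using the Sylvester identity $\det(\mathbf{I}_N+\mathbf{HQ}_r^p\mathbf{H}^\dag) = \det(\mathbf{I}_2+\mathbf{Q}_r^p\mathbf{H}^\dag\mathbf{H})$ and the $2\times 2$ expansion, I would write the objective as an affine function of $k$ and $k^{-1}$ of the form
\[
f(k) = C_0 - k|\rho|^{opt}\bigl(P_{r1}\|\mathbf{h}_1\|^2 + P_{r1}P_{r2}\det(\mathbf{H}^\dag\mathbf{H})\bigr) - k^{-1}|\rho|^{opt}\bigl(P_{r2}\|\mathbf{h}_2\|^2 + P_{r1}P_{r2}\det(\mathbf{H}^\dag\mathbf{H})\bigr),
\]
where $C_0$ collects the $k$-independent terms. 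Setting $df/dk=0$ yields exactly the $k$ in the theorem statement, and convexity of $k + k^{-1}$ confirms this is the maximizer.

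Finally, since the optimizing $\mathbf{Q}_r^p$ is precisely $\mathbf{Q}_r^k = \mathrm{diag}[(1-k|\rho|^{opt})P_{r1},(1-k^{-1}|\rho|^{opt})P_{r2}]$, the minimum-$R_c$ point gives
\[
R_c^{th} = \log\det(\mathbf{I}+\mathbf{HQ}_r^{opt}\mathbf{H}^\dag) - \log\det(\mathbf{I}+\mathbf{HQ}_r^k\mathbf{H}^\dag),
\]
which is the claimed expression. The main obstacle is keeping the algebra clean when expanding the $2\times 2$ determinant and collecting terms in $k$ and $k^{-1}$; once those collect into the coefficients $P_{r1}\|\mathbf{h}_1\|^2+P_{r1}P_{r2}\det(\mathbf{H}^\dag\mathbf{H})$ and its symmetric counterpart, the optimization over $k$ is essentially a one-line AM–GM / first-order condition, and the verification that the resulting $(a,b)$ lies in the feasible set $\alpha\in[(|\rho|^{opt})^2,1]$ is immediate from $k>0$.
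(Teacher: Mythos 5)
Your proposal is correct and takes essentially the same route as the paper: the paper also writes $R_1+R_2$ on the maximum sum-rate surface as a single-variable function (of $\alpha$ rather than your $k=\alpha/|\rho|^{opt}$), maximizes it to get $\log\det(\mathbf{I}+\mathbf{HQ}_r^k\mathbf{H}^\dag)$, and subtracts from $R_{sum,MAC}^{max}$; your $k$, $k^{-1}$ reparametrization just makes the "straightforward to check" step explicit. One caveat: your final claim that feasibility is "immediate from $k>0$" is not right --- the stationary point $k=\sqrt{B/A}$ must also lie in $[|\rho|^{opt},1/|\rho|^{opt}]$ for $\mathbf{Q}_r^k$ to have nonnegative entries and for $\alpha=k|\rho|^{opt}$ to stay in $[(|\rho|^{opt})^2,1]$, which can fail for sufficiently asymmetric $P_{r1},P_{r2}$; the paper's proof silently makes the same assumption.
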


\begin{proof} In the maximum sum-rate subregion in (\ref{eqn:sub_region}),
after evaluating $\det(\cdot)$ operation, we get the expression for
$R_1+R_2$ given by
\begin{eqnarray}
\begin{gathered}
R_1  + R_2  \leq  \hfill \\
\log \left( {1 + \overline \alpha  P_{r1} \left\| {\mathbf{h}_1 }
\right\|^2  + \left( {1 - \frac{{\left| \rho \right|^2 }} {\alpha }}
\right) P_{r2} \left\| {\mathbf{h}_2 } \right\|^2  + \left( {1 -
\alpha  + \left| \rho  \right|^2  - \frac{{\left| \rho \right|^2 }}
{\alpha }} \right) P_{r1} P_{r2} \det (\mathbf{H}^\mathbf{\dag }
\mathbf{H})} \right). \nonumber
\end{gathered}
\end{eqnarray}
It is straightforward to check that the maximum of $R_1+R_2$ is
given by
\begin{eqnarray}
\begin{gathered}
\mathop {\max }\limits_\alpha (R_1 + R_2) = \log \det (\mathbf{I +
HQ}_r^k \mathbf{H}^{\dag } ) \nonumber
\end{gathered}
\end{eqnarray}
with $\mathbf{Q}_r^{k}$ given in the theorem statement. Finally,
$R_c^{th}=R_{sum,MAC}^{max}-\mathop {\max }\limits_\alpha (R_1 +
R_2) $ results in the minimum possible value of $R_c$ while staying
in the maximum sum-rate subregion.
\end{proof}

\subsection{SISO MAC versus SIMO MAC}
Optimal signaling at DF relays depends on the channel condition. For
a SISO MAC with a single antenna at the destination ($N=1$), the
sum-rate is maximized by $\alpha = \beta = |\gamma|=1$ so that $R_1
= R_2 = 0$ regardless of the channel and power constraints. This is
the case when all the power is allocated to the common signal at
both relays and they add up coherently at the destination. It is
desired for the source to transmit as much common information to
relays as possible, and this strategy maximizes the
source-to-destination sum-rate. Even when $N \geq 2$, if ${\bf h}_1$
and ${\bf h}_2$ are close to parallel in the sense that
$\det(\bf{H^\dagger H})$ is small and $|\rho|^{opt}=1$, fully
correlated relay signals are still optimal.

 In contrast, with multiple antennas at the destination ($N \geq 2$) and $|\rho|^{opt} <1$,
sum-rate maximizing $\alpha$ and $\beta$ depend on both the channel
matrix $\mathbf H$ and relay power constraints. Any combination of
power allocation factors at relays such that $|\gamma|\sqrt{\alpha
 \beta} = \frac{{|\mathbf {h}_1^\dagger \mathbf
{h}_2|}}{{\sqrt {P_{r1} P_{r2}}\det (\bf{H^\dagger H})}}<1$ together
with optimal rotation angle $\angle\gamma=\angle
\mathbf{h}_1^\dagger \mathbf{h}_2$ can maximize the sum rate of a
SIMO MAC. In this case, the source needs to transmit just the right
amount of common information so that signals at two relays are
optimally correlated and the source-to-destination sum-rate is
maximized.

\subsection{The First Hop: MISO BC with Common Information}
In \cite{ScheinThesis}, the first hop is a degraded scalar BC where
one relay with higher SNR can decode both its intend signal and the
signal for the other relay by doing superposition coding and
successive interference cancellation. In this case, correlation
between relay input signals are naturally introduced. In contrast,
our first hop is a non-degraded vector broadcast channel that makes
it possible to send private signals, each decodable by one of the
relays, as well as a common signal decodable by both relays. For
this class of channels, the three dimensional capacity region is
not known, but a good achievable region combining dirty paper coding
(DPC)\cite{costa:83} and superposition was studied in
\cite{JindalGoldsmith:ISIT04,WeingartenSteinbergShamai:ISIT06,WajcerShamaiWiesel:ITW06}
which we also use here.

In this scheme, the transmitting signal ${\bf x}_s$ is a
superposition of three independent signals ${\bf x}_1$, ${\bf x}_2$
and  ${\bf x}_c$, i.e., ${\bf x}_s = {\bf x}_1+{\bf x}_2+{\bf x}_c$,
where ${\bf x}_1, {\bf x}_2, {\bf x}_c \in { \mathbb{C} }^{{\rm N}
\times {\rm 1}}$ denote the signals intended for relay 1, for relay
2 and for both relays, i.e., the common message, respectively. We
assume Gaussian signaling for all signals. Input covariance matrix
is $\mathbf{Q}_s = \mathbf{Q}_1+ \mathbf{Q}_2+ \mathbf{Q}_c$, where
$\mathbf{Q}_j = \mathbb{E}[ {\mathbf{x}_j \mathbf{x}_j^\dagger } ]$,
$j \in \{1,2,c\}$.

Common information is decoded at both relays before decoding private
messages. Private messages are encoded using dirty paper coding,
i.e., the private message for relay 1 is first encoded as
$\mathbf{x}_1$, and the private message for relay 2 is then encoded
as $\mathbf{x}_2$ using $\mathbf{x}_1$ as side information so that
$\mathbf{x}_2$ can be decoded at relay 2 without interference from
$\mathbf{x}_1$. The encoding order can be reversed. With this
scheme, an achievable rate region is given by
\begin{eqnarray}
\mathcal{R}_{BC} \left( P_s \right) = Co\left( {\bigcup\limits_{\pi
,\mathbf{Q}_s : \text{tr}\left( {\mathbf{Q}_s } \right) \leq P_s}
{\mathcal{R}\left( {\pi ,\mathbf{Q}_s } \right)} } \right)
\end{eqnarray}
where ${\mathcal{R}\left( {\pi ,\mathbf{Q}_s } \right)} $ is the
achievable region for a given encoding order $\pi  \in \left\{ {\pi
_{12} ,\pi _{21} } \right\}$ and input covariance matrix
$\mathbf{Q}_s$, where $Co(\cdot)$ is the convex hull operator. If
$\mathbf {x}_2$ is encoded first, for example, then we have
\begin{eqnarray}
\mathcal{R}\left( {\pi _{12} ,\mathbf{Q}_s } \right) = \left\{
{\left( {R_1 ,R_2 ,R_c } \right)\left|
\setlength{\arraycolsep}{.1em} {\begin{array}{*{20}c}
   {R_1  \leq  \log\left(1 + {\mathbf{g}_1 \mathbf{Q}_1 \mathbf{g}_1^\dagger  } \right),} \hfill \\
   {R_2  \leq \log\left(1 + {\frac{{\mathbf{g}_2 \mathbf{Q}_2 \mathbf{g}_2^\dagger  }}
{{1 + \mathbf{g}_2 \mathbf{Q}_1 \mathbf{g}_2^\dagger  }}} \right),} \hfill \\
   {R_c  \leq  \mathop {\min }\limits_{i \in \left\{ {1,2} \right\}} \log\left(1 + {\frac{{\mathbf{g}_i \mathbf{Q}_c \mathbf{g}_i^\dagger  }}
{{1+\mathbf{g}_i \left( {\mathbf{Q}_1  + \mathbf{Q}_2 } \right)\mathbf{g}_i^\dagger  }}} \right)} \hfill \\
 \end{array} } \right.} \right\}.
\end{eqnarray}

Fig. \ref{fig:BC_region}. (a) depicts an example of an achievable
region where two row vectors of $\bf G$ are parallel and linearly
dependent so that $\bf G$ is ill-conditioned and rank-deficient.
Note that in the figures, $\varphi_g = \arccos \frac{{|
{\mathbf{g}_1 \mathbf{g}_2^{\dag } } |}} {{\left\| {\mathbf{g}_1 }
\right\|\left\| {\mathbf{g}_2 } \right\|}}$. One can see that the
maximum sum-rate surface is a plane connecting three points:
$(R_{1,BC}^{max},0,0)$, $(0,R_{2,BC}^{max},0)$, and
$(0,0,R_{c,BC}^{max})$ where $R_{j,BC}^{max}$ denotes the maximum
rate achieved by allocating all power to $\mathbf {x}_j$ so that
$\mathbf {Q}_j = \mathbf{Q}_s$:
\[
R_{i,BC}^{max} = \mathop {\max }\limits_{\mathbf{Q}_s} \log\left(1 +
{\mathbf{g}_i \mathbf{Q}_s \mathbf{g}_i^\dagger  } \right) = \log
\left( {1 + P_s \left\| {\mathbf{g}_i } \right\|^2}\right)
\]
\[
R_{c,BC}^{max} = {\mathop {\max }\limits_{\mathbf{Q}_s} \mathop
{\min }\limits_{i} \log(1+ {\mathbf{g}_i \mathbf{Q}_s
\mathbf{g}_i^\dagger })} = \mathop {\min }\limits_{i} \log \left( {1
+ P_s \left\| {\mathbf{g}_i } \right\|^2}\right)
\]
where $i \in \{1,2\}$. In this specific example, the channel is
symmetric in the sense that $\left\|\mathbf
{g}_1\right\|=\left\|\mathbf {g}_2\right\|$. In fact, this is the
only case where having common information does not incur sum-rate
penalty.

As the opposite extreme, Fig. \ref{fig:BC_region}. (b) depicts an
example of an achievable region where two row vectors of $\bf G$ are
orthogonal and linearly independent so that $\bf G$ is
well-conditioned and full-rank. In the symmetric example, the point
that achieves the maximum sum-rate
\[
R_{sum,BC}^{max}= \log (1+\frac{{P_s}}{{2}} \| \mathbf g_1 \|) +
\log (1+\frac{{P_s}}{{2}} \| \mathbf g_2 \|)
\]
is on the line $R_1=R_2$ and $R_c=0$. The point that has the minimum
sum-rate on the boundary is on the $R_c$ axis, i.e., $R_1=R_2=0$ and
\[
R_{c,BC}^{max} = {\mathop {\max }\limits_{\mathbf{Q}_s} \mathop
{\min }\limits_{i} \log(1+ {\mathbf{g}_i \mathbf{Q}_s
\mathbf{g}_i^\dagger })} = \log (1+\frac{{P_s}}{{2}} \| \mathbf g_1
\|) = \log (1+\frac{{P_s}}{{2}} \| \mathbf g_2 \|)
\]
where we can see the sum-rate penalty due to beamforming
inefficiency.

Note that the maximum sum-rate points of a MISO BC are always on the
$R_1-R_2$ plane. In fact, they correspond to the dominant face of the
two user BC achievable region by DPC without common information.
%Thus, one might guess that in general, having common information
%results in sum-rate penalty.

\subsection{Achievable Rate by DF}

For the GPRN drawn in Fig. \ref{f:systemmodel}, a triplet
$\left(R_1,R_2,R_c\right)$ is said to be achievable by DF if it
belongs to the intersection of the rate regions of the first hop
MISO BC and the second hop SIMO MAC. In this context, the maximum
rate by DF can be defined by
\begin{eqnarray}
R_{DF}^{max} =  \mathop {\max }\limits_{\left( {R_1 ,R_2 ,R_c }
\right) \in \mathcal{R}_{DF}} R_1+R_2+R_c \label{eqn:SumRate}
\end{eqnarray}
where $\mathcal{R}_{DF}=\mathcal{R}_{BC} \left( P_s \right) \cap
\mathcal{R}_{MAC} \left( {P_{r1} ,P_{r2} } \right)$. Fig.
\ref{fig:BCMAC}. shows examples of the rate regions of MISO BC and
SIMO MAC, the intersection of which is the achievable rate region by
DF.

If the source-to-relay link SNR is high enough, the second hop
becomes the bottleneck and determines the source-to-destination
sum-rate.

\begin{theorem}[Optimality condition of DF]
If there is a rate triple $\left(R_1,R_2,R_c\right) \in
\mathcal{R}_{sub} \left( {\left| \rho  \right|^{opt} } \right)$
which is included in the MISO BC region $\mathcal{R}_{BC} \left( P_s
\right)$, then $R_{DF}^{max} = R_{sum,MAC}^{max}$ meets the upper
bound and determines the capacity of the vector Gaussian parallel
relay network.
\end{theorem}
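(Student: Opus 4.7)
The plan is to sandwich the capacity between a direct DF achievability argument and the MAC cut-set converse, so that both equal $R_{sum,MAC}^{max}$. The result then falls out purely from the region-intersection characterizations already developed in the paper; no new calculation is needed.

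For achievability I would fix a triple $(R_1, R_2, R_c)$ lying simultaneously in $\mathcal{R}_{sub}(|\rho|^{opt})$ and $\mathcal{R}_{BC}(P_s)$, which exists by hypothesis. Since $\mathcal{R}_{sub}(|\rho|^{opt}) \subseteq \mathcal{R}_{MAC}(P_{r1}, P_{r2})$, this triple lies in $\mathcal{R}_{DF} = \mathcal{R}_{BC}(P_s) \cap \mathcal{R}_{MAC}(P_{r1}, P_{r2})$. The source delivers the two private messages and the common message using the DPC-plus-common-signal BC scheme at rates $(R_1, R_2, R_c)$, and the relays re-encode using the superposition construction of Theorem 2 with the correlation prescribed by Lemma \ref{lemma:optimal_correlation}. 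The total end-to-end rate is $R_1+R_2+R_c$, which by the defining property of $\mathcal{R}_{sub}(|\rho|^{opt})$ equals $R_{sum,MAC}^{max}$, giving $R_{DF}^{max} \geq R_{sum,MAC}^{max}$.

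For the converse I would invoke the MAC cut-set term, which gives $C \leq \max_{\mathbf{Q}_r} \log\det(\mathbf{I} + \mathbf{HQ}_r\mathbf{H}^\dagger)$; by Lemma \ref{lemma:optimal_correlation} this maximum equals $R_{sum,MAC}^{max}$, attained at $|\rho|^{opt}$. Since this same $\log\det$ term appears inside the minima of (\ref{eqn:correlation_bound1}) and (\ref{eqn:correlation_bound2}), the full Theorem 1 upper bound is automatically bounded above by $R_{sum,MAC}^{max}$. Combined with the trivial inequality $R_{DF}^{max} \leq C$, the chain $R_{sum,MAC}^{max} \leq R_{DF}^{max} \leq C \leq R_{sum,MAC}^{max}$ collapses to equality.

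The only conceptual subtlety sits in the hypothesis itself rather than in any of the inequalities. The relays can synthesize the sum-rate-optimal covariance $\mathbf{Q}_r$ of Lemma \ref{lemma:optimal_correlation} only if the source can feed them the required mix of private and common information; the assumed nonempty intersection $\mathcal{R}_{sub}(|\rho|^{opt}) \cap \mathcal{R}_{BC}(P_s)$ is precisely the condition that guarantees this, by ensuring that some admissible $(R_1, R_2, R_c)$ with $R_c \geq R_c^{th}$ survives the BC constraint. Once this existence is granted, the argument reduces to a short bookkeeping step over the region inclusions.
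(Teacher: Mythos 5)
Your proposal is correct and rests on the same sandwich skeleton as the paper's proof: achievability of a point of $\mathcal{R}_{sub}\left(|\rho|^{opt}\right)\cap\mathcal{R}_{BC}(P_s)$ gives $R_{DF}^{max}\geq R_{sum,MAC}^{max}$, while the cut-set bound gives $C\leq R_{sum,MAC}^{max}$. Where you genuinely depart from the paper is in the converse. The paper splits into the cases $|\rho|^{opt}=1$ and $|\rho|^{opt}<1$ and, invoking the hypothesis, verifies chains of inequalities such as $R_{sum,MAC}^{max}\leq R_{c,BC}^{max}\leq R_{i,BC}^{max}\leq R_{sum,BC}^{max}\leq R_{BC}^{upper}$ and a sum inequality comparing $2\log\det\left(\mathbf{I}+\mathbf{HQ}_r\mathbf{H}^\dagger\right)$ with the third- and fourth-cut expressions, in order to identify the MAC sum-rate term as the tightest of the three expressions in Theorem 1. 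You instead note that $\max_{|\rho|}\min[f(|\rho|),g(|\rho|)]\leq\max_{|\rho|}f(|\rho|)$, so that (\ref{eqn:correlation_bound1}) and (\ref{eqn:correlation_bound2}) are each bounded by $R_{sum,MAC}^{max}$ \emph{unconditionally}; the hypothesis is then consumed only by the achievability direction. This is cleaner, avoids the case split, and makes the logical role of the hypothesis transparent; what the paper's longer route buys is the additional (but, once the bounds meet, automatically implied) information that the other cut-set expressions are not the binding ones. One point both you and the paper gloss over: Lemma~\ref{lemma:optimal_correlation} maximizes over $\rho$ with the diagonal of $\mathbf{Q}_r$ pinned at $(P_{r1},P_{r2})$, so equating its value with $\max_{\mathbf{Q}_r}\log\det\left(\mathbf{I}+\mathbf{HQ}_r\mathbf{H}^\dagger\right)$ over all admissible relay covariances requires the easy monotonicity observation that transmitting at full relay power is without loss of optimality.
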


\begin{proof}
Let us first consider $|\rho|^{opt}=1$ case where there is a single
maximum sum-rate point in the SIMO MAC region. At $|\rho|=1$, the
term $\log \det \left( {\mathbf{I} + \mathbf {HQ}_r
\mathbf{H}^\dagger } \right)$ in
(\ref{eqn:correlation_bound1}) and (\ref{eqn:correlation_bound2})
is maximized
and is smaller than (\ref{eqn:broadcast_cut-set}) since the
following relationships are hold:
\[
R_{sum,MAC}^{max} \leq R_{c,BC}^{max} \leq R_{i,BC}^{max} (i \in
\{1,2\}) \leq R_{sum,BC}^{max} \leq R_{BC}^{upper}.
\]
We can achieve the tightest upper bound $R_{sum,MAC}^{max}$ by
allocating all relay power to the common signal:
$\alpha=\beta=|\gamma|=1$.

Now we consider $|\rho|^{opt}<1$ case where there exist more than
one maximum sum-rate point in the SIMO MAC region. It is
tedious but easy to verify that, at $|\rho|=|\rho|^{opt}$,
\[
R_{sum,MAC}^{max} \leq R_{sum,BC}^{max} \leq R_{BC}^{upper}
\]
\[
2\log \det \left( {\mathbf{I} + \mathbf {HQ}_r \mathbf{H}^\dagger }
\right) \leq \sum\limits_{i = 1,2} \log( {1 + P_s \left\|
{\mathbf{g}_i } \right\|^2 })+\sum\limits_{i = 1,2} \log\left( 1 +
(1-|\rho|^2) P_{ri} \left\| {\mathbf{h}_i } \right\|^2  \right).
\]
We can achieve the tightest upper bound $R_{sum,MAC}^{max}$ by
optimal power allocation at relays such that $|\rho|^{opt}=|\gamma|
\sqrt{\alpha \beta}$.
\end{proof}

With DF, the first and second hops are completely separated in the
sense that after finishing the first stage of transmission, relays
start a new stage of transmission by encoding the received
information again. Here, the right approach is first to figure out
what is optimal in the second hop, and then to check if the optimal
operating point, i.e., one of the maximum sum-rate points of the
SIMO MAC is supportable by the first hop. If the optimal point is
achievable, the network nodes would start communication by setting
parameters to satisfy optimality conditions. If none of the maximum
sum-rate points of the SIMO MAC is achievable, then the nodes would
try to find the operating point as close to the optimal as possible.

What if the relay-to-destination link SNRs are high enough so that the
first hop is the bottleneck? If one of the maximum sum-rate points
of the MISO BC on the $R_1-R_2$ plane is included in
$\mathcal{R}_{MAC}$, then $R_{DF}^{max} = R_{sum,BC}^{max}$. In this
case, the broadcast cut-set upper bound in
(\ref{eqn:broadcast_cut-set}) is tighter than the others. It turns out
that the upper bound and the achievable rate meet in some special
cases where the first hop row vectors are orthogonal as will be seen
in numerical results in Section VII. However, they do not meet in
general, which implies suboptimality of DF in the case. If the first
hop is the bottleneck, full decoding at relays gives too much
restriction, and AF and CF schemes can do better.

\subsection{Symmetric Channels}
In this subsection, we narrow down our attention to the symmetric
channels: $\left\| {\mathbf{g}_1 } \right\| = \left\| {\mathbf{g}_2
} \right\|$, $\left\| {\mathbf{h}_1 } \right\| = \left\|
{\mathbf{h}_2 } \right\|$, $P_{r1} = P_{r2}$, and $R_1 = R_2$. We
consider four examples illustrated in Fig. \ref{fig:Symmetric}.
where $R_p = R_1 + R_2$.

In Fig. \ref{fig:Symmetric}. (a), the straight line $\overline{AB}$
and the curved line $\overline{BC}$ constitute the surface of the MAC
achievable rate region. The curved line $\overline{EG}$ is the
surface of the BC achievable rate region. Here, we have the source power
constraint $P_s$ so large that the upper bound
(\ref{eqn:broadcast_cut-set}) and the second terms of $min$ in
(\ref{eqn:correlation_bound1}) and (\ref{eqn:correlation_bound2})
are loose. In this case, $\log \det \left( {\mathbf{I}  +
\mathbf{HQ}_r \mathbf{H}^\dag  } \right)$ is the active upper bound,
and the sum-rate constraint $R_1+R_2+R_c$ is the straight line that
goes through the points $A$, $B$, and $D$. The straight line
$\overline{AB}$ is the maximum sum-rate subregion of the MAC, and $F$ is
the crossing point of the BC and MAC surfaces. All the points on the
line $\overline{FB}$ achieve the capacity of the vector Gaussian
parallel relay network as they are in the BC and MAC achievable rate
regions and meet the sum-rate upper bound.

Other things being equal, the BC rate region with a smaller power
constraint is drawn in Fig. \ref{fig:Symmetric}. (b). In this
example, the crossing point $F$ coincides with the point $B$. Thus,
there exists a single capacity achieving point at $B=F$. In Fig.
\ref{fig:Symmetric}. (c), the BC rate region gets even smaller, and
the DF maximum sum-rate point $F$ does not meet the sum-rate upper
bound. Thus, in this case, DF does not achieve the capacity of the
vector Gaussian parallel relay network. Finally, Fig.
\ref{fig:Symmetric}. (d) illustrates the case where the source power
constraint $P_s$ is so small that $\log \det \left( {\mathbf{I} +
\mathbf{HQ}_r \mathbf{H}^\dag  } \right)$ in the upper bound
expression is not active anymore.

\section{Amplify-and-Forward (AF)}
We have seen that if the relay-to-destination link SNR is high enough, the first
hop is the bottleneck and DF does not satisfy the
optimality condition that
%[syc] "requires" seems to be needed.
requires
at least one maximum sum-rate point of
SIMO MAC should be inside MISO BC achievable region. Instead of
requiring signals with low SNR to be
decoded at relays, it would be better for relays to just forward their
received signals to the destination so that the benefit of high SNR
in the second hop is maximally utilized.

\subsection{Achievable Rate of AF}
The received signal at relays can
be expressed in vector notation by
\[
\mathbf{y}_r = \mathbf{Gx}_s  + \mathbf{n}_r.
\]
With AF, the relays just amplify their received signals before
forwarding them to the destination, and the transmit signal vector of the relays is
given by
\[
\mathbf{x}_r  = \mathbf{Ay}_r
\]
where $\mathbf{A} = diag[ae^{j\phi},b]$. The amplification factors
should be in the range
\[
0 \leq a \leq a^{peak}=\sqrt {\frac{{P_{r1} }} {{1 + \mathbf{g}_1
\mathbf{Q}_s \mathbf{g}_1^{\dag } }}} ,\ \ 0 \leq b \leq
b^{peak}=\sqrt {\frac{{P_{r2} }} {{1 + \mathbf{g}_2 \mathbf{Q}_s
\mathbf{g}_2^{\dag } }}}
\]
because of the power constraints at the relays. The received signal at the
destination is given by
\[
\begin{gathered}
  \mathbf{y}_d = \mathbf{Hx}_r  + \mathbf{n}_d  = \mathbf{HAy}_r  + \mathbf{n}_d  \hfill \\
   = \mathbf{HAGx}_s  + \mathbf{HAn}_r  + \mathbf{n}_d  \hfill \\
   = \mathbf{HAGx}_s  + \mathbf{n}_e  \hfill \\
   = (ae^{j\phi } \mathbf{h}_1 \mathbf{g}_1  + b
\mathbf{h}_2 \mathbf{g}_2 )\mathbf{x}_s + (ae^{j\phi } n_{r1}
\mathbf{h}_1  + b n_{r2} \mathbf{h}_2 + \mathbf{n}_d )
\end{gathered}
\]
where ${\bf n}_e$ denotes the total effective noise. Since the noises
added at relay receivers are also amplified and forwarded through
the channel, the noise vector ${\bf n}_e$ is spatially non-white in
general. Noise covariance matrix is symmetric and can be decomposed
as
\[
\mathbf{K} = \mathbb{E} [\mathbf{n}_e \mathbf{n}_e^{\dag }] =
\mathbf{I + HAA^\dag H^\dag} = \mathbf{U^\dag \Lambda U}
\]
where $\Lambda$ is a diagonal matrix with eigenvalues of $\bf K$ as
its diagonal elements, and $\bf U$ is a unitary matrix. Then, the channel can be
transformed into an equivalent white noise channel given by
\[
  \mathbf{\Lambda }^{ - 1/2} \mathbf{Uy}_d  = \mathbf{\Lambda }^{ - 1/2} \mathbf{UHAGx}_s  + \mathbf{\Lambda }^{ - 1/2} \mathbf{Un}_e  \hfill \\
   = \mathbf{Fx}_s  + \mathbf{n}_w
\]
where $\bf F$ and ${\bf n}_w$ denote the effective source-to-destination
channel matrix and the white noise vector, respectively. For a fixed
$\bf A$, this is a point-to-point MIMO channel whose maximum rank
is $2$ limited by the number of relays.
If all the input signal is Gaussian, we get the expression for an
achievable rate by AF given by
\begin{eqnarray}
\begin{gathered}
R_{AF}  \leq I(\mathbf{x}_s ;\mathbf{y}_d ) \hfill \\
=\log \frac{{\det \left( {\mathbf{I} + \mathbf{HAA}^{\dag }
\mathbf{H}^{\dag }  + \mathbf{HAGQ}_s \mathbf{G}^{\dag }
\mathbf{A}^{\dag } \mathbf{H}^{\dag } } \right)}}
{{\det \left( {\mathbf{I} + \mathbf{HAA}^{\dag } \mathbf{H}^{\dag } } \right)}} \hfill \\
   = \log \det \left( {\mathbf{I} + \mathbf{\Lambda }^{ - 1/2} \mathbf{UHAGQ}_s \mathbf{G}^{\dag } \mathbf{A}^{\dag } \mathbf{H}^{\dag } \mathbf{U}^{\dag } \mathbf{\Lambda }^{ - 1/2} } \right) \hfill \\
   = \log \det \left( {\mathbf{I} + \mathbf{FQ}_s \mathbf{F}^{\dag } }
   \right). \hfill \label{eqn:AF_rate}
\end{gathered}
\end{eqnarray}
In order to get the maximum achievable rate, the source signal
covariance matrix $\mathbf{Q}_s$ and the relay amplification matrix
$\bf A$ need to be jointly optimized.

\begin{theorem}[Asymptotic Optimality of AF]
If $\bf H$ is full rank, AF is asymptotically optimal in the high relay power limit
in the sense that
\[
\mathop {\lim }\limits_{P_{r1}  = P_{r2}  \to \infty } (\mathop
{\max }\limits_{\mathbf{Q}_s } \log \det \left( {\mathbf{I}  +
\mathbf{GQ}_s \mathbf{G}^{\dag } } \right) -\mathop {\max
}\limits_{\mathbf{Q}_s, \mathbf{A} }R_{AF} )= 0.
\]
\end{theorem}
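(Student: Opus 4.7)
The plan is to exhibit a specific feasible pair $(\mathbf{Q}_s,\mathbf{A})$ whose AF rate converges to the source-cut upper bound $\max_{\mathbf{Q}_s}\log\det(\mathbf{I}+\mathbf{GQ}_s\mathbf{G}^{\dag})$ as $P_{r1}=P_{r2}=P\to\infty$, and then squeeze. The upper bound does not depend on $P$ and, via the cut-set inequality (\ref{eqn:broadcast_cut-set}), already dominates $\max_{\mathbf{Q}_s,\mathbf{A}}R_{AF}$ for every $P$, so producing a matching asymptotic lower bound is enough.

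Concretely, let $\mathbf{Q}_s^{\star}$ be the (finite, $P$-independent) maximizer of the source-cut bound; use it at the source and set $\mathbf{A}=\mathrm{diag}(a^{\mathrm{peak}},b^{\mathrm{peak}})$ with $\phi=0$, so that the relays operate at their peak. Putting $\mathbf{M}:=\mathbf{HA}$, Sylvester's identity together with the push-through identity $\mathbf{A}^{\dag}\mathbf{H}^{\dag}(\mathbf{I}+\mathbf{HAA}^{\dag}\mathbf{H}^{\dag})^{-1}\mathbf{HA}=(\mathbf{I}+\mathbf{M}^{\dag}\mathbf{M})^{-1}\mathbf{M}^{\dag}\mathbf{M}$ transforms (\ref{eqn:AF_rate}) into
\[
R_{AF}=\log\det\!\left(\mathbf{I}+\mathbf{G}^{\dag}(\mathbf{I}+\mathbf{M}^{\dag}\mathbf{M})^{-1}\mathbf{M}^{\dag}\mathbf{M}\,\mathbf{G}\mathbf{Q}_s^{\star}\right).
\]
Thus the whole problem reduces to showing $(\mathbf{I}+\mathbf{M}^{\dag}\mathbf{M})^{-1}\mathbf{M}^{\dag}\mathbf{M}\to\mathbf{I}_{2}$ in operator norm as $P\to\infty$; continuity of $\log\det$ then delivers $R_{AF}\to\log\det(\mathbf{I}+\mathbf{G}\mathbf{Q}_s^{\star}\mathbf{G}^{\dag})$, matching the upper bound.

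To establish the spectral convergence, examine $\mathbf{M}^{\dag}\mathbf{M}=\mathbf{A}^{\dag}(\mathbf{H}^{\dag}\mathbf{H})\mathbf{A}$. Because $\mathbf{Q}_s^{\star}$ is held fixed, both peak amplifications are $\Theta(\sqrt{P})$, so writing $\mathbf{A}=\sqrt{P}\,\mathbf{C}$ with $\mathbf{C}$ diagonal and bounded away from $0$ and $\infty$ gives $\mathbf{M}^{\dag}\mathbf{M}=P\,\mathbf{C}(\mathbf{H}^{\dag}\mathbf{H})\mathbf{C}$. Here the full-rank hypothesis on $\mathbf{H}$ enters: $\mathbf{H}^{\dag}\mathbf{H}$ is $2\times 2$ and strictly positive definite, so $\mathbf{C}(\mathbf{H}^{\dag}\mathbf{H})\mathbf{C}$ has a strictly positive minimum eigenvalue, and every eigenvalue of $\mathbf{M}^{\dag}\mathbf{M}$ is $\Theta(P)\to\infty$. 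It follows at once that $(\mathbf{I}+\mathbf{M}^{\dag}\mathbf{M})^{-1}\mathbf{M}^{\dag}\mathbf{M}\to\mathbf{I}_2$, which finishes the argument.

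The main obstacle is exactly the step asserting that \emph{every} eigenvalue of $\mathbf{M}^{\dag}\mathbf{M}$ diverges, not just its operator norm. This is where the hypothesis that $\mathbf{H}$ is full rank is indispensable: without it, one has $\lambda_{\min}(\mathbf{H}^{\dag}\mathbf{H})=0$, so one eigenvalue of $\mathbf{M}^{\dag}\mathbf{M}$ would stay bounded irrespective of $P$, the residual $(\mathbf{I}+\mathbf{M}^{\dag}\mathbf{M})^{-1}\mathbf{M}^{\dag}\mathbf{M}$ would have an eigenvalue bounded away from $1$, and $R_{AF}$ would saturate strictly below the cut-set bound. A minor technical item to verify is that $\mathbf{Q}_s^{\star}$ really can be held fixed while the limit is taken; this is clear since its definition involves only $\mathbf{G}$ and $P_s$, not the relay powers, which is what guarantees the $\Theta(\sqrt P)$ scaling of $a^{\mathrm{peak}}$ and $b^{\mathrm{peak}}$.
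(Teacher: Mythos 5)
Your proposal is correct and follows essentially the same route as the paper's proof: fix the source-cut-optimal $\mathbf{Q}_s$, run the relays at peak gain, rewrite $R_{AF}$ via the matrix-inversion/push-through identity, and show the residual $\left( \mathbf{I}+\mathbf{A}^{\dag}\mathbf{H}^{\dag}\mathbf{H}\mathbf{A} \right)^{-1}$ vanishes because full-rank $\mathbf{H}$ forces both eigenvalues to diverge, then squeeze against the source-cut upper bound. The only cosmetic difference is that the paper verifies the vanishing from the explicit $2\times 2$ inverse (with $\det(\mathbf{H}^{\dag}\mathbf{H})$ in the denominator) while you argue through $\lambda_{\min}(\mathbf{H}^{\dag}\mathbf{H})>0$.
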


$\\ $
\begin{proof} By rearranging (\ref{eqn:AF_rate}), we get
\[
\begin{gathered}
  R_{AF}  = \log \det \left( \mathbf{I + HAGQ}_s \mathbf{G^\dag  A^\dag  H^\dag  \left( {I + HAA^\dag  H^\dag  } \right)^{ - 1} } \right) \hfill \\
   = \log \det \left( \mathbf{I + GQ}_s \mathbf{G^\dag  A^\dag  H^\dag  \left( {I + HAA^\dag  H^\dag  } \right)^{ - 1} HA} \right) \hfill \\
   = \log \det \left( \mathbf{I + GQ}_s \mathbf{G^\dag  \left( {I - \left( {I + A^\dag  H^\dag  HA} \right)^{ - 1} } \right)} \right) \hfill \\
\end{gathered}
\]
where we use matrix inversion lemma. We first set $a$ and $b$ to
peak values under relay power constraints, assume $\phi$ is
optimally chosen, and let $R_{AF}^{peak}$ denote the corresponding
rate. By showing that if $\det (\mathbf{H}^\mathbf{\dag }
\mathbf{H})>0$, $\left( \mathbf{I+A^\dag H^\dag HA} \right)^{ - 1}
\to \mathbf{0}$ as $a \to \infty$ and $b \to \infty$ where
\[
\left( \mathbf{I+A^\dag H^\dag HA} \right)^{ - 1}  = \frac{{\left[
{\begin{array}{*{20}c}
   {1 + b^2 \left\| {\mathbf{h}_2 } \right\|^2 } & { - abe^{-j\phi}\mathbf{h}_1^\mathbf{\dag } \mathbf{h}_2 }  \\
   { - abe^{j\phi}\mathbf{h}_2^\mathbf{\dag } \mathbf{h}_1 } & {1 + a^2 \left\| {\mathbf{h}_1 } \right\|^2 }  \\
 \end{array} } \right]}}
{{1 + a^2 \left\| {\mathbf{h}_1 } \right\|^2  + b^2 \left\|
{\mathbf{h}_2 } \right\|^2  + a^2 b^2 \det (\mathbf{H}^\mathbf{\dag
} \mathbf{H})}},
\]
we get the following result
\[
\mathop {\lim }\limits_{P_{r1}  = P_{r2}  \to \infty } \left( \log
\det \left( {\mathbf{I}  + \mathbf{GQ}_s \mathbf{G}^\mathbf{\dag } }
\right) -R_{AF}^{peak} \right)= 0
\]
Then, the theorem statement naturally follows since by definition,
for a fixed ${\bf Q}_s$,
\[
R_{AF}^{peak}  \leq \mathop {\max }\limits_{\mathbf{A} }R_{AF}.
\]

\end{proof}

\subsection{Iterative Optimization Algorithm for AF} We can maximize
$R_{AF}$ by an iterative algorithm as follows. For optimizing
$\mathbf{Q}_s$, we apply singular value decomposition (SVD) to $\bf
F$ and waterfilling over two parallel scalar channels with non-zero
singular values \cite{Telatar:1999}. First, we define the covariance
matrix of ${\bf Gx}_s$ and the rate function of $(a,b)$ given by
\[
\mathbf{Q}^e = \mathbb{E}[{\bf G x}_s {\bf x}_s^\dagger {\bf
G}^\dagger] = \mathbf{GQ_sG^\dag } = \left[ {\begin{array}{*{20}c}
   {q_{11} } & {q_{12} }  \\
   {q_{21} } & {q_{22} }  \\
 \end{array} } \right],
\]
\[
\begin{gathered}
R(a,b) = \hfill \\
\log \left( {1 + \frac{{a^2 b^2 \det (\mathbf{H}^{\dag }
\mathbf{H})(\text{tr}(\mathbf{Q}^e ) + \det (\mathbf{Q}^e )) + a^2
\left\| {\mathbf{h}_1 } \right\|^2 q_{11}  + b^2 \left\|
{\mathbf{h}_2 } \right\|^2 q_{22}  + 2ae^{-j\phi}b
{\mathbf{h}_1^{\dag } \mathbf{h}_2 q_{21} } }} {{1 + a^2 \left\|
{\mathbf{h}_1 } \right\|^2  + b^2 \left\| {\mathbf{h}_2 } \right\|^2
+ a^2 b^2 \det (\mathbf{H}^{\dag } \mathbf{H})}}} \right).
\end{gathered}
\]

$\ $\\
Next, we do the following steps:

\textbf{Step 0)} Set $\mathbf{Q}_s=diag[P_s/2,P_s/2]$.\\

\textbf{Step 1)} Calculate $\mathbf{Q}^e =\mathbf{GQ}_s \mathbf{G}^\dag$.\\
${}\ \ \ \ \ \ \ \ \ \ \ \ \ \ \ \ $Set
$a^{peak}=\sqrt{\frac{{P_{r1}}}{{1+q_{11}}}}$,
$b^{peak}=\sqrt{\frac{{P_{r2}}}{{1+q_{22}}}}$, and $\phi = \angle
(\mathbf{h}_1^{\dag } \mathbf{h}_2 q_{21}
)$.\\

\textbf{Step 2)} Calculate $\mathop {\max }\limits_a R(a,b^{peak})$
subject to $0 \leq a \leq a^{peak}$ by solving $\frac{\partial } {{\partial a}}R(a,b^{peak}) = 0$.\\
${}\ \ \ \ \ \ \ \ \ \ \ \ \ \ \ \ $Calculate $\mathop {\max
}\limits_b R(a^{peak},b)$ subject to $0 \leq b \leq b^{peak}$
by solving $\frac{\partial } {{\partial b}}R(a^{peak},b) = 0$.\\
${}\ \ \ \ \ \ \ \ \ $Between $(a^{opt},b^{peak})$ and
$(a^{peak},b^{opt})$, choose the one that results in a higher
rate.\\

\textbf{Step 3)} Set ${\bf A} = diag[ae^{j\phi},b]$ using the values
obtained above.\\
${}\ \ \ \ \ \ \ \ \ \ \ \ \ \ \ \ $Calculate $\mathbf{K}$ and do
eigenvalue decomposition to get $\bf U$ and $\Lambda$ such that
\[
\mathbf{K = I + HAA^\dag H^\dag = U^\dag \Lambda U}.
\]
${}\ \ \ \ \ \ \ \ \ \ \ \ \ \ \ \ $Calculate ${\bf F} =
\mathbf{\Lambda }^{ -
1/2} \mathbf{UHAG}$.\\

\textbf{Step 4)} Optimize $\mathbf{Q}_s$ via singular value
decomposition of $\bf F$ and waterfilling \cite{Telatar:1999}.\\
${}\ \ \ \ \ \ \ \ \ \ \ \ \ \ \ \ $Update $\mathbf{Q}_s$ and
calculate
$R_{AF}= \log \det \left( {\mathbf{I + FQ}_s \mathbf{F^\dag} }\right)$.\\

\textbf{Step 5)} Terminate if $R_{AF}$ already converged to a
certain value.\\
${}\ \ \ \ \ \ \ \ \ \ \ \ \ \ \ \ $Otherwise, go to Step 1.

Before closing the section, it is worth noting that full power
transmission sometimes hurts. This is the case when one of the
relays has received a signal with very low SNR so that transmission
at full power degrades the received SNR at the destination. Fig.
\ref{fig:rate_b} shows examples. For each of the two different
values of $\left\|{\bf g}_2\right\|$ shown in the figure, we run the
algorithm steps from 0 to 4 just once, and draw the resulting
$R(a^{peak},b)$ versus $b$ curves. For the curve in Fiq.
\ref{fig:rate_b}. (a), the maximum of the curve indicated by a
circle happens at a point of $b$ above $b^{peak}$ indicated by
vertical line. It means that the received SNR at the second relay is
still high, and full power transmission helps. However, for the
curve in Fiq. \ref{fig:rate_b}. (b), as the maximum happens at a
point less than $b^{peak}$, the received SNR at the second relay is
too low for transmission at full power to be optimal.

\section{Compress-and-Forward (CF)}
For CF, relays compress or quantize their received signals,
re-encode, and forward them to the destination. At the destination,
the decoder tries to decode the received signal to recover relay
input signals, and finally decompress the relay signals to recover
the information transmitted from the source. CF achievable rates
were first derived by applying Wyner-Ziv source coding
\cite{WynerZiv} to a classical one relay model in
\cite{CoverElGamal:79}, their extension to multiple relay models in
\cite{KramerGastparGupta:2005}. The derivation for a special case of
the Gaussian parallel relay network with $N=M=1$ can be found in
\cite{ScheinThesis} and \cite{XueSandhu}. The extension to our
network model in terms of mutual information is straightforward as
follows
\begin{eqnarray}
\begin{gathered}
R_{CF}  \leq I(\mathbf{x}_s ;\hat{y}_{r1}, \hat{y}_{r2}) \hfill \\
I(\hat{y}_{r1} ; y_{r1}| \hat{y}_{r2})  \leq I(x_{r1} ;\mathbf{y}_d| x_{r2}) \hfill \\
I(\hat{y}_{r2} ; y_{r2}| \hat{y}_{r1})  \leq I(x_{r2} ;\mathbf{y}_d| x_{r1}) \hfill \\
I(\hat{y}_{r1}, \hat{y}_{r2} ; y_{r1}, y_{r2})  \leq I(x_{r1}, x_{r2} ;\mathbf{y}_d). \hfill \\
\label{eqn:CF_rate}
\end{gathered}
\end{eqnarray}

By introducing quantization noise, we have a compressed version of
relay signals given by
\begin{eqnarray}
\begin{gathered}
\hat y_{r1}=y_{r1} + \sqrt a \hat n_{r1} \hfill \\
\hat y_{r2}=y_{r2} + \sqrt b \hat n_{r2} \hfill \\
\end{gathered}
\end{eqnarray} where $a, b > 0$. Assuming the quantization noise and
all input signals are Gaussian distributed, we can evaluate the
mutual information expressions to have the following result.

\begin{theorem} With CF, the following rate is achievable in the vector Gaussian parallel relay network.
\begin{eqnarray}
\begin{gathered}
R_{CF} = \mathop {\max }\limits_{\mathbf{Q}_s, \mathbf{A} } \log
\frac{{\det \left( {\mathbf{I} + \mathbf{A} + \mathbf{GQ}_s
\mathbf{G}^{\dag }} \right) }} {{ (1+a)(1+b)}}, \hfill \\
\mbox{subject to } \log \frac{{\det \left( {\mathbf{I} + \mathbf{A}
+ \mathbf{GQ}_s \mathbf{G}^{\dag }} \right) }} {{ a(
1+b+\mathbf{g}_2 \mathbf{Q}_s \mathbf{g}_2^\dag )}} \leq \log (1 +
P_{r1} \left\| {\mathbf{h}_1 } \right\|^2 ),\hfill \\
\mbox{\ \ \ \ \ \ \ \ \ \ \ \ }\log \frac{{\det \left( {\mathbf{I} +
\mathbf{A} + \mathbf{GQ}_s \mathbf{G}^{\dag }} \right) }} {{ b(
1+a+\mathbf{g}_1 \mathbf{Q}_s \mathbf{g}_1^\dag )}} \leq \log (1 +
P_{r2} \left\| {\mathbf{h}_2 } \right\|^2 ),\hfill \\
\mbox{\ \ \ \ \ \ \ \ \ \ \ \ }\log \frac{{\det \left( {\mathbf{I} +
\mathbf{A} + \mathbf{GQ}_s \mathbf{G}^{\dag }} \right) }} {{ ab }}
\leq \log \det \left( {\mathbf{I} + \mathbf{HQ}_r \mathbf{H}^{\dag
}} \right),\hfill \\
\mbox{\ \ \ \ \ \ \ \ \ \ \ \ \ \ \ \ \ \ \ \ \ \ \ \ \ \ \ \ \ \ \ \ \ \ \ \ } {\rm tr}\left( {{\bf Q}_s } \right) \leq P_s, \hfill \\
\mbox{\ \ \ \ \ \ \ \ \ \ \ \ \ \ \ \ \ \ \ \ \ \ \ \ \ \ \ \ \ \ \ \ \ \ \ \ \ \ \ \ } a,b \geq 0 \hfill \\
\end{gathered}
\end{eqnarray} where $\mathbf{Q}_s = \mathbb{E}[\mathbf{x}_s \mathbf{x}_s^\dag]$, $\mathbf{Q}_r = \mathbb{E}[\mathbf{x}_r \mathbf{x}_r^\dag] =
diag[P_{r1},P_{r2}]$, and $\mathbf{A} = diag[a,b]$.

\end{theorem}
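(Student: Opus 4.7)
The plan is to specialize the general CF achievable rate region stated just above the theorem (equations in (\ref{eqn:CF_rate})), which is the standard multi-relay extension of Cover--El Gamal by Kramer--Gastpar--Gupta, to the vector Gaussian case by plugging in circularly symmetric complex Gaussian distributions for both the source input and the quantization noise. Specifically, I would take $\mathbf{x}_s \sim \mathcal{CN}(0,\mathbf{Q}_s)$, let $\hat{n}_{r1},\hat{n}_{r2}\sim\mathcal{CN}(0,1)$ be mutually independent and independent of everything else, define $\hat y_{r1}=y_{r1}+\sqrt{a}\,\hat n_{r1}$, $\hat y_{r2}=y_{r2}+\sqrt{b}\,\hat n_{r2}$, and have the relays use independent (uncorrelated) Gaussian codebooks so that $\mathbf{Q}_r=\operatorname{diag}[P_{r1},P_{r2}]$. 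The objective and the three Wyner--Ziv-type compression constraints then reduce to the four $\log\det$ expressions in the theorem, and the constraints $\operatorname{tr}(\mathbf{Q}_s)\le P_s$ and $a,b\ge 0$ encode the input power constraint and the requirement that the quantization noise be a well-defined positive-semidefinite covariance.

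The bulk of the proof is a sequence of routine differential-entropy evaluations. First, for the objective $I(\mathbf{x}_s;\hat y_{r1},\hat y_{r2})$, I would write the compressed vector as $\hat{\mathbf{y}}_r=\mathbf{G}\mathbf{x}_s+\mathbf{n}_r+\hat{\mathbf{n}}$ with total noise covariance $\mathbf{I}+\mathbf{A}$, giving $h(\hat{\mathbf{y}}_r)-h(\hat{\mathbf{y}}_r\mid\mathbf{x}_s)=\log\det(\mathbf{I}+\mathbf{A}+\mathbf{G}\mathbf{Q}_s\mathbf{G}^\dagger)-\log[(1+a)(1+b)]$. For the first compression constraint I would compute $I(\hat y_{r1};y_{r1}\mid\hat y_{r2})=h(\hat y_{r1}\mid\hat y_{r2})-h(\sqrt{a}\,\hat n_{r1})$, using the Markov structure $\hat y_{r1}\leftrightarrow y_{r1}\leftrightarrow\hat y_{r2}$ to replace the conditioning in the second term, and then expand $h(\hat y_{r1}\mid\hat y_{r2})=h(\hat{\mathbf{y}}_r)-h(\hat y_{r2})$ with $h(\hat y_{r2})=\log[\pi e(1+b+\mathbf{g}_2\mathbf{Q}_s\mathbf{g}_2^\dagger)]$; the right-hand side $I(x_{r1};\mathbf{y}_d\mid x_{r2})$ collapses to $\log(1+P_{r1}\|\mathbf{h}_1\|^2)$ once $\mathbf{h}_2x_{r2}$ is subtracted from $\mathbf{y}_d$, because the relay inputs are independent Gaussians. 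The second constraint is symmetric. The joint constraint is handled by $I(\hat y_{r1},\hat y_{r2};y_{r1},y_{r2})=h(\hat{\mathbf{y}}_r)-h(\hat{\mathbf{y}}_r\mid\mathbf{y}_r)=\log\det(\mathbf{I}+\mathbf{A}+\mathbf{G}\mathbf{Q}_s\mathbf{G}^\dagger)-\log(ab)$, together with $I(x_{r1},x_{r2};\mathbf{y}_d)=\log\det(\mathbf{I}+\mathbf{H}\mathbf{Q}_r\mathbf{H}^\dagger)$.

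After these substitutions the four mutual-information inequalities become exactly the four inequalities in the theorem statement, so taking the supremum over $(\mathbf{Q}_s,a,b)$ subject to $\operatorname{tr}(\mathbf{Q}_s)\le P_s$ and $a,b\ge 0$ yields the claimed achievable rate. The only non-bookkeeping step, and the one I would flag as the main subtlety, is the careful use of the Markov chains $\hat y_{r1}\leftrightarrow y_{r1}\leftrightarrow\hat y_{r2}$ and $(x_{r1},x_{r2},\mathbf{y}_d)\leftrightarrow\mathbf{x}_s\leftrightarrow(y_{r1},y_{r2})$ in the conditional-entropy reductions; in particular, the independence of $\hat n_{r1}$ from $\hat y_{r2}$ is what allows $h(\hat y_{r1}\mid y_{r1},\hat y_{r2})=h(\sqrt{a}\,\hat n_{r1})$ and similarly for the other compression term. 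Everything else is a mechanical determinant calculation, and no optimality claim over the input distribution is needed since the theorem asserts only achievability of this particular Gaussian-input, Gaussian-quantization instance of the general CF bound.
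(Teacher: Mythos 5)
Your proposal is correct and follows exactly the route the paper takes: the paper's own proof consists of the single remark that the result follows "by evaluating the mutual information expressions with the assumption that the input distributions are circularly symmetric complex Gaussian," and your entropy calculations (including the use of the independence of $\hat n_{r1}$ from $(y_{r1},\hat y_{r2})$ to get $h(\hat y_{r1}\mid y_{r1},\hat y_{r2})=\log(\pi e\,a)$, and the choice of independent relay codebooks giving $\mathbf{Q}_r=\operatorname{diag}[P_{r1},P_{r2}]$) are precisely the omitted details. No gap.
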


\begin{proof}
It is straightforward to show the theorem result by evaluating the
mutual information expressions with the assumption that the input
distributions are circularly symmetric complex Gaussian.
\end{proof}

Similar to the AF scheme, the CF achievable rate becomes close to the
upper bound as the relay power goes to infinity. We get the
following result for the CF achievable rate.
\begin{theorem}[Asymptotic Optimality of CF]
In the vector Gaussian parallel relay network, regardless of the
rank of $\bf H$, CF is asymptotically optimal in the high relay
power limit in the sense that
\[
\mathop {\lim }\limits_{P_{r1}  = P_{r2}  \to \infty } (\mathop
{\max }\limits_{\mathbf{Q}_s } \log \det \left( {\mathbf{I}  +
\mathbf{GQ}_s \mathbf{G}^{\dag } } \right) - R_{CF} )= 0.
\]
\end{theorem}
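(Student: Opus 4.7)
The plan is to exhibit a concrete choice of the source covariance $\mathbf{Q}_s$ and the Wyner--Ziv quantization levels $(a,b)$ under which $R_{CF}$ approaches the broadcast cut-set upper bound $\max_{\mathbf{Q}_s}\log\det(\mathbf{I}+\mathbf{GQ}_s\mathbf{G}^\dag)$ from below as $P_{r1}=P_{r2}\triangleq P_r\to\infty$. First I would fix $\mathbf{Q}_s=\mathbf{Q}_s^\star$, the maximizer of (\ref{eqn:broadcast_cut-set}), and pick a symmetric quantization schedule $a=b=\epsilon(P_r)$ with $\epsilon\to 0$ to be chosen. Under this choice the CF objective collapses to
\[
R_{CF}(\epsilon)=\log\frac{\det\bigl((1+\epsilon)\mathbf{I}+\mathbf{GQ}_s^\star\mathbf{G}^\dag\bigr)}{(1+\epsilon)^2},
\]
which by continuity tends to $\log\det(\mathbf{I}+\mathbf{GQ}_s^\star\mathbf{G}^\dag)$ as $\epsilon\to 0$. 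The problem therefore reduces to showing that the three Wyner--Ziv compression constraints in the theorem statement can be satisfied simultaneously while driving $\epsilon$ to zero.

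Next I would analyze the three constraints in the high-$P_r$ limit. The numerator $\det((1+\epsilon)\mathbf{I}+\mathbf{GQ}_s^\star\mathbf{G}^\dag)$ is uniformly bounded by a constant depending only on $P_s$ and $\mathbf{G}$, so after exponentiating, the first two constraints become, up to harmless constants, $\epsilon\gtrsim C_i/(1+P_r\|\mathbf{h}_i\|^2)$, which is met whenever $\epsilon$ decays strictly slower than $1/P_r$. The third constraint (the MAC bottleneck at the destination) reduces to $\epsilon^2\gtrsim C_3/\det(\mathbf{I}+\mathbf{HQ}_r\mathbf{H}^\dag)$, and it is the only place where the rank of $\mathbf{H}$ enters.

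The main obstacle, and the crucial difference from the AF argument, is the rank-deficient case $\operatorname{rank}(\mathbf{H})=1$. When $\mathbf{h}_1$ and $\mathbf{h}_2$ are parallel, $\det(\mathbf{H}^\dag\mathbf{H})=0$, and the matrix determinant lemma gives $\det(\mathbf{I}+\mathbf{HQ}_r\mathbf{H}^\dag)$ growing only linearly in $P_r$ rather than quadratically, so the MAC constraint only forces $\epsilon\gtrsim C/\sqrt{P_r}$ instead of the $C/P_r$ scaling available in the full-rank regime. The key observation is that even this weaker lower bound still permits $\epsilon\to 0$. I would therefore take a conservative schedule such as $\epsilon(P_r)=P_r^{-1/3}$, which dominates both $1/P_r$ and $1/\sqrt{P_r}$ asymptotically and hence satisfies all three constraints strictly for $P_r$ large enough, regardless of $\operatorname{rank}(\mathbf{H})$. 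Plugging this $\epsilon$ back into $R_{CF}$ and letting $P_r\to\infty$ then yields $R_{CF}\to\log\det(\mathbf{I}+\mathbf{GQ}_s^\star\mathbf{G}^\dag)$, which coincides with the broadcast cut-set upper bound and establishes the claim.
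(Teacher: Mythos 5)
Your proposal is correct and follows essentially the same route as the paper's proof: drive the Wyner--Ziv quantization levels $a,b$ to zero as the relay power grows, so that the three compression constraints become inactive and the CF objective converges to the broadcast cut-set bound $\max_{\mathbf{Q}_s}\log\det(\mathbf{I}+\mathbf{GQ}_s\mathbf{G}^\dag)$. Your write-up is in fact more careful than the paper's one-line argument, since you give an explicit schedule $\epsilon(P_r)=P_r^{-1/3}$ and verify that the destination-side constraint remains satisfiable even when $\mathbf{H}$ is rank-deficient, where $\det(\mathbf{I}+\mathbf{H}\mathbf{Q}_r\mathbf{H}^\dag)$ grows only linearly in $P_r$ and hence forces only $\epsilon\gtrsim P_r^{-1/2}$.
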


$\\ $
\begin{proof}
As $P_{r1}$ and $P_{r2}$ go to infinity, the optimization of
$R_{CF}$ becomes unconstrained. The objective is maximum when
$a=b=0$. Thus,
\begin{eqnarray}
\begin{gathered}
\mathop {\max }\limits_{\mathbf{Q}_s, \mathbf{A} } \log \det \left(
{\mathbf{I} + \mathbf{GQ}_s \mathbf{G}^{\dag } (\mathbf{I} +
\mathbf{A})^{-1}} \right) \to \mathop {\max }\limits_{\mathbf{Q}_s}
\log \det \left( {\mathbf{I} + \mathbf{GQ}_s \mathbf{G}^{\dag }}
\right).
\end{gathered}
\end{eqnarray}
\end{proof}

\section{Numerical Results}

In this section, we consider a few numerical examples to compare
achievable rates by different schemes and the upper bound derived
throughout the paper. Let us pick three symmetric matrices for
$\mathbf G$ (or $\mathbf H$):

\[
\left[ {\begin{array}{*{20}c}
   {1} & {0}  \\
   {0} & {1}  \\
\end{array} } \right],
\left[ {\begin{array}{*{20}c}
   {0.9285} & {0.3714}  \\
   {0.3714} & {0.9285}  \\
\end{array} } \right], \mbox{ and }
\left[ {\begin{array}{*{20}c}
   {0.7071} & {0.7071}  \\
   {0.7071} & {0.7071}  \\
\end{array} } \right].
\]
The angles between channel row (or column) vectors are $90^\circ$,
$46.3972^\circ$, and $0^\circ$, respectively. Fig.~\ref{f:sum-rate}
shows the results of six different combinations of the first and
second hop channel matrices where the achievable rates by DF, AF and
CF, and the upper bound are plotted.

As we have investigated in Section IV, below a certain level of the
relay-to-destination link SNR, DF achievable rate meets the upper
bound. The threshold point at which DF starts achieving the capacity
can be calculated from the DF optimality condition. In Fig.
\ref{f:sum-rate}. (a) and (b), we can see that when the first hop
channel vectors are orthogonal, DF always performs better than AF
and CF, and achieves the capacity in the high relay power regime. In
contrast, when the channel vectors are not orthogonal as in Fig.
\ref{f:sum-rate}. (c), (d), (e) and (f), DF achievable rates are
bounded away from the upper bound in the high relay power regime.

When the second hop channel matrix is full rank as in Fig.
\ref{f:sum-rate}. (a), (c) and (e), AF is shown to asymptotically
achieve the capacity in the high relay power limit. In Fig.
\ref{f:sum-rate}. (b) and (d), AF achievable rate stays away from
the capacity even in the high relay power limit since the second hop
channel is rank-deficient. In Fig. \ref{f:sum-rate}. (f), again, AF
becomes asymptotically optimal even though the second hop channel
vectors are not full rank. In the case, as the first hop channel is
already rank-deficient, there is no additional penalty by the
rank-deficient second hop. CF seems to be advantageous over AF in
the sense that it asymptotically achieve the capacity in the high
relay power limit regardless of the rank of the second hop channel.

\section{Conclusion}
Throughout the paper, we have shown how much rate is achievable by
DF, AF or CF, and when the achievable rates meet the upper bound.
The relative advantage of each scheme varies depending not only on
which hop is the bottleneck but also on the ranks of the first and
second hop channel matrices. The effect of the channel rank is newly
explained in our work.

For the DF relaying, we used a combination of a MISO broadcast
scheme and a SIMO multiple access scheme, with which a few
interesting characteristics of the SIMO MAC are investigated. It is
shown that DF achieves the capacity in the low relay power regime.

Earlier results for AF and CF were extended to explain our vector
Gaussian network and to compare their achievable rates to that of
DF. AF was shown to achieve close-to-capacity rate in the high relay
power regime when the second hop channel matrix is full rank while
CF similarly achieves the asymptotic capacity regardless of the
channel rank.

\appendices
\section{Proof of Lemma 1}
First, we shall find the optimal angle of $\rho$.
By differentiating
$\det \left( {\mathbf{I} + \mathbf{HQ}_r \mathbf{H}^{\dag } } \right)
= \det \left( {\mathbf{I} + \mathbf{Q}_r \mathbf{H}^{\dag }
\mathbf{H}} \right)$
with respect to $\theta$ and setting it to
zero, we get
%Maximizing $\log
%\det (\cdot)$ is equivalent to maximizing the following:
%\begin{eqnarray}
%\det \left( {\mathbf{I} + \mathbf{HQ}_r \mathbf{H}^{\dag } } \right)
%= \det \left( {\mathbf{I} + \mathbf{Q}_r \mathbf{H}^{\dag }
%\mathbf{H}} \right). \nonumber
%\end{eqnarray}
%By differentiating the right hand side with respect to $\theta$ and
%setting it equal to zero,
%\begin{eqnarray}
%\frac{\partial } {{\partial \theta }} \det \left( {\mathbf{I} +
%\mathbf{Q}_r \mathbf{H}^\mathbf{\dag } \mathbf{H}} \right) = 0
%\nonumber
%\end{eqnarray}
\begin{eqnarray}
e^{j\theta } \mathbf{h}_2 ^\mathbf{\dag } \mathbf{h}_1  = e^{ -
j\theta } \mathbf{h}_1 ^\mathbf{\dag } \mathbf{h}_2. \nonumber
\end{eqnarray}
Since the left hand side is the conjugate of the right hand side, they
both should be real, which means the optimal $\theta$ needs to satisfy
\begin{eqnarray}
\theta ^{opt} =  - \angle (\mathbf{h}_2 ^\mathbf{\dag }
\mathbf{h}_1)  = \angle (\mathbf{h}_1 ^\mathbf{\dag } \mathbf{h}_2).
\nonumber
\end{eqnarray}
Using this optial angle, we can solve the following convex optimization problem
to find the optimal $|\rho|$:
\begin{eqnarray}
\begin{gathered}
\mathop {\max }\limits_{\left| \rho  \right|} \det (\mathbf{I} \mathbf{ + HQ}_r \mathbf{H}^\mathbf{\dag } ) \hfill \\
  \mbox{ subject to } 0 \leq \left| \rho  \right| \leq 1 \nonumber
\end{gathered}
\end{eqnarray}
where its Lagrangian function is given by
\begin{eqnarray}
L(\left| \rho  \right|,\lambda ) = -\det (\mathbf{I} \mathbf{ +
HQ}_r \mathbf{H}^{\dag } ) + \lambda _1 (\left| \rho \right| - 1) +
\lambda _2 ( - \left| \rho  \right|). \hfill \nonumber
\end{eqnarray}
The Karush-Kuhn-Tucker (KKT) condition is given by
\begin{eqnarray}
\nabla L(\left| \rho \right|,\lambda ) = 2\left| \rho \right|\sqrt
{P_{r1} P_{r2} } \det (\mathbf{H}^{\dag } \mathbf{H}) - 2
|{\mathbf{h}_1^{\dag } \mathbf{h}_2 } | + \lambda _1 - \lambda _2  =
0. \nonumber
\end{eqnarray}
Solving this for $|\rho|$ gives
\begin{eqnarray}
\left| \rho  \right| = \frac{{| {\mathbf{h}_1^\mathbf{\dag }
\mathbf{h}_2 } | + (\lambda _2  - \lambda _1 )/2}} {{\sqrt {P_{r1}
P_{r2} } \det (\mathbf{H}^{\dag } \mathbf{H})}}. \nonumber
\end{eqnarray}
From complementary slackness, it must be satisfied that $\lambda _1
(|\rho|-1)=0$ and $\lambda _2 |\rho|=0$. Thus, if $0 < \left|
\rho \right| < 1$, then the optimal solution would be $\lambda _1 =
0$, $\lambda _2 = 0$, and
\begin{eqnarray}
\left| \rho  \right| = \frac{{| {\mathbf{h}_1^\mathbf{\dag }
\mathbf{h}_2} |}} {{\sqrt {P_{r1} P_{r2} } \det (\mathbf{H}^{\dag }
\mathbf{H})}}. \nonumber
\end{eqnarray}
Likewise, we also have the following two sets of solutions,
\begin{eqnarray}
  \left| \rho  \right| = 0,\ \lambda _1  =  0,\ \lambda _2  =  2| {\mathbf{h}_1^{\dag } \mathbf{h}_2 } |
  \nonumber
\end{eqnarray}
\begin{eqnarray}
  \left| \rho  \right| = 1,\ \lambda _1  =  2\sqrt {P_{r1} P_{r2} } \det (\mathbf{H}^{\dag } \mathbf{H})-2| {\mathbf{h}_1^{\dag } \mathbf{h}_2} |,\ \lambda _2  =
  0.
  \nonumber
\end{eqnarray}
In other words, the function $\log \det (\mathbf{I} +\mathbf{HQ}_r
\mathbf{H}^{\dag } )$ is a quadratic and concave function of
$|\rho|$ with its maximum at $|\rho|' = \frac{{|{\mathbf{h}_1^{\dag
} \mathbf{h}_2}|}} {{\sqrt {P_{r1} P_{r2} } \det (\mathbf{H}^{\dag }
\mathbf{H})}} \geq 0$ without constraints. If $|\rho|' \leq 1$, the
constraint is inactive so $|\rho|'$ maximizes the objective
function. If $|\rho|'> 1$, it violates the constraint, and the
objective function has its maximum at the boundary of the feasible
set $|\rho|=1$.

%\section*{Acknowledgment}
%The authors would like to thank...

\newpage

\begin{figure}[t!]
  \begin{center}
  \leavevmode \epsfxsize=0.8\textwidth
  \epsffile{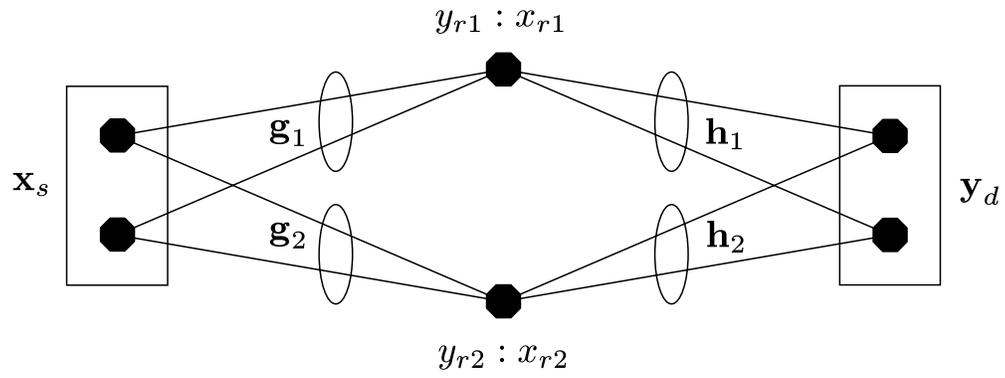}
  \caption{System model}
  \label{f:systemmodel}
  \end{center}
\end{figure}

\begin{figure}[t!]
  \begin{center}
  \leavevmode \epsfxsize=0.8\textwidth
  \epsffile{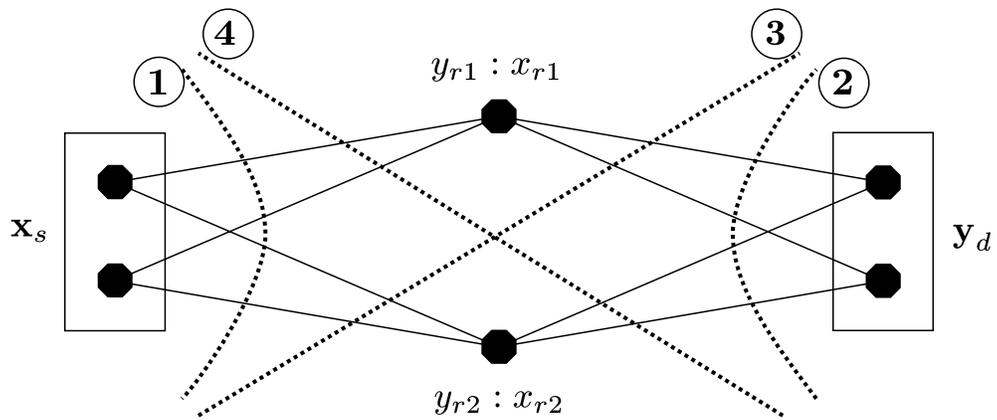}
  \caption{Cut-set upper bounds}
  \label{f:cut-set}
  \end{center}
\end{figure}

\newpage

\begin{figure}[tp]
  \begin{center}
    \mbox{
      \subfigure[Optimal correlation versus $\varphi_h$]{\includegraphics[width=0.5\textwidth]{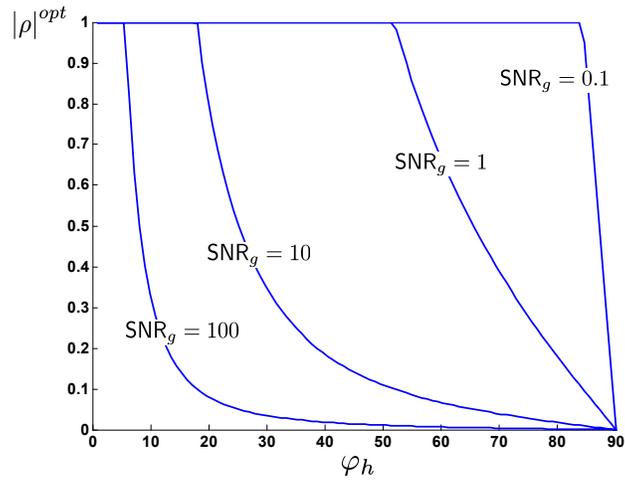}}
      }
    \mbox{
      \subfigure[Optimal correlation versus $\textsf{SNR}_g$]{\includegraphics[width=0.5\textwidth]{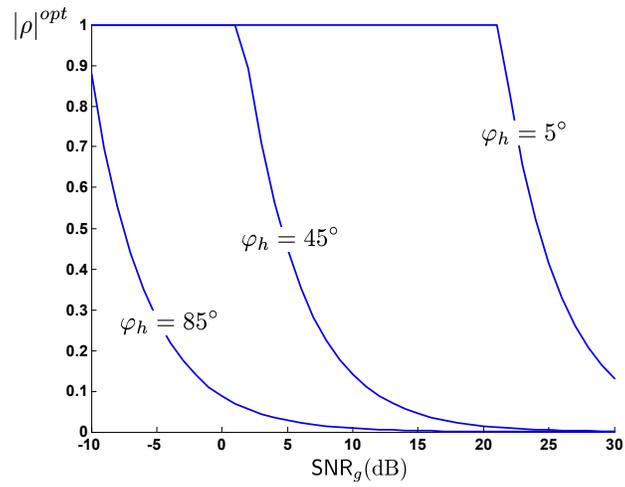}}
      }
    \caption{Optimal correlation as a function of $\varphi_h$ and $\textsf{SNR}_g$}
    \label{fig:rho}
  \end{center}
\end{figure}

\newpage

\begin{figure}[tp]
  \begin{center}
    \mbox{
      \subfigure[$\varphi_h=0^\circ$, $|\rho|^{opt}=1$.]{\includegraphics[width=0.5\textwidth]{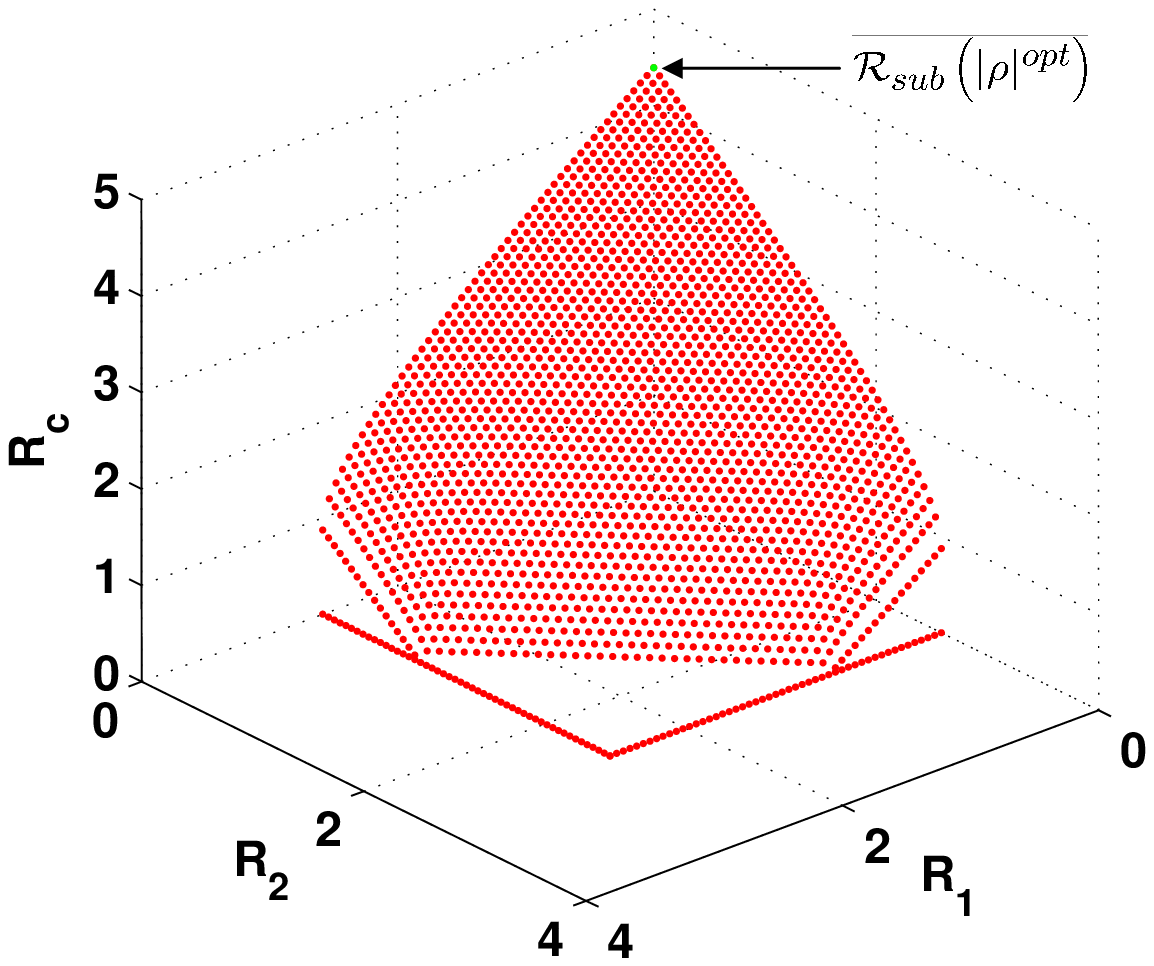}}
      \subfigure[$\varphi_h=25^\circ$, $|\rho|^{opt}=1$.]{\includegraphics[width=0.5\textwidth]{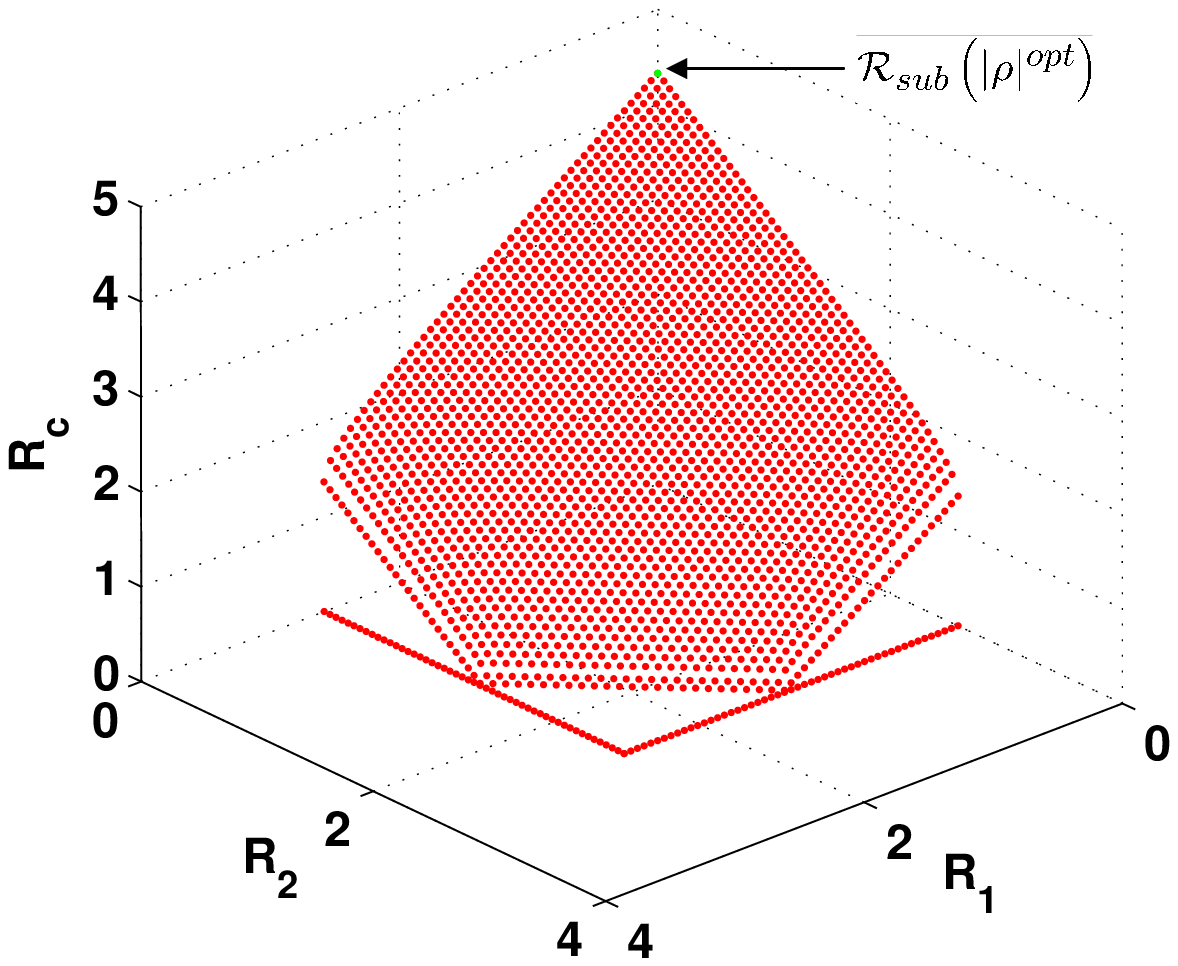}}
      }
    \mbox{
      \subfigure[$\varphi_h=35^\circ$, $|\rho|^{opt}=0.5$.]{\includegraphics[width=0.5\textwidth]{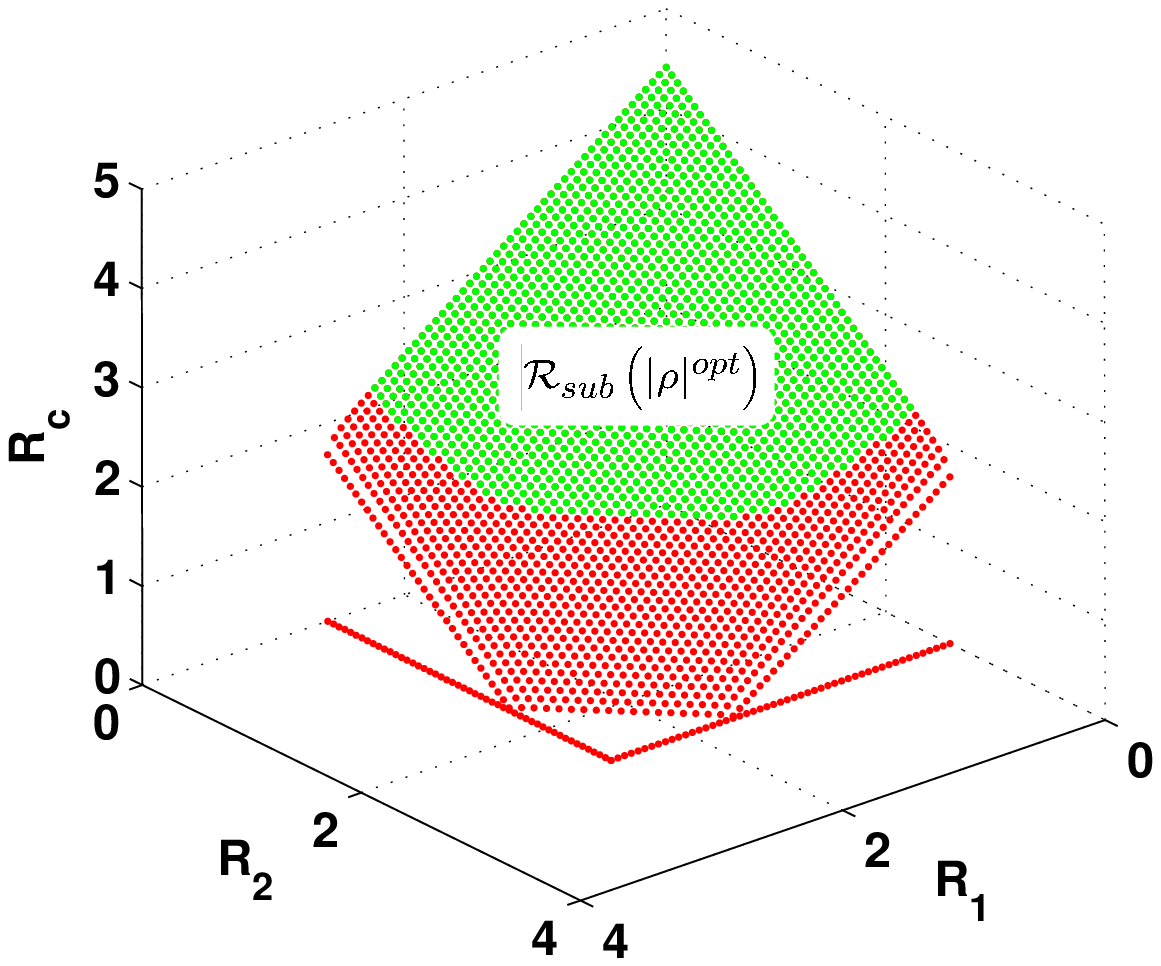}}
      \subfigure[$\varphi_h=90^\circ$, $|\rho|^{opt}=0$.]{\includegraphics[width=0.5\textwidth]{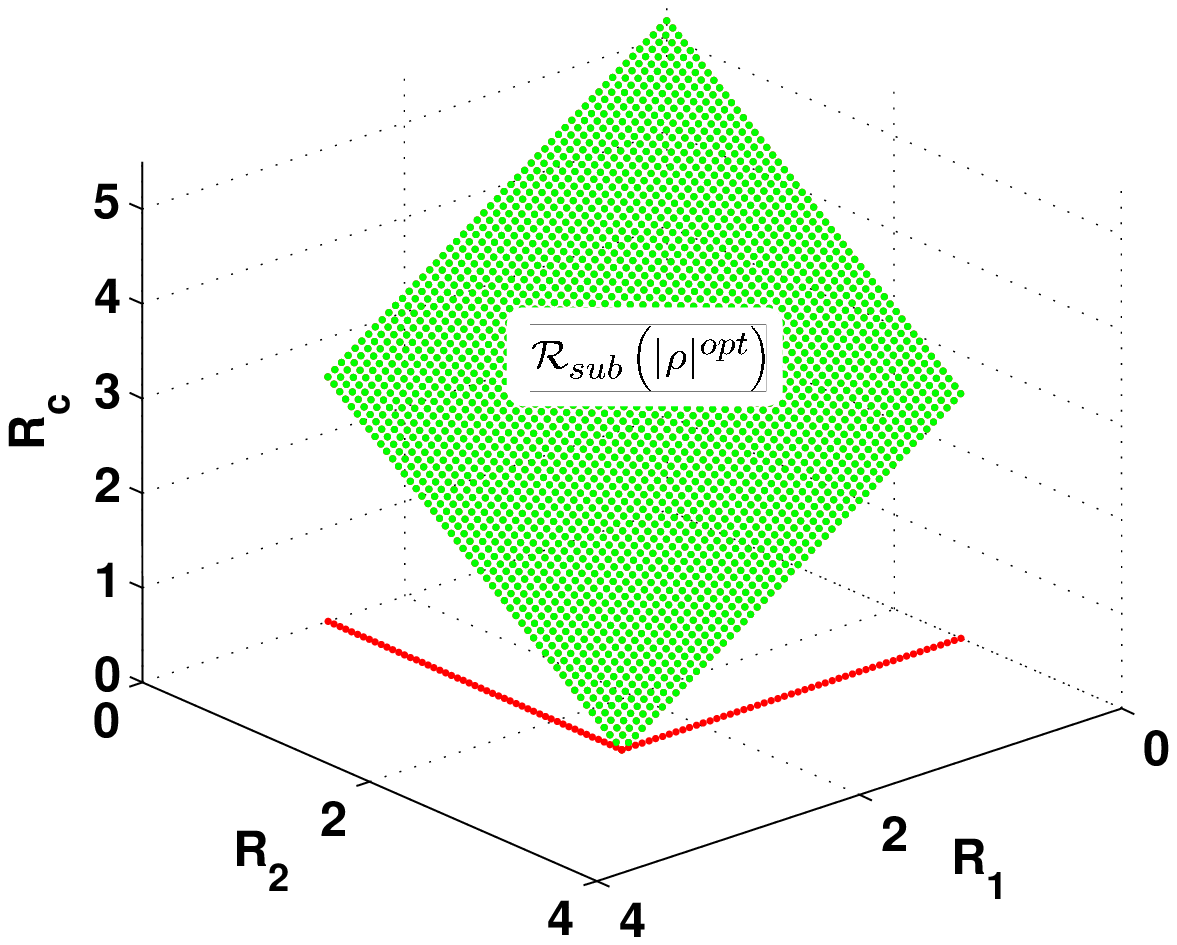}}
      }
    \caption{SIMO MAC achievable regions with $P_{r1}=P_{r2}=5$ and $||\mathbf{h_1}||=||\mathbf{h_2}||=1$}
    \label{fig:MAC_region}
  \end{center}
\end{figure}

\newpage

\begin{figure}[tp]
  \begin{flushleft}
    \mbox{
      \subfigure[$\varphi_h=0^\circ$, $|\rho|^{opt}=1$.]{\includegraphics[width=0.5\textwidth]{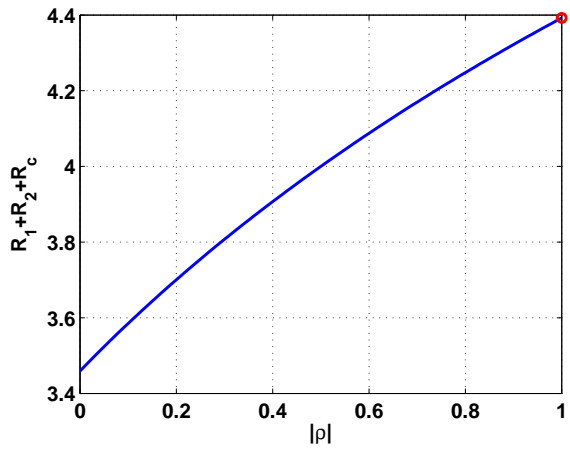}}
      \subfigure[$\varphi_h=25^\circ$, $|\rho|^{opt}=1$.]{\includegraphics[width=0.5\textwidth]{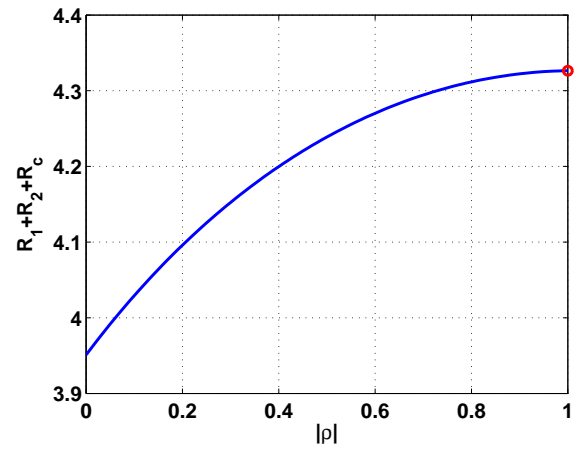}}
      }
    \mbox{
      \subfigure[$\varphi_h=35^\circ$, $|\rho|^{opt}=0.5$.]{\includegraphics[width=0.5\textwidth]{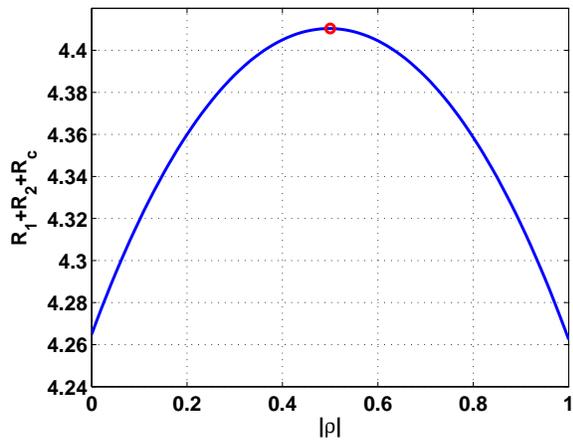}}
      \subfigure[$\varphi_h=90^\circ$, $|\rho|^{opt}=0$.]{\includegraphics[width=0.5\textwidth]{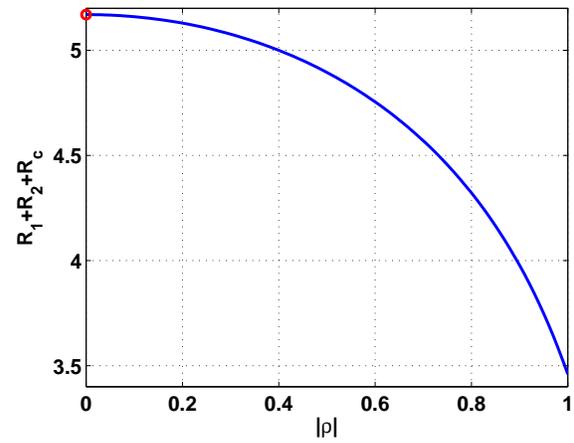}}
      }
    \caption{Sum-rate versus $|\rho|$ with $P_{r1}=P_{r2}=5$ and $||\mathbf{h_1}||=||\mathbf{h_2}||=1$}
    \label{fig:Rsum}
  \end{flushleft}
\end{figure}

\newpage

\begin{figure}[tp]
  \begin{center}
    \mbox{
      \subfigure[Parallel channel vectors: $\varphi_g = 0^\circ$]{\includegraphics[width=0.5\textwidth]{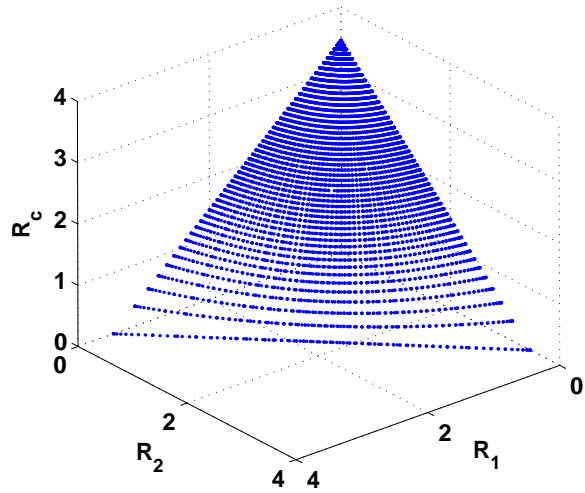}}
      }
    \mbox{
      \subfigure[Orthogonal channel vectors: $\varphi_g = 90^\circ$]{\includegraphics[width=0.5\textwidth]{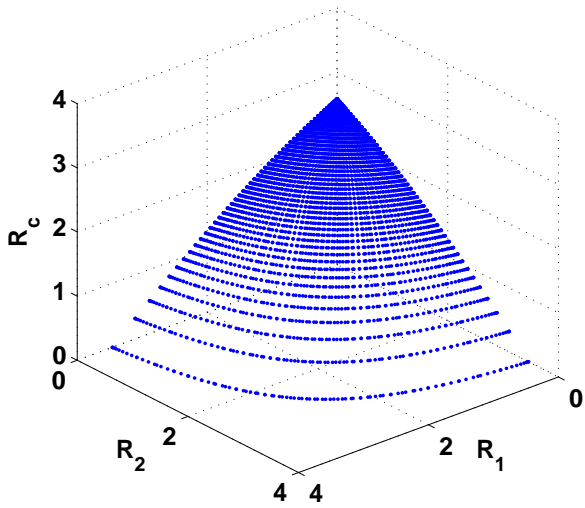}}
      }
    \caption{MISO BC achievable regions with $P_s=10$ and $||\mathbf {g}_1||=||\mathbf {g}_2||=1$}
    \label{fig:BC_region}
  \end{center}
\end{figure}

\newpage

\begin{figure}[tp]
  \begin{center}
    \mbox{
      \subfigure[Three-dimensional region]{\includegraphics[width=0.5\textwidth]{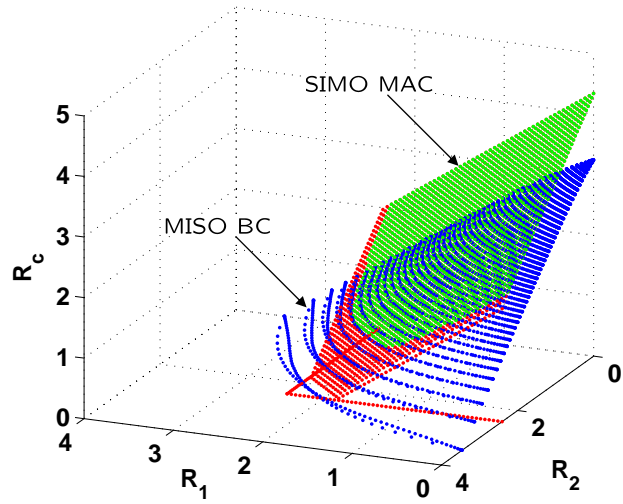}}
      }
    \mbox{
      \subfigure[Cross section along the surface $R_1=R_2$]{\includegraphics[width=0.5\textwidth]{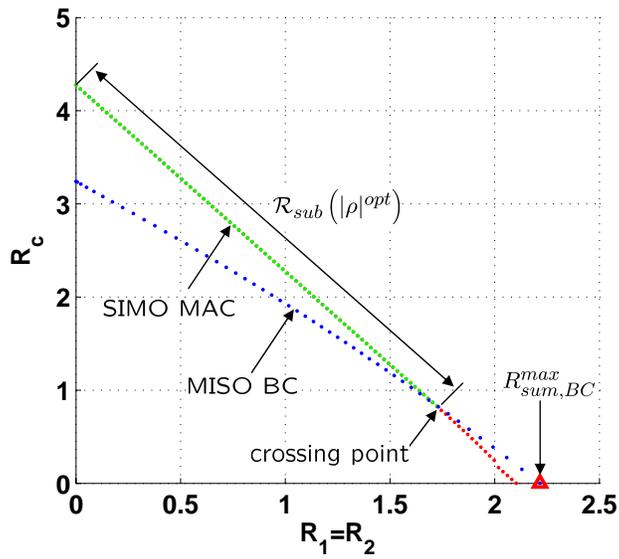}}
      }
    \caption{DF achievable region: $P_{\mathbf s}=10$, $P_{r1} = P_{r2} = 4.17$, $\varphi_g =\varphi_h = 46.3942^ \circ$, $\left| \rho  \right|^{opt}  = 0.31$ and $R_c^{th}=0.79$.}
    \label{fig:BCMAC}
  \end{center}
\end{figure}

\newpage

\begin{figure}[tp]
  \begin{flushleft}
    \mbox{
      \subfigure[Multiple DF optimal points.]{\includegraphics[width=0.5\textwidth]{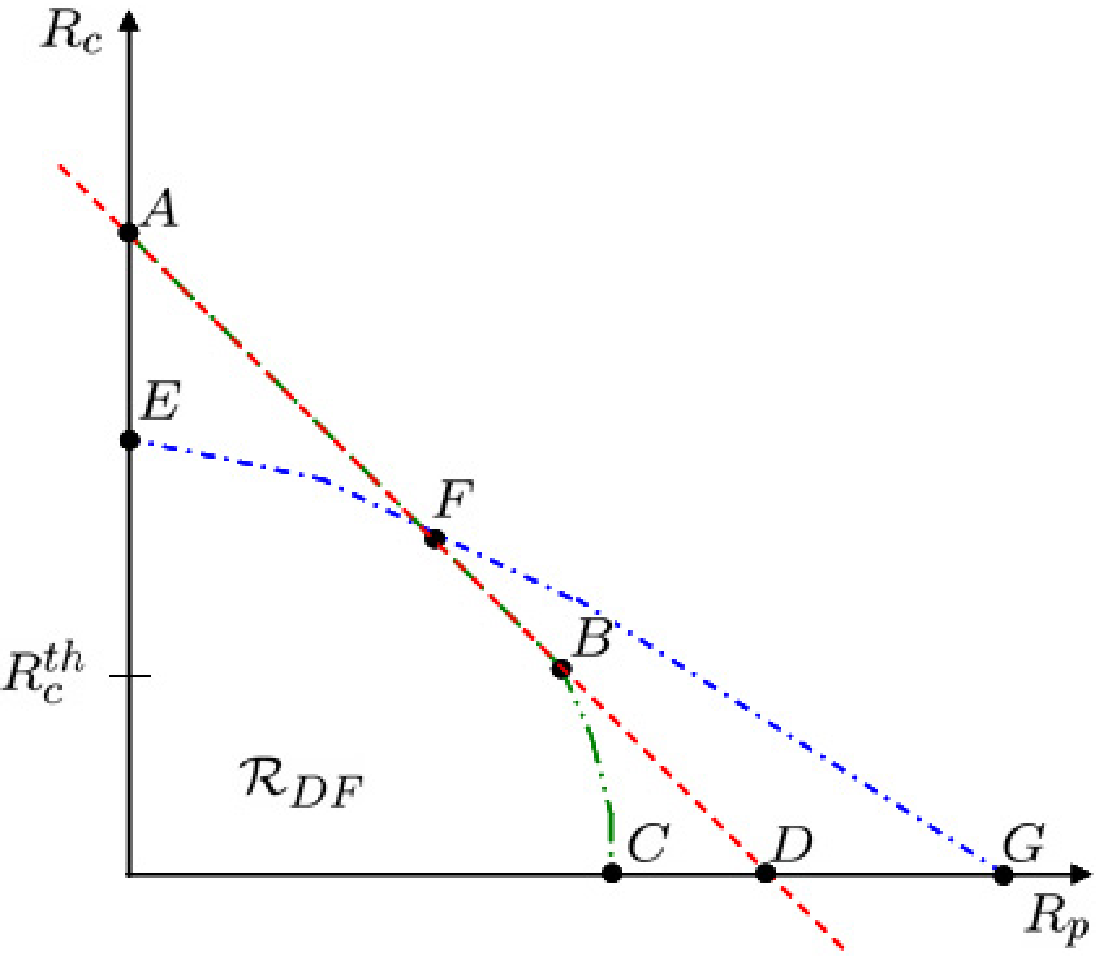}}
      \subfigure[Single DF optimal point.]{\includegraphics[width=0.5\textwidth]{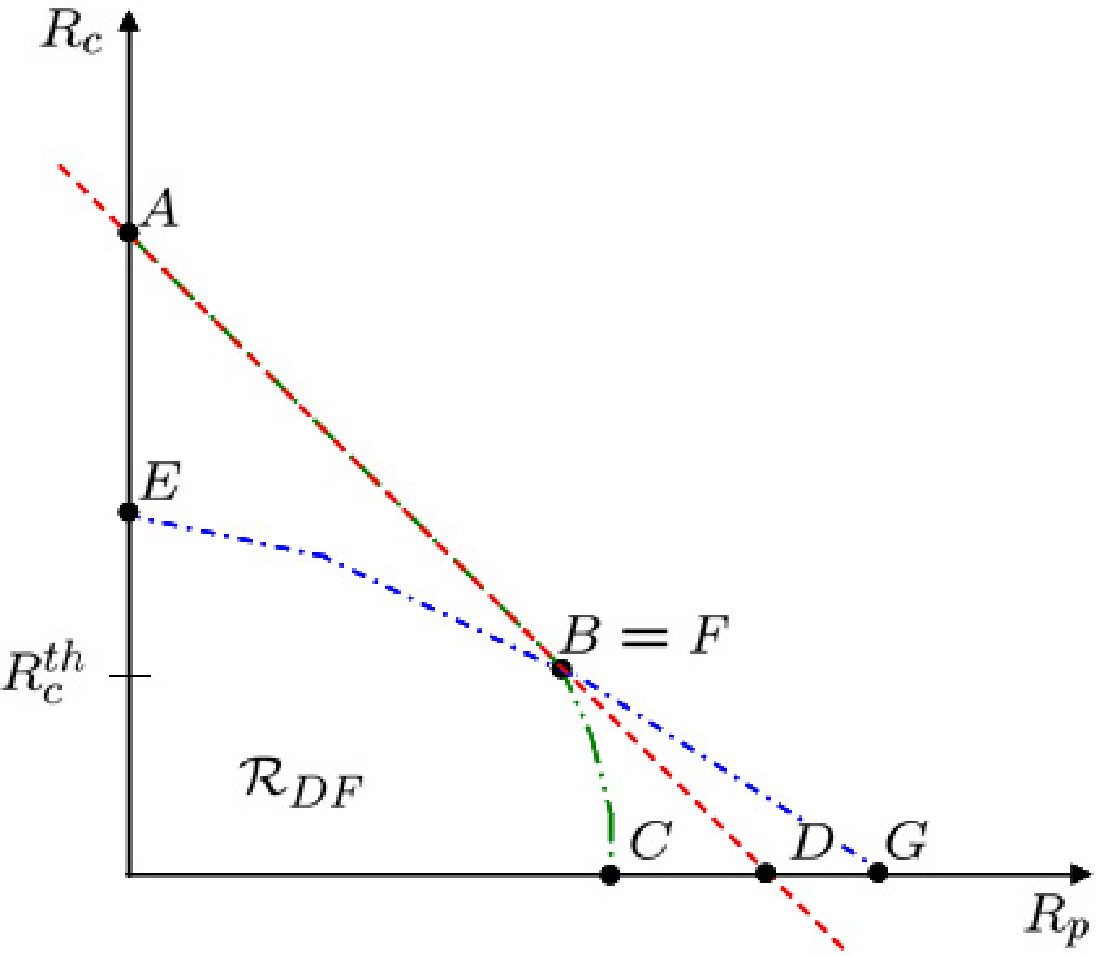}}
      }
    \mbox{
      \subfigure[No DF optimal point.]{\includegraphics[width=0.5\textwidth]{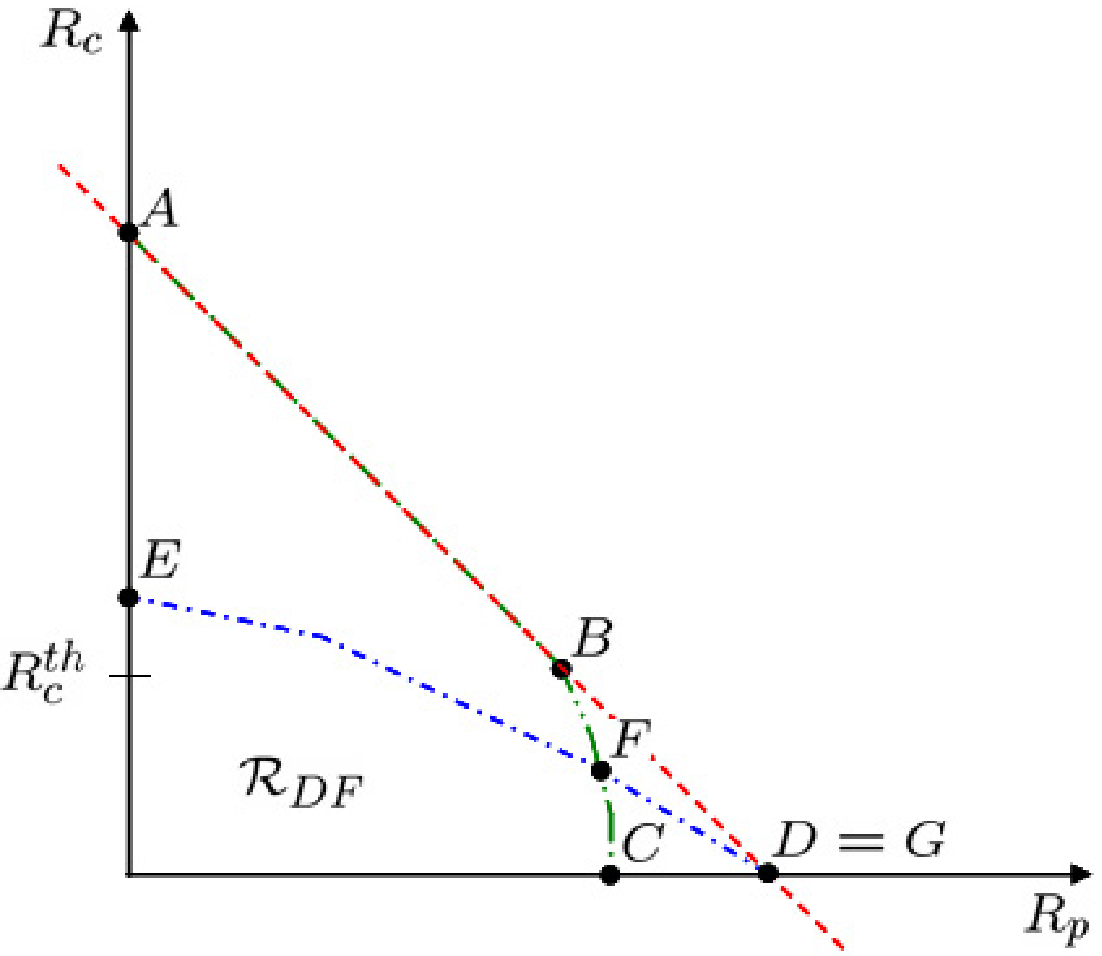}}
      \subfigure[No DF optimal point.]{\includegraphics[width=0.5\textwidth]{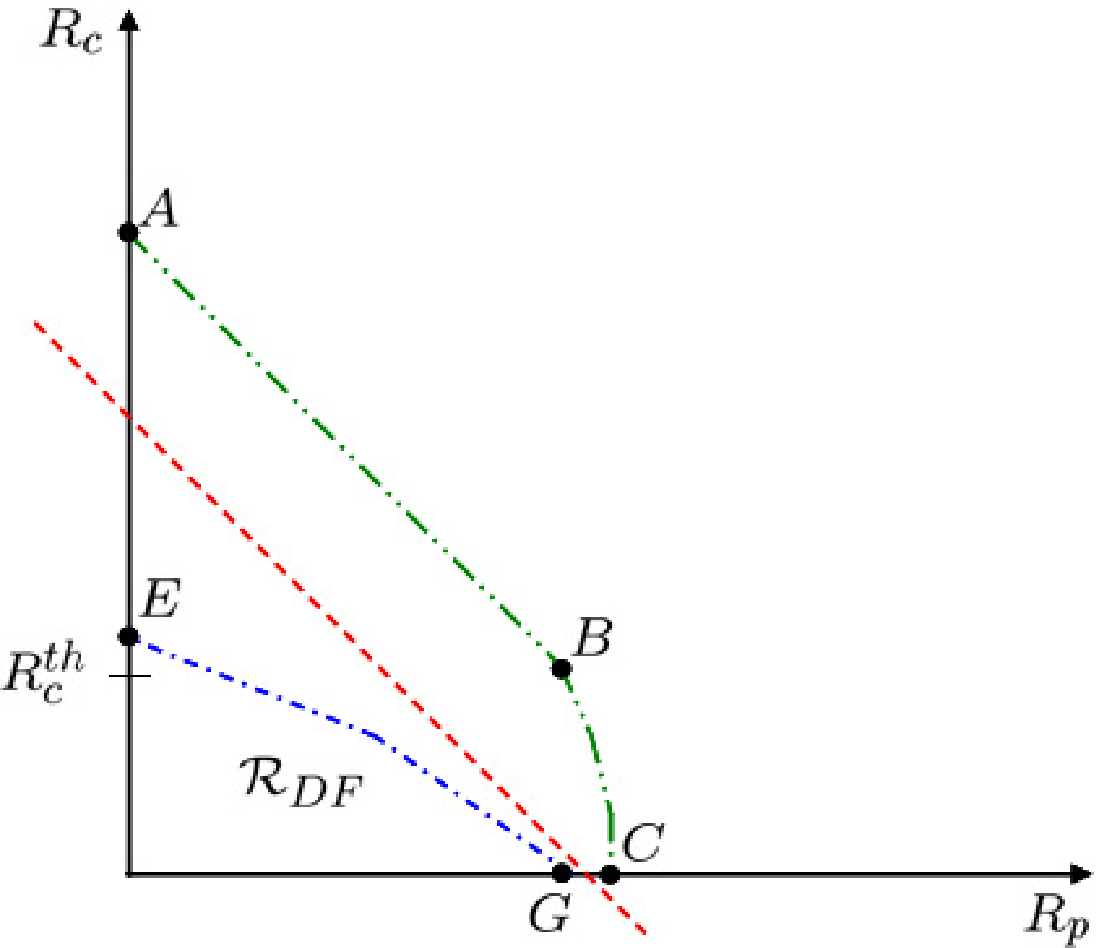}}
      }
    \caption{Symmetric channels.}
    \label{fig:Symmetric}
  \end{flushleft}
\end{figure}

\newpage

\begin{figure}[tp]
  \begin{center}
    \mbox{
      \subfigure[$\left\|{\bf g}_1\right\|=1$, $\left\|{\bf g}_2\right\|=0.16$ and $b^{peak}=1.67$]{\includegraphics[width=0.5\textwidth]{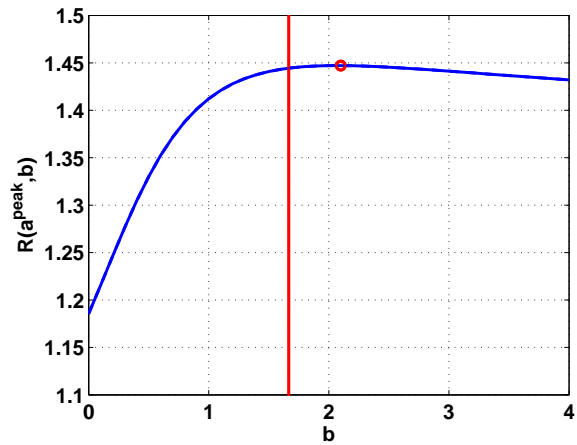}}
      }
    \mbox{
      \subfigure[$\left\|{\bf g}_1\right\|=1$, $\left\|{\bf g}_2\right\|=0.09$ and $b^{peak}=1.85$]{\includegraphics[width=0.5\textwidth]{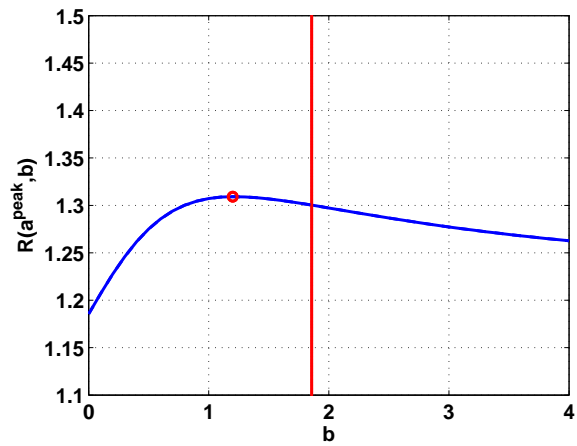}}
      }
    \caption{$R(a^{peak},b)$ versus $b$ with $P_s = 10$, $P_{r1}  = P_{r2}  = 5$, $\left\|{\bf h}_1\right\|=\left\|{\bf h}_2\right\|=1$ and $\varphi_g=\varphi_h=46.40^\circ$}
    \label{fig:rate_b}
  \end{center}
\end{figure}

\newpage

\begin{figure}[tp]
  \begin{flushleft}
    \mbox{
      \subfigure[$\varphi_g=90^\circ$ and $\varphi_h=90^\circ$.]{\includegraphics[width=0.45\textwidth]{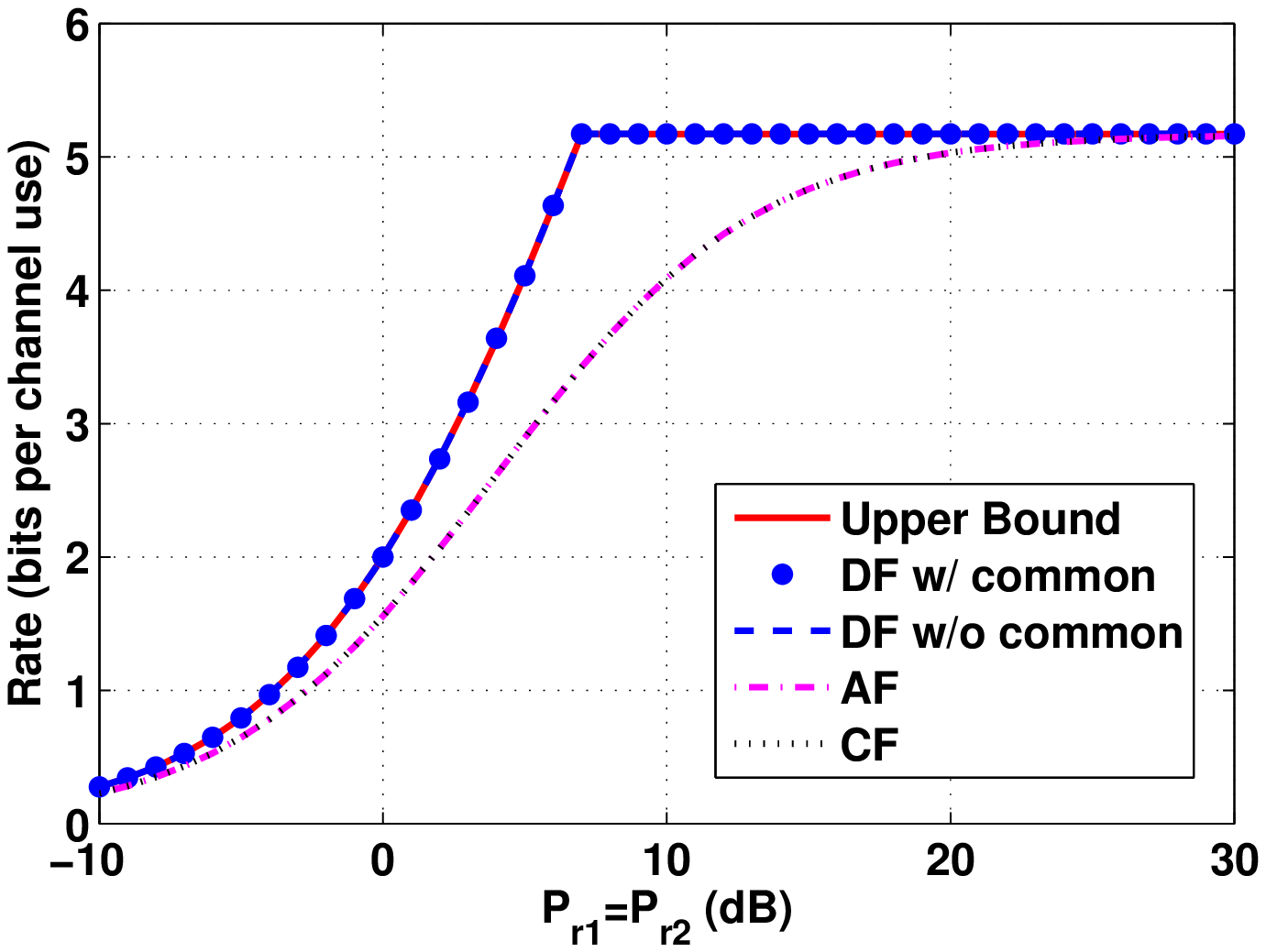}}
      \subfigure[$\varphi_g=90^\circ$ and $\varphi_h=0^\circ$.]{\includegraphics[width=0.45\textwidth]{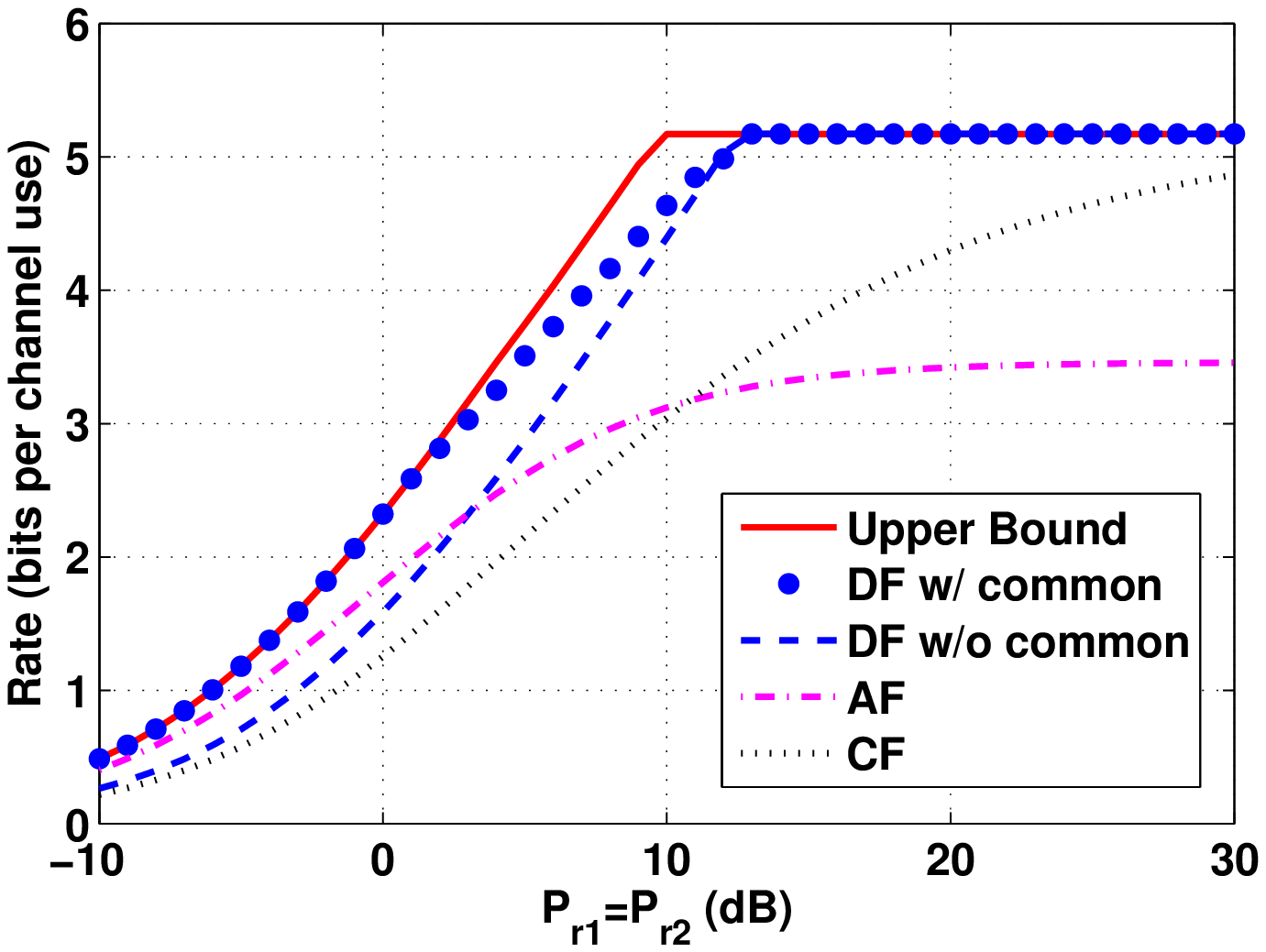}}
      }
    \mbox{
      \subfigure[$\varphi_g=46.3972^\circ$ and $\varphi_h=90^\circ$.]{\includegraphics[width=0.45\textwidth]{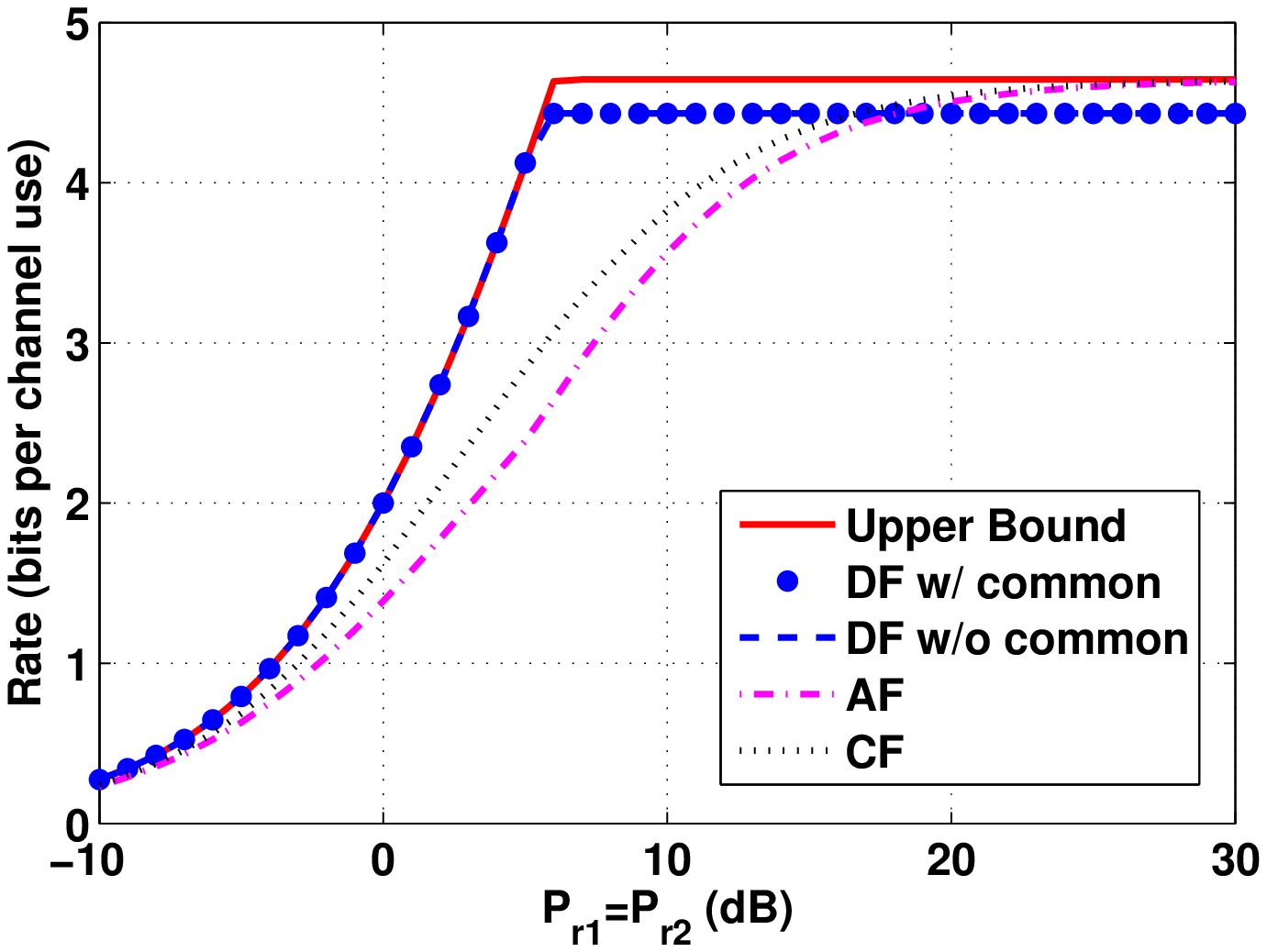}}
      \subfigure[$\varphi_g=46.3972^\circ$ and $\varphi_h=0^\circ$.]{\includegraphics[width=0.45\textwidth]{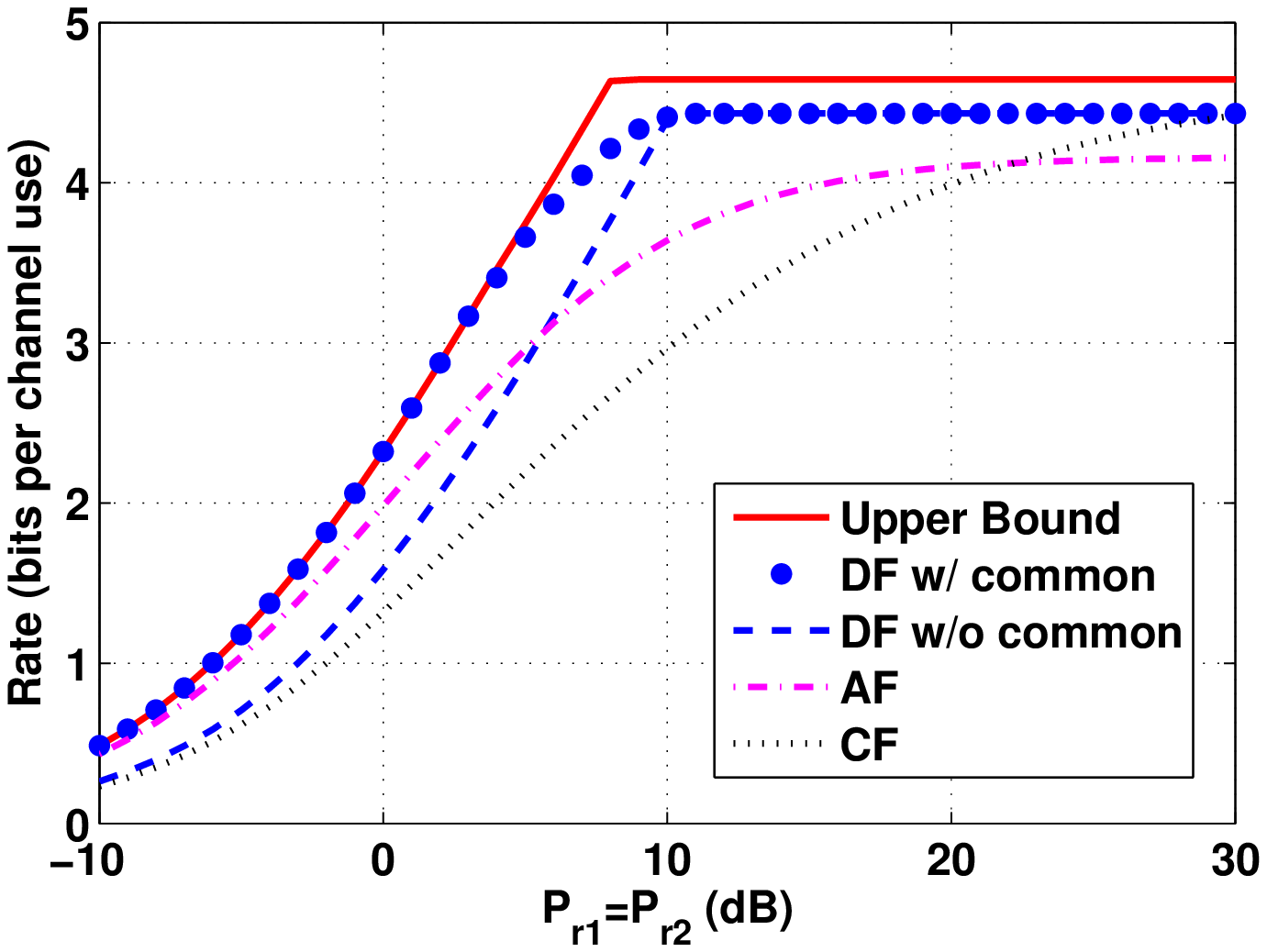}}
      }
    \mbox{
      \subfigure[$\varphi_g=0^\circ$ and $\varphi_h=90^\circ$.]{\includegraphics[width=0.45\textwidth]{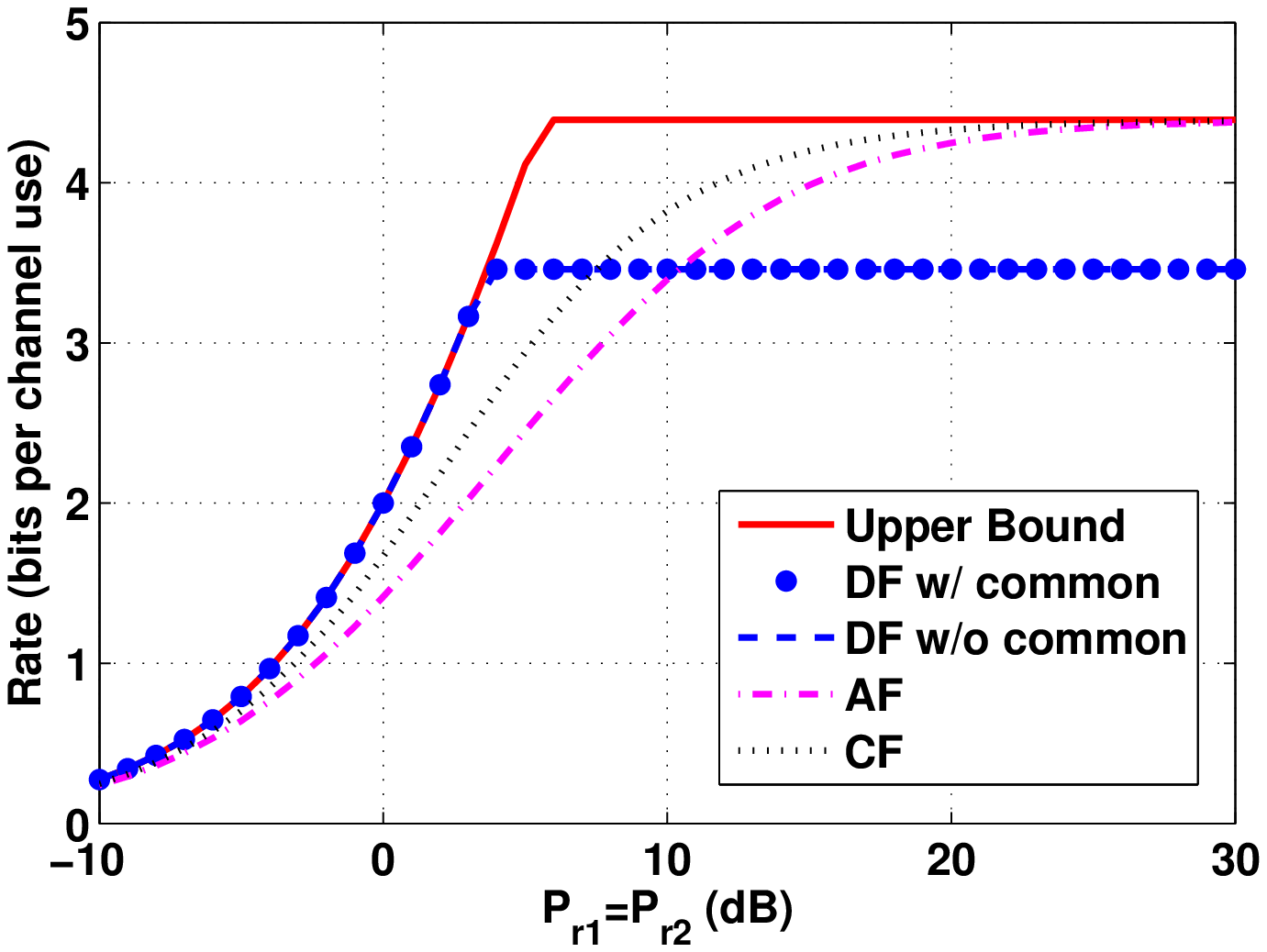}}
      \subfigure[$\varphi_g=0^\circ$ and $\varphi_h=0^\circ$.]{\includegraphics[width=0.45\textwidth]{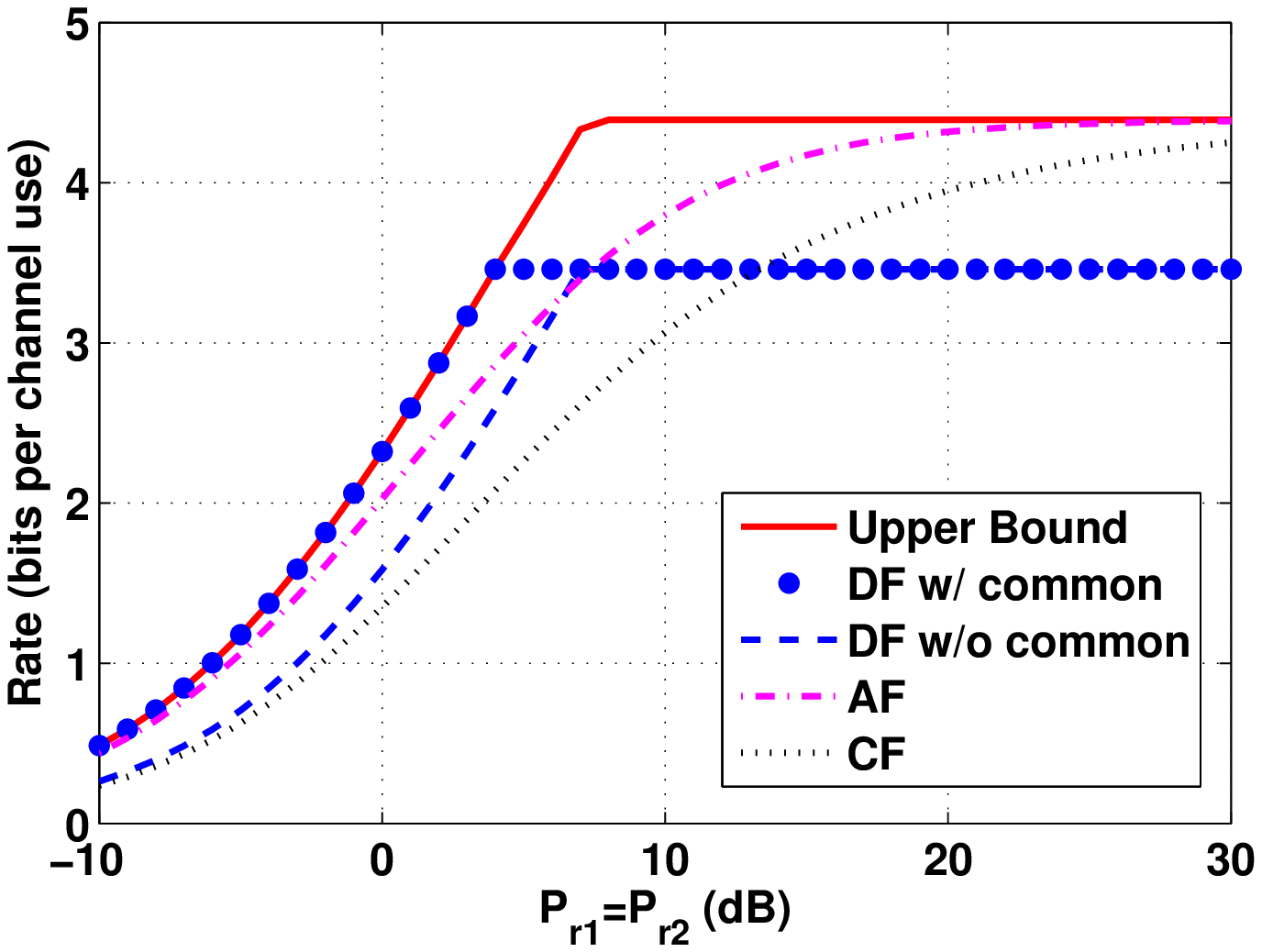}}
      }
    \caption{Upper bound and achievable rates versus relay power. Note that $P_s =10$ (dB) and $\left\|{\bf g}_1\right\|=\left\|{\bf g}_2\right\|=\left\|{\bf h}_1\right\|=\left\|{\bf h}_2\right\|=1$.}
    \label{f:sum-rate}
  \end{flushleft}
\end{figure}


\begin{thebibliography}{1}

\bibitem{ScheinThesis}
B.~E.~Schein, \emph{Distributed coordination in network information
theory,} Ph.D. dissertation Cambridge, MA: MIT, Oct. 2001.

\bibitem{XueSandhu}
F.~Xue and S.~Sandhu, ``Cooperation in a half-duplex Gaussian
diamond relay channel,'' \emph{IEEE Trans. Inf. Theory,} vol. 53,
no. 10, pp. 3806-3814, Oct. 2007.

\bibitem{CoverElGamal:79}
T.~M.~Cover and A.~A.~El~Gamal, ``Capacity theorems for the relay
channel,'' \emph{IEEE Trans. Inf. Theory,} vol. IT-25, no. 5, pp.
572-584, Sep. 1979.

\bibitem{KramerGastparGupta:2005}
G.~Kramer, M.~Gastpar, and P.~Gupta, ``Cooperative stragegies and
capacity theorems for relay networks,'' \emph{IEEE Trans. Inf.
Theory,} vol. 51, no. 9, pp. 3037-3063, Sep. 2005.

\bibitem{LanemanTseWornell:2004}
J.~N.~Laneman, D.~N.~C.~Tse, and G.~W.~Wornell, ``Cooperative
diversity in wireless networks: Efficient protocols and outage
behavior,'' \emph{IEEE Trans. Inf. Theory,} vol. 50, no. 12, pp.
3062-3080, Dec. 2004.

\bibitem{WangZhangHostMadsen:2005}
B.~Wang, J.~Zhang, and A.~H{\o}st-Madsen, ``On the capacity of MIMO
relay channels,'' \emph{IEEE Trans. Inf. Theory,} vol. 51, no. 1,
pp. 29-43, Jan. 2005.

\bibitem{BolcskeiNabarOymanPaulraj:2006}
H.~B\"{o}lcskei, R.~U.~Nabar, \"{O}.~Oyman, and A.~J.~Paulraj,
``Capacity scaling laws in MIMO relay networks,'' \emph{IEEE Trans.
Wireless Commun.,} vol. 5, no. 6, pp. 1433-1444, Jun. 2006.

\bibitem{LiangVeeravalli}
Y.~Liang and V.~V.~Veeravalli, ``Cooperative relay broadcast
channels,'' \emph{IEEE Trans. Inf. Theory,} vol. 53, no. 3, pp.
900-928, Mar. 2007.

\bibitem{WittnebenRankov}
A.~Wittneben and B.~Rankow, ``Impact of cooperative relays on the
capacity of rank-deficient MIMO channels,'' in \emph{Proc. 12th IST
Summit on Mob. and Wirel. Comm.,} Aveiro, Portugal, June 2003, pp.
421-425.

\bibitem{JindalGoldsmith:ISIT04}
N.~Jindal and A.~Goldsmith, ``Optimal power allocation for parallel
broadcast channels with independent and common information,'' in
\emph{Proc. IEEE Int. Symp. Information Theory,} Chicago, IL, June
2004, p. 215.

\bibitem{WeingartenSteinbergShamai:ISIT06}
H.~Weingarten, Y.~Steinberg, and S.~Shamai, ``On the capacity region
of the multi-antenna broadcast channel with common messages,'' in
\emph{Proc. IEEE Int. Symp. Information Theory,} Seattle, WA, July
2006, p. 2195-2199.

\bibitem{WajcerShamaiWiesel:ITW06}
D.~Wajcer, S.~Shamai, and A.~Wiesel, ``On superposition coding and
beamforming for the multi-antenna Gaussian broadcast channels,''
\emph{Information Theory and Applications Workshop,} UCSD, 2006.

\bibitem{costa:83}
M.~H.~M.~Costa, ``Writing on dirty paper,'' \emph{IEEE Trans. Inf.
Theory,} vol. IT-29, no. 3, pp. 439-441, May 1983.

\bibitem{SlepianWolf}
D.~Slepian and J.~K.~Wolf, ``A coding theorem for multiple access
channels with correlated sources,'' Bell Syst. Tech. J., vol. 52,
pp. 1037-1076, Sept. 1973.

\bibitem{WynerZiv}
A.~D.~Wyner and J.~Ziv, ``The rate-distortion function for source
coding with side information at the receiver,'' \emph{IEEE Trans.
Inf. Theory,} vol. IT-22, no. 1, pp. 1-11, Jan. 1976.

\bibitem{Willems:82}
F.~M.~J.~Willems, \emph{Informationtheoretical results for the
discrete memoryless multiple access channel,} Doctor in de
Wetenschappen Proefschrift dissertation, Katholieke Universiteit
Leuven, Leuven, Belgium, Oct. 1982.

\bibitem{PrelovvanderMeulen:ISIT91}
V.~V.~Prelov and E.~C.~van~der~Meulen, ``Asymptotic expansion for
the capacity region of the multiple-access channel with common
information and almost Gaussian noise,'' in \emph{Proc. IEEE Int.
Symp. Information Theory,} Budapest, Hungary, June 1991, p. 300.

\bibitem{LiuUlukus:JCOM06}
N.~Liu and S.~Ulukus, ``Capacity region and optimum power control
strategies for fading Gaussian multiple access channels with common
data,'' \emph{IEEE Trans. Commun.,} vol. 54, no. 10, pp. 1815-1826,
Oct. 2006.

\bibitem{Telatar:1999}
\.{I}.~E.~Telatar, ``Capacity of multi-antenna Gaussian channels,''
\emph{Europ. Trans. Telecommun.,} vol. 10, pp. 585-596, Nov.-Dec.
1999.

\bibitem{TseViswanath:Text}
D.~N.~C.~Tse and P.~Viswanath, \emph{Fundamentals of wireless
communication,} Cambridge University Press, May 2005.

\bibitem{CoverThomas}
T.~Cover and J.~Thomas, \emph{Elements of information theory,} New
York: Wiley, 1991.

\bibitem{AvestimehrDiggaviTse}
A.~S.~Avestimehr, S.~N.~Diggavi and D.~N.~C.~Tse, ``Approximate
capacity of Gaussian relay networks,'' in \emph{Proc. IEEE Int.
Symp. Information Theory,} Toronto, Canada, July 2008, p. 474-478.

\bibitem{FanWangPoorThompson}
Y.~R.~Fan, C.~Wang, H.~V.~Poor and J.~S.~Thompson, ``Cooperative
multiplexing: toward higher spectral efficiency in multiple-antenna
relay networks,'' \emph{IEEE Trans. Inf. Theory,} vol. 55, no. 9,
pp. 3909-3926, Sep. 2009.

\bibitem{RezaeiGharanKhandani}
S.~S.~C.~Rezaei, S.~O.~Gharan and A.~K.~Khandani, ``A new achievable
rate for the Gaussian parallel relay channel,'' in \emph{Proc. IEEE
Int. Symp. Information Theory,} Seoul, Korea, June-July 2009, p.
194-198.

\end{thebibliography}
\end{document}